\DeclareMathOperator*{\argmin}{arg\,min}
\newcommand{\girth}{\mathrm{girth}}%
\newcommand{\eqdf }{\triangleq}%
\newcommand{\wone }{\lVert w \rVert_1}%
\newcommand{\xone }{\lVert x \rVert_1}%
\newcommand{\R }{\mathds{R}}
\newcommand{\E }{\mathds{E}}
\newcommand{\N }{\mathds{N}}
\newcommand{\calN }{\mathcal{N}} 
\newcommand{\calC }{\mathcal{C}} 
\newcommand{\calT }{\mathcal{T}} 
\newcommand{\calV }{\mathcal{V}} 
\newcommand{\calJ }{\mathcal{J}} 
\newcommand{\calW }{\mathcal{W}}
\newcommand{\calB }{\mathcal{B}}
\newcommand{\calP }{\mathcal{P}}
\newcommand{\dL }{d_\mathrm{L}}
\newcommand{\dR }{d_\mathrm{R}}
\renewcommand{\Pr }{\mathrm{Pr}}
\newcommand{\PPrefix }{\mathrm{Prefix}^+}
\newcommand{\aug }{\mathrm{aug}}
\newcommand{\config }{\mathrm{vconfig}}
\newcommand{\sub }{\mathrm{sub}}
\newcommand{\LP}{\textsc{lp}}
\newcommand{\ML}{\textsc{ml}}
\newcommand{\NWMS}{\textsc{nwms}}
\newcommand{\certNWMS}{\textsc{ML-certified-nwms}}
\newcommand{\VERIFY}{\textsc{verify-lo}}
\newcommand{\DEC}{\textsc{dec}}
\newcommand{\calCj }{\mathcal{C}^j}
\newcommand{\calCbar }{\overline{\mathcal{C}}}
\newcommand{\calCbarJ}{\overline{\mathcal{C}}^{\mathcal{J}}}
\newcommand{\calCJ }{\mathcal{C}^\mathcal{J}}
\newcommand{\pnw}{projected normalized weighted}
\newcommand{\PNW}{PNW}
\newcommand{\ignore}[1]{}
\newenvironment{remark}[0]{\noindent \textbf{   Remark:}~}{\endproof}
\newtheorem{definition}{Definition}
\newtheorem{lemma}[definition]{Lemma}
\newtheorem{claim}[definition]{Claim}
\newtheorem{proposition}[definition]{Proposition}
\newtheorem{theorem}[definition]{Theorem}
\newtheorem{corollary}[definition]{Corollary}
\begin{document}

\title{On Decoding Irregular Tanner Codes with Local-Optimality Guarantees\thanks{The material in this paper was presented in part at the 2012 IEEE International Symposium on Information Theory, Cambridge, MA, USA, Jul. 2012, and in part at the 2012 IEEE 27th Convention of Electrical and Electronics Engineers in Israel, Eilat, Israel, Nov. 2012.}}

\author{
      Nissim Halabi \thanks{School of Electrical Engineering, Tel-Aviv University, Tel-Aviv 69978, Israel. \mbox{{E-mail}:\ {\tt nissimh@eng.tau.ac.il}.}}
      \and
      Guy Even \thanks{School of Electrical Engineering, Tel-Aviv University, Tel-Aviv 69978, Israel.  \mbox{{E-mail}:\ {\tt guy@eng.tau.ac.il}.}}}

\date{}

 \maketitle

\begin{abstract}
  We consider decoding of binary linear Tanner codes using message-passing
  iterative decoding and linear programming (LP) decoding in
  memoryless binary-input output symmetric (MBIOS) channels.
  We present new certificates that are based on a combinatorial characterization for local-optimality of a codeword in irregular Tanner codes with respect to any MBIOS
  channel.
  This characterization is a generalization of [Arora, Daskalakis, Steurer, \emph{Proc. ACM Symp. Theory of Computing}, 2009]  and [Vontobel, \emph{Proc. Inf. Theory and Appl. Workshop}, 2010] and is based on a conical
  combination of normalized weighted subtrees in the computation trees of the Tanner graph.
  These subtrees may have any finite height $h$ (even equal or greater than half of the girth of the Tanner graph). In addition, the degrees of local-code nodes in these subtrees are not restricted to two (i.e., these subtrees are not restricted to skinny trees).
  We prove that local optimality in this new characterization implies maximum-likelihood (ML) optimality and LP optimality, and show that a certificate can be computed efficiently.

  We also present a new message-passing iterative decoding algorithm, called \emph{normalized weighted min-sum} (\NWMS). \NWMS\ decoding is a belief-propagation (BP) type algorithm that applies to any irregular binary Tanner code with single parity-check local codes (e.g., LDPC codes and HDPC codes).
  We prove that if a locally-optimal codeword with respect to height parameter $h$ exists (whereby notably $h$ is not limited by the girth of the Tanner graph), then \NWMS\ decoding finds this codeword in $h$ iterations.
  The decoding guarantee of the \NWMS\ decoding algorithm applies whenever there exists a locally optimal codeword.
  Because local optimality of a codeword implies that it is the unique ML codeword, the decoding guarantee also provides an ML certificate for this codeword.

  Finally, we apply the new local optimality characterization to regular Tanner codes, and prove lower bounds on the noise thresholds of LP decoding in MBIOS channels. When the noise is below these lower bounds, the probability that LP decoding fails to decode the transmitted codeword decays doubly exponentially in the girth of the Tanner graph.
\end{abstract}

\clearpage
\section{Introduction} \label{sec:intro}

Modern coding theory deals with finding good error correcting codes that have efficient
encoders and decoders~\cite{RU08}. Message-passing iterative decoding algorithms based on
belief propagation (see, e.g., \cite{Gal63,BGT93,Mac99,LMSS01,RU01}) and linear programming (LP) decoding~\cite{Fel03,FWK05} are examples of efficient decoders. These decoders are usually suboptimal, i.e., they may fail to correct errors that are corrected by a maximum likelihood (ML) decoder.

Many works deal with low-density parity-check (LDPC) codes and generalizations of LDPC codes. LDPC codes were first defined by Gallager~\cite{Gal63} who suggested several message-passing iterative decoding algorithms (including an algorithm that is now known as the sum-product decoding algorithm). Tanner~\cite{Tan81} introduced graph representations (nowadays known as Tanner graphs) of linear codes based on bipartite graphs over variable nodes and constraint nodes, and viewed iterative decoding as message-passing algorithms over the edges of these bipartite graphs. In the standard setting, constraint nodes enforce a zero-parity among their neighbors. In the generalized setting, constraint nodes enforce a local error-correcting
code. One may view a constraint node with a linear local code as a coalescing of
multiple single parity-check nodes. Therefore, a code may have a sparser and
smaller representation when represented as a Tanner code in the generalized setting.
Sipser and Spielman~\cite{SS96} studied binary Tanner codes based on expander graphs and analyzed a simple bit-flipping decoding algorithm.

Wiberg \emph{et al.}~\cite{WLK95,Wib96} developed the use of graphical models for systematically describing instances of known decoding algorithms. In particular, the sum-product decoding algorithm and the min-sum decoding algorithm are presented as generic iterative message-passing decoding algorithms that apply to any graph realization of a Tanner code. Wiberg \emph{et al.}\ proved that the min-sum decoding algorithm can be viewed as a dynamic programming algorithm that computes the ML codeword if the Tanner graph is a tree.
For LDPC codes, Wiberg~\cite{Wib96} characterized a necessary condition for decoding failures of the min-sum decoding algorithm by ``negative'' cost trees, called \emph{minimal deviations}.

LP decoding was introduced by Feldman, Wainwright, and Karger~\cite{Fel03,FWK05} for binary linear codes. LP decoding is based on solving a fractional relaxation of an integer linear program that models the problem of ML decoding. The vertices of the relaxed LP polytope are called \emph{pseudocodewords}. Every codeword is a vertex of the relaxed LP polytope, however, usually there are additional vertices for which at least one component is non-integral.
LP decoding has been applied to several codes, among them: cycle codes, turbo-like and RA codes~\cite{FK04,HE05,GB11}, LDPC codes~\cite{FMSSW07,DDKW08,KV06,ADS09,HE11}, and expander codes~\cite{FS05,Ska11}. Our work is motivated by the problem of finite-length and average-case analysis of successful LP decoding of binary Tanner codes. There are very few works on this problem, and they deal only with specific cases. For example, Feldman and Stein~\cite{FS05} analyzed
special expander-based codes, and Goldenberg and Burshtein~\cite{GB11} dealt with repeat-accumulate codes.

\paragraph{Previous results.}
Combinatorial characterizations of sufficient conditions for successful decoding of the ML codeword are based on so called ``certificates.'' That is,
given a channel observation $y$ and a codeword $x$, we are interested in a one-sided error test that
answers the questions: is $x$ optimal with respect to $y$? is it
unique? Note that the test may answer ``no'' for a positive instance. A positive answer for such a test is called a \emph{certificate} for the optimality of a codeword. Upper bounds on the word error probability are obtained by lower bounds on the probability that a certificate exists.

Koetter and Vontobel~\cite{KV06} analyzed LP decoding of regular LDPC
codes. Their analysis is based on decomposing each codeword (and
pseudocodeword) to a finite set of minimal structured trees (i.e.,
skinny trees) with uniform vertex weights.  Arora \emph{et
  al.}~\cite{ADS09} extended the work in~\cite{KV06} by introducing
non-uniform weights to the vertices in the skinny trees, and defining
\emph{local optimality}.  For a BSC, Arora \emph{et al.}\ proved that
local optimality implies both ML optimality and LP optimality.  They
presented an analysis technique that performs finite-length density
evolution of a min-sum process to prove bounds on the probability of a
decoding error. Arora \emph{et al.}\ also pointed out that it is
possible to design a re-weighted version of the min-sum decoder for
regular codes that finds the locally-optimal codeword if such a codeword
exists for trees whose height is at most half of the girth of the Tanner graph.  This
work was further extended in~\cite{HE11} to memoryless binary-input output-symmetric (MBIOS) channels beyond the BSC. The
analyses presented in these works~\cite{KV06,ADS09,HE11} are limited
to skinny trees, the height of which is bounded by a half of the girth
of the Tanner graph.

Vontobel~\cite{Von10} extended the decomposition of a codeword (and a
pseudocodeword) to skinny trees in graph covers.
This enabled Vontobel to mitigate the limitation on the height of the skinny trees by half of the girth of the base Tanner graph. The
decomposition is obtained by a random walk, and applies also to irregular Tanner graphs.

Various iterative message-passing decoding algorithms have been
derived from the belief propagation algorithm (e.g., max-product
decoding algorithm~\cite{WLK95}, attenuated max-product~\cite{FK00},
tree-reweighted belief-propagation~\cite{WJW05}, etc.).  The
convergence of these belief-propagation (BP) based iterative decoding algorithms to an
optimum solution has been studied extensively in various settings
(see, e.g., \cite{WLK95,FK00,WF01,CF02,CDEFH05,WJW05,RU01,JP11}).
However, bounds on the running time required to decode (or on the
number of messages that are sent) have not been proven for these
algorithms. The analyses of convergence in these works often rely on the existence of
a single optimal solution in addition to other conditions such as:
single-loop graphs, large girth, large reweighting coefficients,
consistency conditions, etc.

Jian and Pfister~\cite{JP11} analyzed a special case of the attenuated max-product decoder~\cite{FK00} for regular LDPC codes. They considered skinny trees in the computation tree, the height of which is equal or greater than half of the girth of the
Tanner graph. Using contraction properties and consistency conditions, they proved sufficient conditions under which the message-passing decoder converges to a locally optimal codeword. This convergence also implies convergence to the LP optimum and therefore to the ML codeword.

While local-optimality characterizations were investigated for the
case of finite-length analysis of regular LDPC
codes~\cite{ADS09,HE11,JP11}, no local-optimality characterizations have been stated
for the general case of Tanner codes.
In this paper we study a generalization of previous local-optimality characterizations, and the guarantees it provides for successful ML decoding by LP decoding and iterative message-passing decoding algorithms.
In particular, this paper presents a
decoding algorithm for finite-length (regular and irregular) LDPC
codes over MBIOS channels
with bounded time complexity that combines two properties:
\begin{inparaenum}[(i)]
\item it is a message-passing algorithm, and
\item for every number of iterations (not limited by any function of the girth of the Tanner
  graph), if the local-optimality characterization is satisfied for some codeword, then the algorithm succeeds to decode the ML codeword and has an ML certificate.
\end{inparaenum}

\paragraph{Contributions.}
We present a new combinatorial characterization for local optimality of a codeword in irregular binary Tanner codes with respect to (w.r.t.) any MBIOS channel (Definition~\ref{def:localOptimality}).
Local optimality is characterized via costs of deviations based on subtrees in computation trees of the Tanner graph\footnote{We consider computation trees that correspond to a ``flooding message update schedule'' in the context of iterative message-passing algorithms such as the max-product decoding algorithm.}. Consider a computation tree with height $2h$ rooted at some variable node. A deviation is based on a subtree such that (i)~the degree of a variable node is equal to its degree in the computation tree, and (ii)~the degree of a local-code node equals some constant $d\geqslant2$, provided that $d$ is at most the minimum distance of the local codes. Furthermore, level weights $w\in\R_+^h$ are assigned to the levels of the tree. Hence, a deviation is a combinatorial structure that has three main parameters: deviation height $h$, deviation level weights $w\in\R_+^h$, and deviation ``degree'' $d$. Therefore, the new definition of local optimality is based on three parameters: $h\in\N$, $w\in\R_+^h$, and $d\geqslant2$.

This characterization extends the notion of deviations in local optimality in four ways:
\begin{compactenum}[(i)]
\item no restrictions are applied to the degrees of the nodes in the Tanner graph,
\item arbitrary local linear codes may be associated with constraint nodes,
\item deviations are subtrees in the computation tree and no limitation is set on the height of the deviations; in particular, their height may exceed (any function of) the girth of the Tanner graph, and
\item deviations may have a degree $d\geqslant 2$ in the local-code nodes (as opposed to skinny trees in previous analyses), provided
  that $d$ is at most the minimum distance of the local codes.
\end{compactenum}
We prove that local optimality in this new characterization implies
ML optimality (Theorem~\ref{thm:MLsufficient}). We utilize the
equivalence of graph cover decoding and LP decoding for Tanner codes,
implied by Vontobel and Koetter~\cite{VK05} to prove that
local optimality suffices also for LP optimality
(Theorem~\ref{thm:LPsufficient}). We present an efficient dynamic
programming algorithm that computes a local-optimality certificate, and hence an ML certificate\footnote{An ML certificate computed based on local optimality is different from an ML certificate computed by LP decoding~\cite{FWK05} in the following sense.
In the context of LP decoding, the ML certificate property means that if the LP decoder outputs an integral word, then it must be the ML codeword. Hence, one may compute an ML certificate for a codeword $x$ and a given channel output by running the LP decoder and compare its result with the codeword $x$. Local optimality is a combinatorial characterization of a codeword with respect to an LLR vector, which, by Theorem~\ref{thm:MLsufficient}, suffices for ML. Hence, one may compute an ML certificate for a codeword $x$ and a given channel output by verifying that the codeword is locally optimal w.r.t. the channel output. Algorithm~\ref{alg:verify} is an efficient message-passing algorithm that returns true if the codeword is locally optimal, and therefore provides an ML certificate.}, for
a codeword w.r.t. a given channel output (Algorithm~\ref{alg:verify}), if such certificate exists.

We present a new message-passing iterative decoding algorithm, called
\emph{normalized weighted min-sum} (\NWMS) decoding algorithm
(Algorithm~\ref{alg:weighted-min-sum}). The \NWMS\ decoding algorithm applies to
any irregular Tanner code with single parity-check (SPC) local codes (e.g.,
LDPC codes and HDPC codes).
The input to the \NWMS\ decoding algorithm consists of the channel output and two additional parameters that characterize local optimality for Tanner codes with SPC local codes: (i)~a certificate height $h$, and (ii)~a vector of layer weights $w \in \R_+^h\setminus\{0^h\}$. (Note that the local codes are SPC codes, and therefore the deviation degree $d$ equals $2$.)
We prove that, for any finite $h$, the \NWMS\ decoding algorithm is
guaranteed to compute the ML codeword in $h$ iterations if an
$h$-locally-optimal codeword exists (Theorem~\ref{thm:MPsufficient}).
The decoding guarantee of the \NWMS\ algorithm is not bounded by (any function of) the girth.
Namely, the height parameter $h$ in local optimality and the number of
iterations in the decoding is arbitrary and may exceed (any function of) the girth.
Because local optimality is a pure combinatorial property, the decoding guarantee of the \NWMS\ decoding algorithm is not asymptotic nor does it rely on convergence. Namely, it applies to finite codes and decoding with a finite number of iterations.
Furthermore, the output of the \NWMS\ decoding algorithm can be ML-certified efficiently (Algorithm~\ref{alg:verify}).
The time and message complexity of the \NWMS\ decoding algorithm is
$O(|E|\cdot h)$ where $|E|$ is the number of edges in the Tanner
graph. Local optimality, as defined in this paper, is a sufficient
condition for successfully decoding the unique ML codeword by our BP-based algorithm in loopy
graphs.

Previous bounds on the probability that a local-optimality certificate
exists~\cite{KV06,ADS09,HE11} hold for regular LDPC codes. The same bounds
hold also for successful decoding of the transmitted codeword by the \NWMS\ decoding algorithm.
These bounds are based on proving that a local-optimality certificate exists with high probability for the transmitted codeword when the noise in the channel is below some noise threshold. The resulting threshold values happen to be relatively close to the BP thresholds. Specifically, noise
thresholds of $p^*\geqslant0.05$ in the case of a BSC~\cite{ADS09}, and
$\sigma^*\geqslant0.735$ ($\frac{E_b}{N_0}^*\leqslant2.67$dB) in the case of a BI-AWGN channel~\cite{HE11} are proven for $(3,6)$-regular LDPC codes whose Tanner graphs have logarithmic girth in the block-length.

Finally, for a fixed height, trees in our new characterization contain more
vertices than a skinny tree because the internal degrees are bigger.
Hence, over an MBIOS channel, the probability of a locally-optimal certificate with dense
deviations (local-code node degrees bigger than two) is greater than
the probability of a locally-optimal certificate based on skinny trees
(i.e., local-code nodes have degree two).
This characterization leads to improved bounds for successful decoding of the transmitted codeword of regular Tanner codes (Theorems~\ref{thm:main-bound-BSC} and~\ref{thm:main-bound-MBIOS}).

We extend the probabilistic analysis of the min-sum process by Arora \emph{et al.}~\cite{ADS09} to a sum-min-sum process on regular
trees. For regular Tanner codes, we prove bounds on the word error probability of LP decoding
under MBIOS channels. These bounds are inverse doubly-exponential in the girth of the Tanner graph.
We also prove bounds on the threshold of regular Tanner codes whose Tanner graphs have
logarithmic girth. This means that if the noise in the channel is below that threshold, then
the decoding error diminishes exponentially as a function of the block length.
Note that Tanner graphs with logarithmic girth can be constructed explicitly (see, e.g.,~\cite{Gal63}).

To summarize, our contribution is threefold. 
\begin{compactenum}[(i)]
\item We present a new combinatorial characterization of local optimality for binary Tanner codes w.r.t. any MBIOS channel. This characterization provides an ML certificate and an LP certificate for a given codeword. The certificate can be efficiently computed by a dynamic programming algorithm. Based on this new characterization, we present two applications of local optimality.
\item A new efficient message-passing decoding algorithm, called \emph{normalized weighted min-sum} (\NWMS), for irregular binary Tanner codes with SPC local codes (e.g., LDPC codes and HDPC codes).
The \NWMS\ decoding algorithm is guaranteed to find the locally optimal codeword in $h$ iterations, where $h$ determines the height of the local-optimality certificate. Note that $h$ is not bounded and may be larger than (any function of) the girth of the Tanner graph (i.e., decoding with local-optimality guarantee ``beyond the girth'').
\item New bounds on the word error probability are proved for LP decoding of regular binary Tanner codes.
\end{compactenum}

\paragraph{Organization.}
The remainder of this paper is organized as follows. Section~\ref{sec:prelim} provides background on ML decoding and LP decoding of binary Tanner codes over MBIOS channels. Section~\ref{sec:MLcertificate} presents a combinatorial certificate that applies to ML decoding for codewords of Tanner codes. In Section~\ref{sec:LPcertificate}, we prove that the certificate applies also to LP decoding for codewords of Tanner codes. In Section~\ref{sec:verify}, we present an efficient certification algorithm for local optimality.
Section~\ref{sec:NWMS} presents the \NWMS\ iterative decoding algorithm for irregular Tanner codes with SPC local codes, followed by a proof that the \NWMS\ decoding algorithm finds the locally-optimal codeword. In Section~\ref{sec:bounds}, we use the combinatorial characterization of local optimality to bound the error probability of LP decoding for regular Tanner codes. Finally, conclusions and a discussion are given in Section~\ref{sec:conclusion}.

\section{Preliminaries} \label{sec:prelim}

\subsection{Graph Terminology}
Let $G=(V,E)$ denote an undirected graph.  Let $\mathcal{N}_G(v)$ denote the set
of neighbors of node $v \in V$, and for a set $S \subseteq V$ let
$\mathcal{N}_G(S)\triangleq\bigcup_{v \in S}\mathcal{N}_G(v)$. Let $\deg_G(v)\triangleq\lvert\mathcal{N}_G(v)\rvert$ denote the edge degree of node $v$ in graph $G$.
A path
$p=(v,\ldots,u)$ in $G$ is a sequence of vertices such that there
exists an edge between every two consecutive nodes in the sequence
$p$. A path $p$ is \emph{backtrackless} if every two consecutive edges along $p$ are distinct.
Let $s(p)$ denote the first vertex (\emph{source}) of path $p$, and let $t(p)$ denote the last vertex (\emph{target}) of path $p$. If $s(p)=t(p)$ then the path is \emph{closed}. A \emph{simple} path is a path with no repeated vertex. A \emph{simple cycle} is a closed backtrackless path where the only repeated vertex is the first and last vertex. Let $\lvert p\rvert$ denote the length of a path $p$, i.e., the number of edges in $p$.
Let $d_G(r,v)$ denote the distance
(i.e., length of a shortest path) between nodes $r$ and $v$ in $G$,
and let $\girth(G)$ denote the length of the shortest cycle in $G$. Let $p$ and $q$ denote two paths in a graph $G$ such that $t(p)=s(q)$. The path obtained by \emph{concatenating} the paths $p$ and $q$ is denoted by $p\circ q$.

An \emph{induced subgraph} is a subgraph obtained by deleting a set of
vertices. In particular, the \emph{subgraph of $G$ induced by $S \subseteq V$} consists of $S$ and all edges in $E$, both endpoints of which
are contained in $S$. Let $G_S$ denote the subgraph of $G$ induced by $S$.

\subsection{Tanner Codes and Tanner Graph Representation}
Let $G=(\mathcal{V} \cup \mathcal{J}, E)$ denote an edge-labeled
bipartite graph, where $\mathcal{V} = \{v_1,\ldots,v_N\}$ is a set of
$N$ vertices called \emph{variable nodes}, and $\mathcal{J} =
\{C_1,\ldots,C_J\}$ is a set of $J$ vertices called \emph{local-code
  nodes}. We denote the degree of $C_j$ by $n_j$.

Let $\calCbarJ \triangleq \big\{\calCbar^j~\big\vert~\calCbar^j~\mathrm{is~an}~[n_j,k_j,d_j]~ \mathrm{code},~1\leqslant j\leqslant J\big\}$ denote a
set of $J$ linear \emph{local codes}. The local code $\calCbar^j$ corresponds
to the vertex $C_j \in \calJ$.  We say that $v_i$
\emph{participates} in $\calCbar^j$ if $(v_i,C_j)$ is an edge in $E$.
The edges incident to each local-code node $C_j$ are labeled
$\{1,\ldots,n_j\}$. This labeling specifies the index of a variable
node in the corresponding local code.

A word $x = (x_1,\ldots,x_N) \in \{0,1\}^N$ is an assignment
to variable nodes in $\mathcal{V}$ where $x_i$ is assigned to $v_i$. Let $\mathcal{V}_j$ denote the set $\calN_G(C_j)$ ordered according to labels of
edges incident to $C_j$. Denote by $x_{\mathcal{V}_j} \in
\{0,1\}^{n_j}$ the projection of the word $x = (x_1,\ldots,x_N)$ onto
entries associated with $\mathcal{V}_j$.

The binary \emph{Tanner code} $\calC(G,\calCbarJ)$ based on the labeled
\emph{Tanner graph} $G$ is the set of vectors $x \in \{0,1\}^N$
such that $x_{\mathcal{V}_j}$ is a codeword in $\calCbar^j$ for every
$j \in \{1,\ldots,J\}$. Let us note that all the codes that we consider in this paper are binary linear codes.

Let $d_j$ denote the minimum distance of the local code $\calCbar^j$.
The \emph{minimum local distance} $d^*$ of a Tanner code
$\calC(G,\calCbarJ)$ is defined by $d^*=\min_j d_j$. We assume that $d^*\geq 2$.

If the bipartite graph is $(\dL,\dR)$-regular, i.e., the vertices in
$\calV$ have degree $\dL$ and the vertices in $\calJ$ have degree
$\dR$, then the resulting code is called a \emph{$(\dL,\dR)$-regular Tanner
  code}.

If the Tanner graph is sparse, i.e., $|E|=O(N)$, then it defines a
\emph{low-density Tanner code}. A \emph{single parity-check code} is a code that
contains all binary words with even Hamming weight. Tanner codes that have single parity-check local codes and that are based on sparse Tanner graphs are called
\emph{low-density parity-check (LDPC) codes}.

Consider a Tanner code $\calC(G,\calCbarJ)$. We say that a word $x =
(x_1,...,x_N)$ \emph{satisfies} the local code $\calCbar^j$ if its
projection $x_{\mathcal{V}_j}$ is in $\calCbar^j$.  The set of words
$x$ that \emph{satisfy} the local code $\calCbar^j$ is denoted by
$\calCj$, i.e., $\calCj = \{x \in \{0,1\}^N~\mid~x_{\mathcal{V}_j} \in
\calCbar^j \}$. Namely, the resulting code $\calCj$ is the
\emph{extension} of the local code $\calCbar^j$ from length $n_j$ to
length $N$. The Tanner code is simply the intersection of the
extensions of the local codes, i.e.,
\begin{equation}
\calC(G,\calCbarJ) = \bigcap_{j \in \{1,\ldots,J\}}{\calCj}.
\end{equation}


\subsection{LP Decoding of Tanner Codes over Memoryless Channels}
Let $c_i \in \{0,1\}$ denote the $i$th transmitted binary symbol (channel input), and let $y_i\in\R$ denote the $i$th received symbol (channel output).
A \emph{memoryless binary-input output-symmetric} (MBIOS) channel is defined by a conditional probability density function $f(y_i|c_i=a)$ for $a\in\{0,1\}$ that satisfies $f(y_i|0) = f(-y_i|1)$. The binary erasure channel (BEC), binary symmetric channel (BSC) and binary-input additive white Gaussian noise (BI-AWGN) channel are examples for MBIOS channels.
Let $y\in\R^N$ denote the word received from the channel.
In MBIOS channels, the \emph{log-likelihood ratio} (LLR) vector $\lambda=\lambda(y) \in \R^N$ is defined by $\lambda_i (y_i) \triangleq
\ln\big(\frac{f(y_i|c_i=0)}{f(y_i|c_i=1)}\big)$ for every
input bit $i$. For a code $\calC$, \emph{Maximum Likelihood (ML)
decoding} is equivalent\footnote{Strictly speaking, the operator $\argmin$ returns a set of vectors because $\langle
\lambda(y) , x \rangle$ may have multiple minima w.r.t. $\mathrm{conv}(\calC)$. When $\argmin$ returns a singleton set, then $\argmin$ is equal to the vector in that set. Otherwise, it returns a random vector from the set.} to
\begin{equation} \label{eqn:MLdecoding}
 \hat{x}^{\ML}(y) = \argmin_{x \in \mathrm{conv}(\mathcal{C})} \langle
\lambda(y) , x \rangle,
\end{equation}
where $\mathrm{conv}(\mathcal{C})$ denotes the convex hull of the set $\mathcal{C}$ where $\{0,1\}^N$ is considered to be a subset of $\R^N$.

In general, solving the optimization problem in (\ref{eqn:MLdecoding})
for linear codes is intractable~\cite{BMT78}. Feldman \emph{et
  al.}~\cite{Fel03,FWK05} introduced a linear programming relaxation
for the problem of ML decoding of Tanner codes with single parity-check codes acting as local codes. The resulting relaxation of $\mathrm{conv}(\calC)$ is nowadays called the \emph{fundamental polytope}~\cite{VK05} of the Tanner graph $G$.
We consider an extension of this definition to the case in which
the local codes are arbitrary as follows. The \emph{generalized
  fundamental polytope} $\calP \triangleq \calP(G,\calCbarJ)$ of a
Tanner code $\calC = \calC(G,\calCbarJ)$ is defined by
\begin{equation}
\calP \triangleq \bigcap_{\calCj \in \calCJ}{\mathrm{conv}(\calCj)}.
\end{equation}
Note that a code may have multiple representations by a Tanner
graph and local codes.  Moreover, different representations
$(G,\calCbarJ)$ of the same code $\calC$ may yield different
generalized fundamental polytopes $\calP(G,\calCbarJ)$. If the degree
of each local-code node is constant, then the generalized fundamental
polytope can be represented by $O(N+ |\calJ|)$ variables and $O(|\calJ|)$
constraints. Typically,
$|\calJ|=O(N)$, and the generalized fundamental polytope has an
efficient representation.  Such Tanner codes are often called
\emph{generalized low-density parity-check codes}.

Given an LLR vector $\lambda$ for a received word $y$, LP decoding is
defined by the following linear program
\begin{equation} \label{eqn:LPdecoding} \hat{x}^{\LP}(y) \triangleq
  \argmin_{x \in \mathcal{P}(G,\calCbarJ)} \langle \lambda(y) , x
  \rangle.
\end{equation}

The difference between ML decoding and LP decoding is that the
fundamental polytope $\calP(G,\calCbarJ)$ may strictly contain the
convex hull of $\calC$.
Vertices of $\calP(G,\calCbarJ)$ are called \emph{pseudocodewords}~\cite{Fel03,FWK05}. It can be shown that vertices of $\calP(G,\calCbarJ)$ that are not codewords of $\calC$ must have at least one non-integral component.

\section{A Combinatorial Certificate for an ML Codeword} \label{sec:MLcertificate}

In this section we present combinatorial certificates for codewords of
Tanner codes that apply both to ML decoding and LP decoding. A
certificate is a proof that a given codeword is the unique solution of
ML decoding and LP decoding. The
certificate is based on combinatorial weighted structures in the
Tanner graph, referred to as \emph{local configurations}. These local
configurations generalize the minimal configurations (skinny trees)
presented by Vontobel~\cite{Von10} as extension to Arora \emph{et
  al.}~\cite{ADS09}. We note that for Tanner codes, the characteristic function of the support of
each weighted local configuration is not necessarily a locally valid
configuration. For a given codeword, the certificate is computed by a
dynamic-programming algorithm on the Tanner graph of the code (see Section~\ref{subsec:certificateVerifiaction}).

\emph{Notation:} Let $y \in \R^N$ denote the word received from the channel.
Let $\lambda = \lambda(y)$ denote the LLR vector for $y$. Let $G=(\mathcal{V}\cup\mathcal{J},E)$ denote a Tanner graph, and let $\mathcal{C}(G)$ denote a Tanner code based on $G$ with minimum local distance $d^*$. Let $x \in
\mathcal{C}(G)$ be a candidate for $\hat{x}^{\ML}(y)$ and
$\hat{x}^{\LP}(y)$.

\begin{definition}[Path-Prefix Tree]
  Consider a graph $G=(V,E)$ and a node $r \in V$. Let $\hat{V}$
  denote the set of all backtrackless paths in $G$ with length at most
  $h$ that start at node $r$, and let
\begin{equation*}
\hat{E}\triangleq\bigg\{(p_1,p_2)\in\hat{V}\times\hat{V}~\bigg\vert~p_1~\mathrm{is~a~prefix~of}~p_2,~ \lvert p_1\rvert+1=\lvert p_2\rvert \bigg\}.
\end{equation*}
We identify the zero-length path in $\hat{V}$ with $(r)$.
Denote by $\calT_r^{h}(G) \triangleq (\hat{V},\hat{E})$ the \emph{path-prefix tree} of $G$ rooted at node $r$ with height $h$.
\end{definition}
Path-prefix trees of $G$ that are rooted at a variable node or at a local-code node are often
called \emph{computation trees}. We consider also path-prefix trees
of subgraphs of $G$ that may be either rooted at a variable node or at a
local-code node.

We use the following notation. Vertices in $G$ are denoted by $u,v,r$. Because vertices in $\calT_r^{h}(G)=(\hat{V},\hat{E})$
are paths in $G$, we denote vertices in path-prefix trees by $p$ and
$q$.  For a path
$p\in\hat{V}$, let $s(p)$ denote the first vertex (\emph{source}) of
path $p$, and let $t(p)$ denote the last vertex (\emph{target}) of path
$p$. Denote by $\PPrefix(p)$ the set of proper prefixes of the path
$p$, i.e.,
\begin{equation*}
\PPrefix(p) = \bigg\{q\ \bigg\vert\ q \mathrm{\ is\ a\ prefix\ of\ }p,\ 1 \leqslant\rvert q \lvert<\lvert p \rvert\bigg\}.
\end{equation*}
Consider a Tanner graph $G=(\calV\cup\calJ,E)$ and let
$\calT_r^h(G)=(\hat{V},\hat{E})$ denote a path-prefix tree of $G$.
Let $\hat{\calV}\eqdf \{p\ \vert\ p\in\hat{V},\ t(p)\in\calV\}$, and
$\hat{\calJ}\eqdf \{p\ \vert\ p\in\hat{V},\ t(p)\in\calJ\}$.
Paths in $\hat{\calV}$ are called
\emph{variable paths}\footnote{Vertices in a path-prefix tree of a Tanner graph $G$ correspond to paths in $G$. We therefore refer by \emph{variable paths} to vertices in a path-prefix tree that correspond to paths in $G$ that end at a variable node.}, and paths in  $\hat{\calJ}$ are called \emph{local-code paths}.

The following definitions expand the combinatorial notion of minimal valid deviations~\cite{Wib96} and weighted minimal local deviations (skinny trees)~\cite{ADS09, Von10} to the case of Tanner codes.
\begin{definition}[$d$-tree]
Consider a Tanner graph $G=(\calV\cup\calJ,E)$. Denote by $\calT_r^{2h}(G)=(\hat\calV\cup\hat\calJ,\hat{E})$ the path-prefix tree of $G$ rooted at node $r\in\calV$. A subtree $\calT \subseteq \calT_r^{2h}(G)$ is a \emph{$d$-tree} if:
\begin{compactenum}[(i)]
\item $\calT$ is rooted at $(r)$,
\item for every local-code path $p\in\calT\cap\hat\calJ$, $\deg_\calT(p)=d$, and
\item for every variable path $p\in\calT\cap\hat\calV$, $\deg_\calT(p)=\deg_{\calT_r^{2h}}(p)$.
\end{compactenum}
\end{definition}
Note that the leaves of a $d$-tree are variable paths because a $d$-tree is rooted in a variable node and has an even height.
Let $\calT[r,2h,d](G)$ denote the set of all $d$-trees rooted at $r$ that are subtrees of $\calT_r^{2h}(G)$.

In the following definition we use ``level'' weights $w=(w_1,\ldots,
w_h)$ that are assigned to variable paths in a subtree of a path-prefix tree of height $2h$.
\begin{definition} [$w$-weighted subtree]\label{def:weightedSubtree}
  Let $\calT = (\hat{\calV}\cup\hat{\calJ},\hat{E})$ denote a subtree
  of $\calT_r^{2h}(G)$, and let $w=(w_1,\ldots,w_h)\in\R_+^h$ denote a
  non-negative weight vector.  Let
  $w_\calT:\hat{\mathcal{V}}\rightarrow \R$ denote a weight function
  based on the weight vector $w$ for variable paths $p\in\hat{\calV}$
  defined as follows. If $p$ is a zero-length variable path, then $w_\calT(p)=0$. Otherwise,
\begin{equation*}
   w_\calT(p) \triangleq \frac{w_\ell}{
\wone}\cdot\frac{1}{\deg_G\big(t(p)\big)}\cdot \prod_{q\in \PPrefix(p)}\frac{1}{\deg_{\calT}(q)-1},
\end{equation*}
where $\ell = \big\lceil\frac{\lvert p\rvert}{2}\big\rceil$. We refer to $w_\calT$ as a $w$-weighed subtree.
\end{definition}

For any $w$-weighted subtree $w_\calT$ of $\calT_{r}^{2h}(G)$, let $\pi_{G,\calT,w}:\calV \rightarrow \R$ denote a function whose values correspond to the projection of $w_\calT$ to the Tanner graph $G$. That is, for every variable node $v$ in $G$,
\begin{equation}
\pi_{G,\calT,w}(v) \triangleq \sum_{\{p\in\calT~\vert~t(p)=v\}}w_\calT(p).
\end{equation}
We remark that:
\begin{inparaenum}[(i)]
\item If no variable path in $\calT$ ends in $v$, then
  $\pi_{G,\calT,w}(v)=0$.
\item If $h<\girth(G)/4$, then every node $v$ is an endpoint of at
  most one variable path in $\calT_{r}^{2h}(G)$, and the projection is
  trivial. However, we deal with arbitrary heights $h$, in which case the projection is many-to-one since many  different variable paths may share a common endpoint.
\end{inparaenum}
Notice that the length of the weight vector $w$ equals the height parameter $h$.

\begin{definition}\label{def:PNWdevSet}
Consider a Tanner code $\calC(G)$, a non-positive weight vector $w\in\R_+^h$, and $2\leqslant d\leqslant d^*$. Let $\mathcal{B}_d^{(w)}$ denote the set of all projections of
$w$-weighted $d$-trees to $G$, i.e.,
\[\mathcal{B}_d^{(w)} \triangleq \bigg\{\frac{1}{c}\cdot\pi_{G,\calT,w}\ \bigg\vert\ \calT\in\bigcup_{r\in\calV}\calT[r,2h,d](G)\bigg\},\]
where $c\geqslant1$ is chosen so that $\mathcal{B}_d^{(w)}\subseteq
[0,1]^N$.
\end{definition}
Vectors in $\mathcal{B}_d^{(w)}$ are referred to as \emph{\pnw\ (\PNW) deviations}.
We use a \PNW\ deviations to alter a codeword in the upcoming definition of local optimality (Definition~\ref{def:localOptimality}). Our notion of deviations differs from Wiberg's deviations~\cite{Wib96} in three significant ways:
\begin{compactenum}[(i)]
\item For a $d$-tree $\calT$, the characteristic function of the support of $w_\calT$ is not necessarily a valid configuration of the computation tree.
\item The entries of $w_\calT$ are real scaled version of the characteristic function of the support of $w_\calT$. The scaling obeys a degree normalization along the path from the root of $\calT$ and a non-negative level weight factor as extension of weighted minimal deviations~\cite{ADS09,Von10}.
\item  We apply a projection operator $\pi$ on $w_\calT$ to the Tanner graph $G$. The characteristic function of the support of the projection does not induce a tree on the Tanner graph $G$ when $h$ is large.
\end{compactenum}

For two vectors $x \in \{0,1\}^N$ and $f \in [0,1]^N$, let $x\oplus f \in [0,1]^N$ denote the \emph{relative point} defined by $(x\oplus f)_i \triangleq |x_i-f_i|$~\cite{Fel03}. The following definition is an extension of local optimality~\cite{ADS09,Von10} to Tanner codes on memoryless channels.

\begin{definition}[local optimality] \label{def:localOptimality} Let
  $\mathcal{C}(G) \subset \{0,1\}^N$ denote a Tanner code with minimum
  local distance $d^*$. Let $w \in \R_+^h\backslash\{0^h\}$ denote a
  non-negative weight vector of length $h$ and let $2 \leqslant d
  \leqslant d^*$.  A codeword $x \in \calC(G)$ is
  \emph{$(h,w,d)$-locally optimal w.r.t. $\lambda \in \R^N$}
  if for all vectors $\beta \in \mathcal{B}_d^{(w)}$,
\begin{equation}
\langle \lambda,x \oplus \beta \rangle > \langle \lambda, x \rangle.
\end{equation}
\end{definition}

Based on random walks on the Tanner graph, the results in~\cite{Von10} imply that
$(h,w,d\!=\!2)$-local optimality is sufficient both for ML optimality and
LP optimality. The transition probabilities of these random walks are induced by pseudocodewords of the generalized fundamental polytope.  We extend the results of
Vontobel~\cite{Von10} to ``thicker'' sub-trees by using probabilistic
combinatorial arguments on graphs and the properties of graph cover
decoding~\cite{VK05}.  Specifically, for any $d$ with $2 \leqslant d \leqslant d^*$ we prove that $(h,w,d)$-local optimality for a codeword $x$ w.r.t. $\lambda$ implies both ML
and LP optimality for a codeword $x$ w.r.t. $\lambda$ (Theorems~\ref{thm:MLsufficient} and~\ref{thm:LPsufficient}).

The following structural lemma states that every codeword of a Tanner code is a finite conical combination of projections of weighted trees in the computation trees of $G$.
\begin{lemma}[conic decomposition of a codeword] \label{lemma:IntegralDecomposition}
Let $\mathcal{C}(G)$ denote a Tanner code with minimum local distance $d^*$, and let $h$ be some positive integer. Consider a codeword $x \neq 0^N$. Then, for every $2 \leqslant d \leqslant d^{*}$, there exists a distribution $\rho$ over $d$-trees of $G$ of height $2h$
  such that for every weight vector $w \in\R_+^h\setminus\{0^h\}$, it
  holds that
\[x=\lVert x\rVert_1\cdot\E_\rho\big[\pi_{G,\calT,w}\big].\]
\end{lemma}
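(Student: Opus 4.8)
The plan is to build the distribution $\rho$ explicitly by a level-by-level random growth process inside the computation tree $\calT_r^{2h}(G)$, in the spirit of the skinny-tree decompositions of Arora \emph{et al.}\ and Vontobel, and then to verify the identity one coordinate at a time. First I would record the factorization $w_\calT(p)=\frac{w_\ell}{\lVert w\rVert_1}\cdot g_\calT(p)$, where $g_\calT(p)\triangleq\frac{1}{\deg_G(t(p))}\prod_{q\in\PPrefix(p)}\frac{1}{\deg_\calT(q)-1}$ does not depend on $w$. In a $d$-tree the value of $g_\calT$ at a variable path of length $2k$ depends only on the backtrackless path $P=(u_0,C_1,u_1,\dots,C_k,u_k)$ it traces in $G$ and on $d$, namely $g(P)=\frac{1}{\deg_G(u_k)(d-1)^{k}}\prod_{i=1}^{k-1}\frac{1}{\deg_G(u_i)-1}$, while $\ell=\lceil|p|/2\rceil=k$. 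Consequently $\pi_{G,\calT,w}(v)=\sum_{k=1}^{h}\frac{w_k}{\lVert w\rVert_1}\cdot S_k(v,\calT)$ with $S_k(v,\calT)\triangleq\sum_{\{p\in\calT\,:\,t(p)=v,\,|p|=2k\}}g_\calT(p)$, and since $\sum_{k=1}^{h}w_k/\lVert w\rVert_1=1$ for every $w\in\R_+^h\setminus\{0^h\}$, it suffices to construct a single distribution $\rho$ over $d$-trees of height $2h$ for which $\lVert x\rVert_1\cdot\E_\rho[S_k(v,\calT)]=x_v$ for every $v\in\calV$ and every $k\in\{1,\dots,h\}$.

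Next I would define $\rho$. Choose the root $r$ uniformly from $\mathrm{supp}(x)$. At each variable path, include all local-code neighbours of its target except the one leading to the parent (this is forced by the definition of a $d$-tree). At each local-code path whose target is $C$ and whose parent variable path ends at a node $u$ (so $x_u=1$ along the construction), take as children a uniformly random $(d-1)$-element subset of $\{v'\in\calN_G(C)\,:\,v'\neq u,\ x_{v'}=1\}$. This is well defined because the restriction of $x$ to the local code at $C$ is a nonzero local codeword and therefore has Hamming weight at least $d^*\geq d$, so that set has at least $d-1$ elements; the same reasoning shows that every variable node appearing in the generated tree lies in $\mathrm{supp}(x)$, whence $S_k(v,\calT)=0$ surely when $x_v=0$. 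The observation that makes the computation tractable is that the size of that set equals $W_C-1$, where $W_C$ is the Hamming weight of $x$ on the local code at $C$ --- a quantity that does \emph{not} depend on which $1$-valued parent $u$ was used.

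I would then compute $\E_\rho[S_k(v,\calT)]$ for a fixed $v$ with $x_v=1$. By linearity this equals $\sum_P\Pr_\rho[P\in\calT]\cdot g(P)$, summed over backtrackless paths $P=(u_0,C_1,\dots,C_k,u_k=v)$ of length $2k$ in $G$ (all terms non-negative, so rearranging is harmless). A term vanishes unless $P$ lies in $\mathrm{supp}(x)$, and then, using independence of the root choice and of the selections at distinct local-code paths, $\Pr_\rho[P\in\calT]=\frac{1}{\lVert x\rVert_1}\prod_{i=1}^{k}\frac{d-1}{W_{C_i}-1}$. Multiplying by $g(P)$, the powers of $d-1$ cancel, leaving $\frac{1}{\lVert x\rVert_1\deg_G(v)}\prod_{i=1}^{k}\frac{1}{W_{C_i}-1}\prod_{i=1}^{k-1}\frac{1}{\deg_G(u_i)-1}$. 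Summing over $P$ now telescopes, in the order $u_0,C_1,u_1,C_2,\dots,u_{k-1},C_k$: summing over the choice of $u_{i-1}$ contributes $W_{C_i}-1$ identical terms and cancels $\frac{1}{W_{C_i}-1}$ (for $i=1,\dots,k$); summing over $C_i$ contributes $\deg_G(u_i)-1$ identical terms and cancels $\frac{1}{\deg_G(u_i)-1}$ (for $i=1,\dots,k-1$); and the final sum over $C_k$ ranges over all $\deg_G(v)$ local-code neighbours of $v$ and cancels $\frac{1}{\deg_G(v)}$. The cascade collapses to $\frac{1}{\lVert x\rVert_1}$, so $\lVert x\rVert_1\cdot\E_\rho[S_k(v,\calT)]=1=x_v$; combined with the vanishing when $x_v=0$ and the reduction of the first paragraph, this gives $x=\lVert x\rVert_1\cdot\E_\rho[\pi_{G,\calT,w}]$ for every admissible $w$.

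I expect the only genuinely delicate point to be the bookkeeping of this telescoping cascade, and in particular the fact that the number of available children at a local-code node $C$ is $W_C-1$ \emph{independently of the parent} along the path --- this is what decouples the successive sums, and it is precisely where the hypothesis $d\leq d^*$ enters (it guarantees $W_C-1\geq d-1$ at every local-code node met, so the process never gets stuck and no summation range is empty). One should also keep in mind that when $h\geq\girth(G)/4$ the projection $\pi_{G,\calT,w}$ is genuinely many-to-one, so $S_k(v,\calT)$ is a sum over many paths that all terminate at $v$; this is already accommodated by summing over all backtrackless $P$ in $G$ in the computation above.
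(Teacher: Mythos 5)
Your proof is correct, and the core mechanism is the same as the paper's: a root-and-grow sampling of $d$-trees inside the support subgraph $G_x$, whose marginal inclusion probability $\Pr_\rho[P\in\calT]=\frac{1}{\lVert x\rVert_1}\prod_i\frac{d-1}{W_{C_i}-1}$ cancels the degree-normalization factors in $w_\calT$ and leaves a telescoping product. The difference is organizational. The paper stages the argument in two lemmas: Lemma~\ref{lemma:prefixDecomposition} decomposes $x$ into a sum, over support vertices $r$, of projections of $w$-weighted path-prefix trees of $G_x$ (via a path-suffix-tree telescoping), and Lemma~\ref{lemma:skiniesDecomposition} then writes each $w_{\calT_r^{2h}(G_x)}$ as $\E_{\rho_r}[w_\calT]$ with $\rho_r$ generated by the same independent $(d-1)$-subset choices at local-code paths. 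You collapse both into a single coordinate-by-coordinate computation of $\lVert x\rVert_1\cdot\E_\rho[\pi_{G,\calT,w}(v)]$, summing $\Pr_\rho[P\in\calT]\cdot g(P)$ over backtrackless paths $P$ ending at $v$; the two telescopes are the same sum rearranged. One point of precision in your favor: you define $\rho$ explicitly as the independent-choices growth process, which is exactly what the product formula for $\Pr_\rho[P\in\calT]$ requires; the paper nominally calls $\rho_r$ the \emph{uniform} distribution over $\calT[r,2h,d](G_x)$ but then derives and uses the growth-process marginals, and for irregular $G_x$ these two distributions need not coincide (the paper's conclusion is unaffected because only the growth marginals are ever used, but your phrasing avoids the imprecision). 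Your identification of where $d\leq d^*$ enters — guaranteeing $W_C-1\geq d-1$ at every reachable local-code node — and your remark about the many-to-one projection when $h\geq\girth(G)/4$ both match the paper's treatment.
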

\begin{proof}
See Appendix~\ref{sec:decomposition}.
\end{proof}

Given Lemma~\ref{lemma:IntegralDecomposition}, the following theorem is obtained by modification of the proof of~\cite[Theorem 2]{ADS09} or ~\cite[Theorem 6]{HE11}.
\begin{theorem}[local optimality is sufficient for ML]\label{thm:MLsufficient}
Let $\mathcal{C}(G)$ denote a Tanner code with minimum local distance $d^*$. Let $h$ be some positive integer and $w=(w_1,\ldots,w_h)\in\R_+^h$ denote a non-negative weight vector. Let $\lambda\in\R^N$ denote the LLR vector received from the channel. If $x$ is an $(h,w,d)$-locally optimal codeword w.r.t. $\lambda$ and some $2\leqslant d \leqslant d^{*}$, then $x$ is also the unique ML codeword w.r.t. $\lambda$.
\end{theorem}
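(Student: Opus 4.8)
The plan is to leverage the conic decomposition of Lemma~\ref{lemma:IntegralDecomposition} exactly as in the original argument of Arora \emph{et al.}, so that the bulk of the work has already been done; what remains is to check that the normalization bookkeeping of the \PNW\ deviations goes through. First I would argue by contradiction: suppose $x$ is $(h,w,d)$-locally optimal w.r.t.\ $\lambda$, but there is some $x' \in \calC(G)$, $x' \neq x$, with $\langle \lambda, x'\rangle \leqslant \langle \lambda, x\rangle$. The goal is to exhibit a \PNW\ deviation $\beta \in \calB_d^{(w)}$ with $\langle \lambda, x \oplus \beta\rangle \leqslant \langle \lambda, x\rangle$, contradicting local optimality (note the strict inequality in Definition~\ref{def:localOptimality}, so producing even a non-strict violation suffices).

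The key step is to apply Lemma~\ref{lemma:IntegralDecomposition} to the codeword $z \triangleq x \oplus x' \in \calC(G)$ (codewords of a binary linear code are closed under coordinatewise XOR, and $z \neq 0^N$ since $x' \neq x$). The lemma gives a distribution $\rho$ over $d$-trees of height $2h$ such that $z = \|z\|_1 \cdot \E_\rho[\pi_{G,\calT,w}]$. Dividing by the scaling constant $c$ from Definition~\ref{def:PNWdevSet} we get $z = c\|z\|_1 \cdot \E_\rho[\beta_\calT]$ where each $\beta_\calT = \frac1c \pi_{G,\calT,w} \in \calB_d^{(w)}$. Now I would use the standard identity for the relative point: for $x,z \in \{0,1\}^N$ and a vector $f \in [0,1]^N$ supported where appropriate, $\langle \lambda, x \oplus f\rangle - \langle \lambda, x\rangle = \sum_i \lambda_i (1 - 2x_i) f_i$, which is linear in $f$. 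Taking $f = \E_\rho[\beta_\calT]$ and using that $z = x \oplus x'$ means $z_i = 1$ exactly where $x_i \neq x'_i$, a short computation (this is the routine part from \cite{ADS09}) shows
\[
\E_\rho\big[\langle \lambda, x \oplus \beta_\calT\rangle - \langle \lambda, x\rangle\big] = \frac{1}{c\|z\|_1}\big(\langle \lambda, x'\rangle - \langle \lambda, x\rangle\big) \leqslant 0,
\]
where $c\|z\|_1 \geqslant 1 > 0$. Hence at least one $d$-tree $\calT$ in the support of $\rho$ satisfies $\langle \lambda, x \oplus \beta_\calT\rangle \leqslant \langle \lambda, x\rangle$, and this $\beta_\calT \in \calB_d^{(w)}$ contradicts $(h,w,d)$-local optimality. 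Uniqueness follows because the same argument, applied with $\langle \lambda, x'\rangle = \langle \lambda, x\rangle$, still produces a non-strict violation of the strict inequality defining local optimality.

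The main obstacle I anticipate is not the averaging argument itself — that is essentially verbatim from \cite[Theorem 2]{ADS09} / \cite[Theorem 6]{HE11} — but rather making sure the relative-point algebra is valid when $\beta$ is a genuine vector in $[0,1]^N$ rather than a $0/1$ indicator, and in particular that the projection $\pi_{G,\calT,w}$ obtained from possibly-overlapping variable paths (the $h \geqslant \girth(G)/4$ regime flagged after Definition~\ref{def:weightedSubtree}) does not break the coordinatewise identity $\langle \lambda, x \oplus f\rangle = \langle \lambda, x\rangle + \sum_i \lambda_i(1-2x_i)f_i$. This identity holds for any $f$ with $f_i \in [0,1]$ regardless of whether the support induces a tree in $G$, so the many-to-one projection causes no trouble; one only needs $\calB_d^{(w)} \subseteq [0,1]^N$, which is exactly what the constant $c$ in Definition~\ref{def:PNWdevSet} guarantees. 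I would state this identity as a one-line sub-claim and then let the averaging do the rest.
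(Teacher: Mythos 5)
Your proposal is correct and follows essentially the same route as the paper's proof: both rest entirely on the conic decomposition of $z = x \oplus x'$ from Lemma~\ref{lemma:IntegralDecomposition}, the affine-linearity of $\beta \mapsto \langle \lambda, x \oplus \beta\rangle$, and an averaging argument over the distribution $\rho$; the paper merely runs the chain of (in)equalities directly to conclude $\langle \lambda, x'\rangle > \langle \lambda, x\rangle$, whereas you phrase it contrapositively by extracting a single non-positive deviation from the support of $\rho$. Your side remark that the relative-point identity holds for any $f \in [0,1]^N$ regardless of overlapping projections is exactly the observation the paper uses implicitly.
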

  \begin{proof}
    We use the decomposition implied by Lemma~\ref{lemma:IntegralDecomposition} to show that for every codeword $x' \neq x$, $\langle \lambda , x'
    \rangle > \langle \lambda , x \rangle$.
    Let $z \triangleq x
    \oplus x'$. By linearity, it holds that $z\in\calC(G)$. Moreover, $z\neq0^N$ because $x\neq x'$. Because $d^*\geqslant2$, it follows that $\lVert z\rVert_1\geqslant2$. By Lemma~\ref{lemma:IntegralDecomposition} there exists a
    distribution $\rho$ over the set $\mathcal{B}_d^{(w)}$ of \PNW\ deviations such that
    $\E_{\rho} [c\cdot\beta] = \frac{z}{\lVert z\rVert_1}$, where $c\geqslant1$ is the normalizing constant so that $\mathcal{B}_d^{(w)}\subseteq[0,1]^N$ (see Definition~\ref{def:PNWdevSet}). Let $\alpha\triangleq\frac{1}{c\cdot\lVert z\rVert_1}<1$. Let $f:[0,1]^N
    \rightarrow \R$ be the affine linear function defined by
    $f(\beta) \triangleq \langle \lambda , x \oplus \beta \rangle = \langle
    \lambda , x \rangle + \sum_{i=1}^{N}(-1)^{x_i}\lambda_i \beta_i$.
    Then,
    \begin{eqnarray*}
      \langle \lambda , x \rangle &<& \E_{\rho}\langle\lambda,x\oplus\beta \rangle~~~~~~~~~~~~~~~~~~~~~(\mathrm{by~local~optimality~of}~x)\\
      &=& \langle \lambda , x \oplus \E_{\rho} \beta \rangle~~~~~~~~~~~~~~~~~~~~~(\mathrm{by~linearity~of}~f~\mathrm{and}~\E_{\beta})\\
      &=& \langle \lambda , x \oplus \alpha z \rangle~~~~~~~~~~~~~~~~~~~~~~~~(\mathrm{by~Lemma~\ref{lemma:IntegralDecomposition}})\\
      &=& \langle \lambda , (1-\alpha)x+\alpha(x\oplus z) \rangle\\
      &=& \langle \lambda , (1-\alpha)x+\alpha x' \rangle\\
      &=& (1-\alpha)\langle \lambda , x \rangle+\alpha\langle \lambda , x' \rangle.
    \end{eqnarray*}
    which implies that $\langle \lambda , x' \rangle>\langle \lambda , x \rangle$ as desired.
  \end{proof}

\section{Local Optimality Implies LP Optimality}\label{sec:LPcertificate}

In order to prove a sufficient condition for LP optimality, we
consider graph cover decoding introduced by Vontobel and
Koetter~\cite{VK05}. We note that the characterization of graph cover
decoding and its connection to LP decoding can be extended to the case
of Tanner codes in the generalized setting (see, e.g.,~\cite[Theorem 25]{Von10a} and~\cite[Theorem 2.14]{Hal12}).

We use the terms and notation of Vontobel and Koetter \cite{VK05} (see also~\cite[Appendix A]{HE11}) in
the statements of Proposition~\ref{prop:coverOptimality} and Lemma~\ref{lemma:cover Opt}. Specifically, let $\tilde{G}$ denote an $M$-cover of $G$.
Let $\tilde{x}=x^{\uparrow M}\in \mathcal{C}(\tilde{G})$  and $\tilde{\lambda}=\lambda^{\uparrow
  M} \in \R^{N\cdot M}$ denote the $M$-lifts of $x$ and $\lambda$,
respectively.

In this section we consider the following setting.  Let
$\mathcal{C}(G)$ denote a Tanner code with minimum local distance
$d^*$. Let $w \in \R_+^h\backslash\{0^h\}$ for some positive integer
$h$ and let $2\leqslant d \leqslant d^{*}$.

\begin{proposition}[local optimality of all-zero codeword is preserved by
  $M$-lifts] \label{prop:coverOptimality}
$0^N$ is an $(h,w,d)$-locally optimal codeword w.r.t.
  $\lambda \in \R^N$ if and only if $0^{N\cdot M}$ is an $(h,w,d)$-locally optimal
  codeword w.r.t. $\tilde{\lambda}$.
\end{proposition}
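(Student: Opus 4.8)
The plan is to reduce both implications to a single cost identity relating each \PNW\ deviation of the cover $\tilde G$ to its image, under the covering map, in $G$. First I note that, since $0\oplus\beta=\beta$ and $\langle\lambda,0^N\rangle=0$, Definitions~\ref{def:PNWdevSet} and~\ref{def:localOptimality} make $(h,w,d)$-local optimality of $0^N$ w.r.t.\ $\lambda$ equivalent to the statement that $\langle\lambda,\pi_{G,\calT,w}\rangle>0$ for every $d$-tree $\calT\in\bigcup_{r\in\calV}\calT[r,2h,d](G)$ (the normalizing constant $c$ is positive, hence irrelevant to signs), and likewise $0^{N\cdot M}$ is $(h,w,d)$-locally optimal w.r.t.\ $\tilde\lambda$ iff $\langle\tilde\lambda,\pi_{\tilde G,\tilde\calT,w}\rangle>0$ for every $d$-tree $\tilde\calT$ of $\tilde G$. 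So it suffices to produce, for each $d$-tree $\tilde\calT$ of $\tilde G$, a $d$-tree $\calT$ of $G$ of equal cost, and conversely to realize every $d$-tree of $G$ this way.

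The next step is to pass through the path-prefix trees. Let $\phi\colon\tilde G\to G$ be the covering map; it is a local isomorphism, so it restricts to a bijection between the edges at $\tilde u$ and the edges at $\phi(\tilde u)$ for every node $\tilde u$, it preserves node types (variable vs.\ local-code) and the edge labels at local-code nodes, and $\deg_{\tilde G}(\tilde u)=\deg_G(\phi(\tilde u))$. Using unique path lifting for covers, I would argue that for every $\tilde r$ with $\phi(\tilde r)=r$ and every backtrackless path $p$ from $r$ in $G$ there is exactly one backtrackless path from $\tilde r$ in $\tilde G$ that $\phi$ maps onto $p$, and conversely that $\phi$ sends backtrackless paths from $\tilde r$ to backtrackless paths from $r$ of the same length --- backtracklessness being preserved both ways because $\phi$ is injective on the edges at each vertex. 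Since $\phi$ also preserves the prefix relation, this yields, for every $\tilde r\in\phi^{-1}(r)$, an isomorphism of rooted trees
\[
  \Phi_{\tilde r}\colon\calT_{\tilde r}^{2h}(\tilde G)\ \xrightarrow{\ \sim\ }\ \calT_{r}^{2h}(G),
  \qquad t\big(\Phi_{\tilde r}(\tilde p)\big)=\phi\big(t(\tilde p)\big),
\]
which preserves $\deg_G\big(t(\cdot)\big)$ and all internal degrees. Crucially, this works for every height $2h$, with no constraint relative to the girth of $G$.

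With the isomorphisms $\Phi_{\tilde r}$ in hand, I would observe that, because $\Phi_{\tilde r}$ respects node types and internal degrees, it restricts to a bijection between the $d$-trees of $\tilde G$ rooted at $\tilde r$ and the $d$-trees of $G$ rooted at $r=\phi(\tilde r)$; in particular every $d$-tree $\calT$ of $G$ rooted at $r$ equals $\Phi_{\tilde r}(\tilde\calT)$ for the unique $d$-tree $\tilde\calT$ of $\tilde G$ rooted at any prescribed lift $\tilde r$ of $r$, and every $d$-tree of $\tilde G$, being rooted at some variable node, is carried by the corresponding $\Phi_{\tilde r}$ to a $d$-tree of $G$. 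Moreover, the weight $w_{\tilde\calT}(\tilde p)$ of Definition~\ref{def:weightedSubtree} depends only on $\lvert\tilde p\rvert$, on $\deg_{\tilde G}\big(t(\tilde p)\big)$, and on $\prod_{\tilde q\in\PPrefix(\tilde p)}\big(\deg_{\tilde\calT}(\tilde q)-1\big)$, all of which $\Phi_{\tilde r}$ preserves, so $w_{\tilde\calT}(\tilde p)=w_{\calT}\big(\Phi_{\tilde r}(\tilde p)\big)$ for every variable path $\tilde p\in\tilde\calT$. Finally, using $\tilde\lambda=\lambda^{\uparrow M}$ (so $\tilde\lambda_{\tilde v}=\lambda_{\phi(\tilde v)}$) and regrouping the sum over the fibers of $\phi$, I would obtain, with $\calT:=\Phi_{\tilde r}(\tilde\calT)$,
\begin{align*}
  \langle\tilde\lambda,\pi_{\tilde G,\tilde\calT,w}\rangle
  &=\sum_{v\in\calV}\lambda_v\sum_{\tilde v\in\phi^{-1}(v)}\pi_{\tilde G,\tilde\calT,w}(\tilde v)
   =\sum_{v\in\calV}\lambda_v\sum_{\substack{\tilde p\in\tilde\calT\\ t(\tilde p)\in\phi^{-1}(v)}} w_{\tilde\calT}(\tilde p)\\
  &=\sum_{v\in\calV}\lambda_v\sum_{\substack{p\in\calT\\ t(p)=v}} w_{\calT}(p)
   =\langle\lambda,\pi_{G,\calT,w}\rangle,
\end{align*}
where the third equality substitutes $p=\Phi_{\tilde r}(\tilde p)$ and uses $t(\Phi_{\tilde r}(\tilde p))=\phi(t(\tilde p))$ together with the weight identity above.

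Both directions of the proposition would then be immediate. If $0^N$ is $(h,w,d)$-locally optimal w.r.t.\ $\lambda$, then for any $d$-tree $\tilde\calT$ of $\tilde G$, rooted at some $\tilde r$, we get $\langle\tilde\lambda,\pi_{\tilde G,\tilde\calT,w}\rangle=\langle\lambda,\pi_{G,\Phi_{\tilde r}(\tilde\calT),w}\rangle>0$, so $0^{N\cdot M}$ is $(h,w,d)$-locally optimal w.r.t.\ $\tilde\lambda$; conversely, any $d$-tree $\calT$ of $G$ rooted at $r$ is $\Phi_{\tilde r}(\tilde\calT)$ for the $d$-tree $\tilde\calT:=\Phi_{\tilde r}^{-1}(\calT)$ of $\tilde G$ rooted at an arbitrarily chosen lift $\tilde r$ of $r$, whence $\langle\lambda,\pi_{G,\calT,w}\rangle=\langle\tilde\lambda,\pi_{\tilde G,\tilde\calT,w}\rangle>0$. (I would also remark in passing that a single normalizing constant $c$ serves on both sides: by the fiber-sum identity, each entry of $\pi_{\tilde G,\tilde\calT,w}$ is one nonnegative summand of the corresponding entry of $\pi_{G,\Phi_{\tilde r}(\tilde\calT),w}$, hence is bounded by it.) The hard part will be the second step: establishing cleanly that $\phi$ induces a genuine isomorphism --- not merely a local isomorphism --- between $\calT_{\tilde r}^{2h}(\tilde G)$ and $\calT_{r}^{2h}(G)$ for arbitrary height, which is exactly where unique path lifting and its compatibility with backtracklessness and with the prefix order must be invoked; everything downstream of that is bookkeeping.
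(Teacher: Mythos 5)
Your proof is correct and takes essentially the same approach as the paper's: both reduce the statement to a cost identity $\langle\tilde\lambda,\pi_{\tilde G,\tilde\calT,w}\rangle=\langle\lambda,\pi_{G,\calT,w}\rangle$ relating each $d$-tree of the cover to its image under the covering map. You merely make explicit what the paper's terse one-line proof leaves implicit, namely the path-prefix-tree isomorphism obtained from unique path lifting, its preservation of the weight function of Definition~\ref{def:weightedSubtree}, and the fiber-sum computation establishing the cost identity.
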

\begin{proof}
  Consider the surjection $\varphi$ of $d$-trees in the path-prefix
  tree of $\tilde G$ to $d$-trees in the path-prefix tree of $G$. This
  surjection is based on the covering map between $\tilde G$ and $G$. Given a \PNW\ deviation
  $\tilde\beta\eqdf \pi_{\tilde{G},\calT, w}$ based on a $d$-tree $\calT$ in the path-prefix
  tree of $\tilde G$, let $\beta\eqdf
  \pi_{G,\varphi(\calT),w}$. The proposition follows because $\langle \lambda, \beta
  \rangle =\langle \tilde\lambda, \tilde\beta \rangle$.
\end{proof}
For two vectors $y,z\in \R^N$, let ``$\ast$'' denote coordinatewise
multiplication, i.e., $y\ast z \triangleq (y_1\cdot z_1,\ldots,
y_N\cdot z_N)$. For a word $x\in\{0,1\}^N$, let $(-1)^x\in\{\pm1\}^N$
denote the vector whose $i$th component equals $(-1)^{x_i}$.
\begin{lemma}\label{lemma:isoLO}
For every $\lambda\in\R^N$ and every $\beta\in[0,1]^N$,
\begin{equation}
\langle (-1)^x\ast\lambda,\beta\rangle  = \langle  \lambda,x\oplus\beta\rangle-\langle\lambda,x\rangle.
\end{equation}
\end{lemma}

\begin{proof}
  For $\beta\in[0,1]^N$, it holds that $\langle \lambda , x \oplus
  \beta \rangle = \langle \lambda , x \rangle +
  \sum_{i=1}^{N}(-1)^{x_i}\lambda_i \beta_i$. Hence,
\begin{align*}
\langle \lambda , x \oplus \beta \rangle - \langle \lambda , x \rangle &= \sum_{i=1}^{N}(-1)^{x_i}\lambda_i \beta_i\\
&=  \langle (-1)^x\ast\lambda,\beta\rangle.
\end{align*}
\end{proof}

The following proposition states that the mapping
$(x,\lambda)\mapsto(0^N,(-1)^x\ast\lambda)$ preserves local optimality.

\begin{proposition}[symmetry of local optimality]\label{proposition:LOsymmetry}
  For every $x\in\calC$, $x$ is $(h,w,d)$-locally optimal w.r.t.
  $\lambda$ if and only if $0^N$ is $(h,w,d)$-locally optimal w.r.t.
  $(-1)^x\ast\lambda$.
\end{proposition}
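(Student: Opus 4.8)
The plan is to prove the equivalence directly from the definition of local optimality (Definition~\ref{def:localOptimality}) together with Lemma~\ref{lemma:isoLO}. The key observation is that the set $\mathcal{B}_d^{(w)}$ of \PNW\ deviations depends only on the Tanner graph $G$ and the parameters $(h,w,d)$, and is completely independent of the codeword $x$ or the LLR vector $\lambda$. Hence the same set of deviations $\beta$ is quantified over in the local-optimality condition for $(x,\lambda)$ and for $(0^N,(-1)^x\ast\lambda)$.

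First I would unwind both sides. By Definition~\ref{def:localOptimality}, $x$ is $(h,w,d)$-locally optimal w.r.t.\ $\lambda$ iff $\langle\lambda, x\oplus\beta\rangle > \langle\lambda,x\rangle$ for every $\beta\in\mathcal{B}_d^{(w)}$. By Lemma~\ref{lemma:isoLO}, $\langle\lambda,x\oplus\beta\rangle - \langle\lambda,x\rangle = \langle (-1)^x\ast\lambda,\beta\rangle$, so this condition is equivalent to $\langle(-1)^x\ast\lambda,\beta\rangle > 0$ for every $\beta\in\mathcal{B}_d^{(w)}$. On the other side, $0^N$ is $(h,w,d)$-locally optimal w.r.t.\ $(-1)^x\ast\lambda$ iff $\langle (-1)^x\ast\lambda, 0^N\oplus\beta\rangle > \langle(-1)^x\ast\lambda, 0^N\rangle$ for every $\beta\in\mathcal{B}_d^{(w)}$; since $0^N\oplus\beta = \beta$ and $\langle(-1)^x\ast\lambda,0^N\rangle = 0$, this is exactly the condition $\langle(-1)^x\ast\lambda,\beta\rangle > 0$ for all $\beta\in\mathcal{B}_d^{(w)}$. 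The two conditions coincide, so the equivalence follows. (One should also note $0^N\in\calC$ always, and that local optimality of $x$ w.r.t.\ $\lambda$ presupposes $x\in\calC$, which is given in the hypothesis $x\in\calC$.)

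There is essentially no obstacle here; the statement is a bookkeeping consequence of Lemma~\ref{lemma:isoLO} and the fact that $\mathcal{B}_d^{(w)}$ is a fixed set not depending on $x$. The only mild subtlety worth stating explicitly is the instantiation of Lemma~\ref{lemma:isoLO} with the relative point $0^N\oplus\beta = \beta$, i.e.\ that applying the lemma with the codeword $0^N$ gives $\langle(-1)^{0^N}\ast\lambda',\beta\rangle = \langle\lambda',\beta\rangle$ with $\lambda' = (-1)^x\ast\lambda$, which is trivially consistent. I would write the proof as a short two-line chain of ``iff'' equivalences citing Lemma~\ref{lemma:isoLO} once in each direction (or once, reading it both ways).
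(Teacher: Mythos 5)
Your proof is correct and follows essentially the same route as the paper's: both reduce the claim to Lemma~\ref{lemma:isoLO}, which identifies $\langle\lambda,x\oplus\beta\rangle-\langle\lambda,x\rangle$ with $\langle(-1)^x\ast\lambda,\beta\rangle-\langle(-1)^x\ast\lambda,0^N\rangle$, and use the fact that $\mathcal{B}_d^{(w)}$ does not depend on $x$ or $\lambda$. Your write-up merely spells out the bookkeeping ($0^N\oplus\beta=\beta$, the quantification over the same deviation set) that the paper leaves implicit.
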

\begin{proof}
By Lemma~\ref{lemma:isoLO},
$\langle \lambda,x \oplus \beta \rangle - \langle \lambda, x \rangle  = \langle (-1)^x\ast\lambda,\beta\rangle -\langle (-1)^x\ast\lambda,0^N\rangle$.
\end{proof}

\medskip \noindent
The following lemma states that
local optimality is preserved by lifting to an
$M$-cover.
\begin{lemma}\label{lemma:cover Opt}
  $x$ is $(h,w,d)$-locally optimal w.r.t. $\lambda$ if and only if
  $\tilde x$ is $(h,w,d)$-locally optimal w.r.t. $\tilde\lambda$.
\end{lemma}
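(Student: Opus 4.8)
The plan is to reduce Lemma~\ref{lemma:cover Opt} to the symmetry statement (Proposition~\ref{proposition:LOsymmetry}) and the preservation-under-lifts statement for the all-zero codeword (Proposition~\ref{prop:coverOptimality}), which together handle the entire claim by a short chain of equivalences. First I would observe that, since $\tilde x = x^{\uparrow M}$, applying the sign-flip map to $\tilde\lambda$ gives $(-1)^{\tilde x}\ast\tilde\lambda = \big((-1)^x\ast\lambda\big)^{\uparrow M}$; i.e.\ the $M$-lift commutes with the coordinatewise multiplication by $(-1)^x$. This is immediate from the definition of the $M$-lift (it just replicates each coordinate $M$ times, consistently for both $x$ and $\lambda$).

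Granting that, the argument is: $x$ is $(h,w,d)$-locally optimal w.r.t.\ $\lambda$ iff $0^N$ is $(h,w,d)$-locally optimal w.r.t.\ $(-1)^x\ast\lambda$ (by Proposition~\ref{proposition:LOsymmetry}), iff $0^{NM}$ is $(h,w,d)$-locally optimal w.r.t.\ $\big((-1)^x\ast\lambda\big)^{\uparrow M}$ (by Proposition~\ref{prop:coverOptimality} applied to the LLR vector $(-1)^x\ast\lambda$), iff $0^{NM}$ is $(h,w,d)$-locally optimal w.r.t.\ $(-1)^{\tilde x}\ast\tilde\lambda$ (by the commutation observation above), iff $\tilde x$ is $(h,w,d)$-locally optimal w.r.t.\ $\tilde\lambda$ (by Proposition~\ref{proposition:LOsymmetry} again, now on the cover $\tilde G$). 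Each of these four biconditionals is exactly one of the lemmas/propositions already proved, so the proof is just assembling them in the right order.

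The only nontrivial point — and the one I would spend a sentence justifying carefully — is that Propositions~\ref{prop:coverOptimality} and~\ref{proposition:LOsymmetry} are stated for arbitrary LLR vectors, so they may legitimately be applied to $(-1)^x\ast\lambda$ rather than to $\lambda$ itself; and that the notion of "$(h,w,d)$-locally optimal on the cover $\tilde G$" in Proposition~\ref{proposition:LOsymmetry} is the same notion used in Proposition~\ref{prop:coverOptimality} (both refer to \PNW\ deviations built from $d$-trees in the path-prefix tree of $\tilde G$). There is no real obstacle here: the content has already been isolated into the surjection of $d$-trees underlying Proposition~\ref{prop:coverOptimality}, and the rest is bookkeeping. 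I would write the proof as a single displayed chain of "iff''s with the justification for each step noted in parentheses, exactly mirroring the style of the proof of Proposition~\ref{proposition:LOsymmetry}.

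Concretely, the proof would read: ``By Proposition~\ref{proposition:LOsymmetry}, $x$ is $(h,w,d)$-locally optimal w.r.t.\ $\lambda$ iff $0^N$ is $(h,w,d)$-locally optimal w.r.t.\ $(-1)^x\ast\lambda$. By Proposition~\ref{prop:coverOptimality} (applied to the LLR vector $(-1)^x\ast\lambda$ and its $M$-lift), this holds iff $0^{N\cdot M}$ is $(h,w,d)$-locally optimal w.r.t.\ $\big((-1)^x\ast\lambda\big)^{\uparrow M}$. Since $\tilde x=x^{\uparrow M}$, we have $\big((-1)^x\ast\lambda\big)^{\uparrow M}=(-1)^{\tilde x}\ast\tilde\lambda$, so the latter holds iff $0^{N\cdot M}$ is $(h,w,d)$-locally optimal w.r.t.\ $(-1)^{\tilde x}\ast\tilde\lambda$, which by Proposition~\ref{proposition:LOsymmetry} (now on $\tilde G$) holds iff $\tilde x$ is $(h,w,d)$-locally optimal w.r.t.\ $\tilde\lambda$.'' That completes it.
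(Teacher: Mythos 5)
Your proposal is correct and follows essentially the same route as the paper: the paper's proof is exactly the chain Proposition~\ref{proposition:LOsymmetry} $\to$ Proposition~\ref{prop:coverOptimality} $\to$ Proposition~\ref{proposition:LOsymmetry}, noting that each implication is reversible. Your version is if anything slightly more careful, since you explicitly state and justify the identity $\big((-1)^x\ast\lambda\big)^{\uparrow M}=(-1)^{\tilde x}\ast\tilde\lambda$, which the paper uses implicitly when it applies Proposition~\ref{prop:coverOptimality} to pass from $(-1)^{\tilde x}\ast\tilde\lambda$ down to $(-1)^{x}\ast\lambda$.
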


\begin{proof}
  Assume that $\tilde{x}$ is a $(h,w,d)$-locally optimal codeword w.r.t.
  $\tilde{\lambda}$.  By Proposition~\ref{proposition:LOsymmetry},
  $0^{N\cdot M}$ is $(h,w,d)$-locally optimal w.r.t.
  $(-1)^{\tilde{x}} \ast \tilde{\lambda}$.  By
  Proposition~\ref{prop:coverOptimality}, $0^N$ is $(h,w,d)$-locally optimal
  w.r.t. $\big((-1)^x\ast \lambda\big)$. By Proposition~\ref{proposition:LOsymmetry}, $x$ is
  $(h,w,d)$-locally optimal w.r.t. $\lambda$. Each of these implications is
  necessary and sufficient, and the lemma follows.\end{proof}

The following theorem is obtained as a corollary of
Theorem~\ref{thm:MLsufficient} and Lemma~\ref{lemma:cover Opt}.
The proof is based on a reduction stating that if local optimality is sufficient for ML optimality, then it also suffices for LP optimality. The reduction is based on the equivalence of LP decoding and graph-cover decoding~\cite{VK05}, and follows the line of the proof of~\cite[Theorem 8]{HE11}.
\begin{theorem}[local optimality is sufficient for LP optimality]\label{thm:LPsufficient}
  If $x$ is an $(h,w,d)$-locally optimal codeword w.r.t. $\lambda$, then
  $x$ is also the unique optimal LP solution given $\lambda$.
\end{theorem}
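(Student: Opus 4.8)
The plan is to deduce Theorem~\ref{thm:LPsufficient} from Theorem~\ref{thm:MLsufficient} and Lemma~\ref{lemma:cover Opt} via the equivalence of LP decoding and graph-cover decoding of Vontobel and Koetter~\cite{VK05} (extended to the generalized Tanner-code setting as noted above). First I would recall that graph-cover decoding outputs (the projection of) an ML codeword of some finite $M$-cover $\tilde G$ of $G$; more precisely, if $\hat x^{\LP}(y) \neq x$, then there is a cover $\tilde G$ and a codeword $\tilde x' \in \mathcal{C}(\tilde G)$ with $\tilde x' \neq x^{\uparrow M}$ and $\langle \tilde\lambda, \tilde x'\rangle \leqslant \langle \tilde\lambda, x^{\uparrow M}\rangle$, i.e., $x^{\uparrow M}$ is \emph{not} the unique ML codeword of $\tilde G$ w.r.t. $\tilde\lambda$. (Here I use that a vertex of the generalized fundamental polytope is a scaled projection of a codeword of some cover; this is exactly the content of the cited extension of~\cite{VK05}.)

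The argument then proceeds by contradiction. Suppose $x$ is $(h,w,d)$-locally optimal w.r.t.\ $\lambda$ but is not the unique LP-optimal solution. By the equivalence just recalled, there exist $M$, a cover $\tilde G$, and $\tilde x' \neq \tilde x \triangleq x^{\uparrow M}$ in $\mathcal{C}(\tilde G)$ with $\langle \tilde\lambda, \tilde x'\rangle \leqslant \langle\tilde\lambda,\tilde x\rangle$, so $\tilde x$ is not the unique ML codeword of the Tanner code $\mathcal{C}(\tilde G)$ w.r.t.\ $\tilde\lambda$. On the other hand, Lemma~\ref{lemma:cover Opt} gives that $\tilde x$ is $(h,w,d)$-locally optimal w.r.t.\ $\tilde\lambda$ in $\mathcal{C}(\tilde G)$, whose minimum local distance is the same $d^*$ as that of $\mathcal{C}(G)$ (covers preserve local codes), so the hypothesis $2\leqslant d\leqslant d^*$ is still met. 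Applying Theorem~\ref{thm:MLsufficient} to the Tanner code $\mathcal{C}(\tilde G)$, we conclude that $\tilde x$ \emph{is} the unique ML codeword w.r.t.\ $\tilde\lambda$ — contradiction. Hence $x$ must be the unique LP-optimal solution, which is the claim.

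The main obstacle I anticipate is bookkeeping around the equivalence of LP and graph-cover decoding in the \emph{generalized} (arbitrary local code) setting: one must be careful that (i)~the statement ``LP decoding fails to uniquely return $x$ $\Rightarrow$ some finite cover has a codeword at least as cheap as $x^{\uparrow M}$'' is precisely the form available from~\cite{VK05,Von10a,Hal12}, handling the normalization/scaling between cover codewords and fundamental-polytope points, and the direction of the inequality (non-strict vs.\ strict) so that ``not unique LP-optimal'' cleanly transfers to ``not unique ML on the cover''; and (ii)~the pseudocodeword one lifts from may have non-integral entries, so one should phrase everything in terms of the polytope point rather than an integral cover word, or invoke the standard fact that any fundamental-polytope vertex is $\tfrac1M$ times the projection of an $M$-cover codeword. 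Once the correct form of the equivalence is pinned down, the rest is the short syllogism above; I would mirror the structure of~\cite[Theorem 8]{HE11} as the excerpt suggests.
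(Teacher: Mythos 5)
Your proposal is correct and follows essentially the same route the paper indicates: it combines Lemma~\ref{lemma:cover Opt} (local optimality lifts to $M$-covers), Theorem~\ref{thm:MLsufficient} applied to the Tanner code on the cover, and the Vontobel--Koetter equivalence of LP decoding and graph-cover decoding to derive a contradiction if $x$ were not the unique LP optimum. This is precisely the reduction the paper references (mirroring~\cite[Theorem 8]{HE11}), and your anticipated bookkeeping points (scaling of polytope vertices vs.\ cover codewords, strict vs.\ non-strict inequalities) are the right ones to watch.
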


\section{Verifying Local Optimality}
\label{subsec:certificateVerifiaction}\label{sec:verify}

In this section we address the problem of how to verify whether a
codeword $x$ is $(h,w,d)$-locally optimal w.r.t. $\lambda$.
By Proposition~\ref{proposition:LOsymmetry}, this is equivalent to verifying
whether $0^N$ is $(h,w,d)$-locally optimal w.r.t. $(-1)^x\ast
\lambda$, where $[(-1)^x]_i\eqdf (-1)^{x_i}$.

The verification algorithm is listed as Algorithm~\ref{alg:verify}. It
applies dynamic programming to find, for every variable node $v$, a
$d$-tree $\calT_v$, rooted at $v$, that minimizes the cost $\langle
(-1)^x\ast \lambda, \pi_{G,\calT_v,w}\rangle$. The algorithm returns
false if and only if it finds a \PNW\ deviation with non-positive
cost.  Note that the verification algorithm only computes the sign of
$\min_\beta \left( \langle \lambda,x\oplus\beta \rangle - \langle
  \lambda,x\rangle \right)$.
Moreover, the sign of $\min_\beta \left( \langle \lambda,x\oplus\beta \rangle - \langle
  \lambda,x\rangle \right)$ is invariant under scaling $\beta$ by any positive constant. Because $\lVert w\rVert_1$ contains a ``global'' information, the division of $\mu_v$ by $\lVert w\rVert_1$ does not take place to maintain the property that the verification algorithm is a
distributed message passing algorithm.

The algorithm is presented as a message passing algorithm. In every
step, a node propagates to its parent the minimum cost of the
$d$-subtree that hangs from it based on the minimum values received
from its children.  The message complexity of
Algorithm~\ref{alg:verify} is $O(|E|\cdot h)$, where $E$ denotes the
edge set of the Tanner graph. Algorithm~\ref{alg:verify} can be
implemented so that the running time of each iteration is:
\begin{inparaenum}[(i)]
\item $O(|E|)$ for the computation of the messages from variable nodes to check nodes, and
\item $O(|E|\cdot \log d)$ for the computation of the messages from check nodes to variable nodes. \end{inparaenum}

The following notation is used in Line~8 of the algorithm. For a set
$S$ of real values, let $\min^{[i]}\{S\}$ denote the $i$th smallest
member in $S$.

  \begin{algorithm}
    \begin{algorithmic}[1]
      \caption{\VERIFY$(x,\lambda,h,w,d)$ - An iterative verification
        algorithm.  Let $G=(\calV \cup \calJ, E)$ denote a Tanner
        graph.  Given an LLR vector $\lambda \in \R^{|\calV|}$, a
        codeword $x\in\calC(G)$, level weights $w \in \R_+^{h}$, and a
        degree $d\in\N_+$, outputs ``\emph{true}'' if $x$ is
        $(h,w,d)$-locally optimal w.r.t. $\lambda$; otherwise, outputs
        ``\emph{false}.''  }
      \label{alg:verify}
      \STATE Initialize:~$\forall v\in\calV$~:~$\lambda_v' \gets
      \lambda_v \cdot (-1)^{x_v}$
      \STATE~~~~~~~~~~~~~~~~~$\forall C\in\calJ$, $\forall v\in\calN(C)$:
      $\mu_{C\rightarrow v}^{(-1)} \gets 0$ \FOR {$l=0$ to $h-1$}
      \FORALL{$v\in\calV$, $C\in\calN(v)$}
      \STATE $\mu_{v\rightarrow C}^{(l)} \gets \frac{w_{h-l}}{\deg_G(v)}\lambda'_v+\frac{1}{\deg_G(v)-1}
      \sum_{C'\in\calN(v)\setminus\{C\}}\mu_{C'\rightarrow v}^{(l-1)}$
      \ENDFOR
      \FORALL{$C\in\calJ$, $v\in\calN(C)$}
      \STATE \mbox{$\mu_{C\rightarrow
        v}^{(l)}\gets\frac{1}{d-1}\cdot
      \sum_{i=1}^{d-1}\min^{[i]}\bigg\{\mu_{v'\rightarrow C}^{(l)}~\bigg\vert~ v'\in\calN(C)\setminus\{v\}\bigg\}$}
      \ENDFOR
      \ENDFOR
      \FORALL{$v\in\calV$} \STATE $\mu_v \gets
      \sum_{C\in\calN(v)}\mu_{C\rightarrow v}^{(h-1)}$ \IF[min-cost
      $w$-weighted $d$-tree rooted at $v$ has non-positive
      value]{$\mu_v \leqslant 0$}
      \RETURN \FALSE;
      \ENDIF
      \ENDFOR
      \RETURN \TRUE;
    \end{algorithmic}
  \end{algorithm}

\section{Message-Passing Decoding with ML Guarantee for \\Irregular LDPC Codes} \label{sec:NWMS}

In this section we present a weighted min-sum decoder (called, \NWMS)
for irregular Tanner codes with single parity-check local codes over any MBIOS
channel. In Section~\ref{sec:nwms proof} we prove that the decoder computes the ML codeword if
a locally-optimal codeword exists (Theorem~\ref{thm:NWMS}). Note that
Algorithm \NWMS\ is not presented as a min-sum process.  However, in
Section~\ref{sec:nwms proof}, an equivalent min-sum version is presented.

We deal with Tanner codes based on Tanner graphs
$G=(\calV\cup\calJ,E)$ with single parity-check local codes. Local-code
nodes $C \in \calJ$ in this case are called \emph{check nodes}.  The graph $G$ may be either regular or irregular.
All the results in this section hold for every Tanner graph,
regardless of its girth, degrees, or density.

A huge number of works deal with message-passing decoding algorithms. We point out three works that can be viewed as precursors to our decoding algorithm. Gallager~\cite{Gal63} presented the sum-product iterative decoding algorithm for
LDPC codes.  Tanner~\cite{Tan81} viewed iterative decoding algorithms
as message-passing iterative algorithms over the edges of the Tanner graph.
Wiberg~\cite{Wib96} characterized decoding failures of the min-sum
iterative decoding algorithm by negative cost trees.  Message-passing
decoding algorithms proceed by iterations of ``ping-pong'' messages
between the variable nodes and the local-code nodes in the Tanner
graph.  These messages are sent along the edges.

\paragraph{Algorithm description.}
Algorithm \NWMS$(\lambda,h,w)$, listed as
Algorithm~\ref{alg:weighted-min-sum}, is a normalized $w$-weighted
version of the min-sum decoding algorithm for decoding Tanner codes
with single parity-check local codes. The input to algorithm \NWMS\
consists of an LLR vector $\lambda\in \R^N$, an integer $h>0$ that
determines the number of iterations, and a nonnegative weight vector
$w\in \R_+^h\setminus\{0^h\}$.  For each edge $(v,C)$, each iteration consists of one
message from the variable node $v$ to the check node $C$ (that is, the
``ping'' message), and one message from $C$ to $v$ (that is, the
``pong'' message). Hence, the message complexity of
Algorithm~\ref{alg:weighted-min-sum} is $O(|E|\cdot h)$. (It can be
implemented so that the running time is also $O(|E|\cdot h)$).

Let $\mu_{v\rightarrow C}^{(l)}$ denote the ``ping'' message from a
variable node $v \in \calV$ to an adjacent check node $C \in \calJ$ in
iteration $l$ of the algorithm. Similarly, let $\mu_{C\rightarrow
  v}^{(l)}$ denote the ``pong'' message from $C \in \calJ$ to $v \in
\calV$ in iteration $l$.  Denote by $\mu_v$ the final value computed
by variable node $v \in \calV$. Note that the \NWMS\ decoding algorithm
does not add $w_0\lambda_v$ in the computation of $\mu_v$ in Line~11
for ease of presentation\footnote{Adding $w_0\lambda_v$ to $\mu_v$ in
  Line~11 requires changing the definition of \PNW\ deviations so that
  they also include the root of each $d$-tree.}.  The output of the
algorithm $\hat{x}\in \{0,1\}^N$ is computed locally by each variable
node in Line~12. In the case where $\mu_v=0$ we chose to assign $x_v=1$ for ease of presentation. However, one can choose to assign $x_v$ with either a `0' or a `1' with equal probability. Algorithm~\NWMS\ may be applied to any MBIOS channel
(e.g., BEC, BSC, AWGN, etc.) because the input is the LLR
vector\footnote{In the case of a BEC, the LLR vector $\lambda$ is in
  $\{+\infty,-\infty,0\}^N$.  In this case, all the messages in
  Algorithm~\ref{alg:weighted-min-sum} are in the set
  $\{-\infty,0,+\infty\}$. The arithmetic over this set is the
  arithmetic of the \emph{affinely extended real number system} (e.g., for a
  real $a$, $\pm\infty + a = \pm\infty$, etc.).  Under such
  arithmetic, there is no need to assign weights to the LLR value and
  the incoming messages in the computation of variable-to-check
  messages in Line~\ref{line:variable-to-check}. Notice that $+\infty$
  is never added to $-\infty$ since a BEC may only erase bits and can
  not flip any bit.  Therefore, all computed messages in
  Algorithm~\ref{alg:weighted-min-sum} are equal to either $\pm\infty$ or
  $0$.}.

\begin{algorithm}
\begin{algorithmic}[1]
\caption{\NWMS$(\lambda,h,w)$ - An iterative normalized weighted min-sum decoding algorithm.
Given an LLR vector $\lambda \in \R^N$ and level weights $w \in \R_+^{h}\setminus\{0^h\}$, outputs a binary string
$\hat{x} \in \{0,1\}^N$.
}
\label{alg:weighted-min-sum}
\STATE Initialize: $\forall C\in\calJ$, $\forall v\in\calN(C):$~~
$\mu_{C\rightarrow v}^{(-1)} \gets 0$
\FOR {$l=0$ to $h-1$}
\FORALL[``PING'']{$v\in\calV$, $C\in\calN(v)$}
\STATE $\mu_{v\rightarrow C}^{(l)} \gets \frac{w_{h-l}}{\deg_G(v)}\lambda_v+\frac{1}{\deg_G(v)-1} \sum_{C'\in\calN(v)\setminus\{C\}}\mu_{C'\rightarrow v}^{(l-1)}$ \label{line:variable-to-check}
\ENDFOR
\FORALL[``PONG'']{$C\in\calJ$, $v\in\calN(C)$}
\STATE $\mu_{C\rightarrow v}^{(l)} \gets \bigg(\prod_{u\in\calN(C)\setminus\{v\}}\textrm{sign}\big(\mu_{u\rightarrow C}^{(l)}\big)\bigg)\cdot\min\bigg\{\lvert\mu_{u\rightarrow C}^{(l)}\rvert~\bigg\vert~{u\in\calN(C)\setminus\{v\}}\bigg\}$ \label{line:check-to-variable}
\ENDFOR
\ENDFOR
\FORALL[Decision]{$v\in\calV$}
\STATE $\mu_v \gets \sum_{C\in\calN(v)}\mu_{C\rightarrow v}^{(h-1)}$
\STATE
$\hat{x}_v \gets \begin{cases} 0 & \mathrm{if\ }\mu_v>0,\\ 1 & \mathrm{otherwise}. \end{cases}$
\ENDFOR
\end{algorithmic}
\end{algorithm}

The upcoming Theorem~\ref{thm:NWMS} states that \NWMS$(\lambda,h,w)$ computes an
$(h,w,d\!=\!2)-$locally optimal codeword w.r.t. $\lambda$ if such a codeword
exists. Hence, Theorem~\ref{thm:NWMS} provides a sufficient condition for successful iterative decoding of the ML codeword for any finite number $h$ of iterations. In particular, the number of iterations may exceed (any function of) the girth.
Theorem~\ref{thm:NWMS} implies an alternative proof of the uniqueness of an $(h,w,d\!=\!2)$-locally optimal codeword that is proved in Theorem~\ref{thm:MLsufficient}. The proof appears
in Section~\ref{Sec:proofMPsufficient}.

\begin{theorem}[\NWMS\ decoding algorithm finds the locally optimal codeword] \label{thm:MPsufficient}\label{thm:NWMS}
Let $G=(\calV\cup\calJ,E)$ denote a Tanner graph and let $\calC(G)
\subset \{0,1\}^N$ denote the corresponding Tanner code with
single parity-check local codes. Let $h \in \mathds{N}_+$ and let $w \in
\R_+^h\setminus\{0^h\}$ denote a non-negative weight vector. Let $\lambda \in \R^N$
denote the LLR vector of the channel output. If $x \in \calC(G)$ is an
$(h,w,d\!=\!2)$-locally optimal codeword w.r.t. $\lambda$, then \NWMS$(\lambda,h,w)$ outputs $x$.
\end{theorem}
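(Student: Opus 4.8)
The plan is to establish a precise correspondence between the messages computed by \NWMS\ after $h$ iterations and the costs of $w$-weighted $2$-trees in the computation tree, and then leverage $(h,w,d\!=\!2)$-local optimality to force the output to equal $x$. First I would invoke the symmetry of local optimality (Proposition~\ref{proposition:LOsymmetry}): it suffices to prove the claim for the all-zero codeword $0^N$ w.r.t. $(-1)^x\ast\lambda$, since \NWMS\ is ``channel-symmetric'' in the sense that replacing $\lambda_v$ by $(-1)^{x_v}\lambda_v$ and flipping the corresponding output bits is a bijection on executions; one checks this by an easy induction over iterations on the PING/PONG recursions (the sign-product structure in Line~\ref{line:check-to-variable} is exactly what makes this work). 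So assume $x=0^N$ is $(h,w,2)$-locally optimal w.r.t. $\lambda$, and we must show $\mu_v>0$ for every $v\in\calV$, so that $\hat x_v=0$.

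Next I would set up the min-sum interpretation. The key identity is that for every variable node $v$,
\[
\mu_v \;=\; \|w\|_1 \cdot \min_{\calT\in\calT[v,2h,2](G)} \big\langle \lambda,\, \pi_{G,\calT,w}\big\rangle,
\]
up to the harmless omission of the $w_0\lambda_v$ root term noted in the text (with $d=2$, the $\frac{1}{d-1}$ and $\frac{1}{\deg_G(q)-1}$ factors match the weight definition in Definition~\ref{def:weightedSubtree}, and the $\min$ over the two available children-costs in the check-to-variable message realizes the ``pick the cheaper branch'' choice defining a $2$-tree). I would prove this by induction on $l$: show that $\mu_{v\rightarrow C}^{(l)}$ equals $\|w\|_1$ times the minimum cost of the partial weighted $2$-subtree of height $2l{+}1$ hanging below the edge $(v,C)$ in the computation tree rooted appropriately, and analogously for $\mu_{C\rightarrow v}^{(l)}$. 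The base case is the initialization $\mu_{C\rightarrow v}^{(-1)}=0$; the inductive step for PING is a direct substitution into Line~\ref{line:variable-to-check}, and for PONG one observes that over a single parity-check node a minimal-cost deviation extends into exactly one neighbor, and by the sign/absolute-value decomposition the cheapest such extension has value $\big(\prod \mathrm{sign}\big)\cdot\min|\cdot|$, which is exactly Line~\ref{line:check-to-variable}. Note here the verification algorithm~\ref{alg:verify} already performs essentially this computation for $d=2$ (where $\min^{[1]}$ is just $\min$), so the two algorithms compute the same quantities; I would remark on this to avoid redoing the bookkeeping.

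Given the identity, the conclusion is immediate: by Lemma~\ref{lemma:isoLO} with $x=0^N$ we have $\langle\lambda,0^N\oplus\beta\rangle-\langle\lambda,0^N\rangle=\langle\lambda,\beta\rangle$, and $(h,w,2)$-local optimality of $0^N$ says $\langle\lambda,\beta\rangle>0$ for every $\beta\in\calB_2^{(w)}$, i.e. for every scaled projection $\tfrac1c\pi_{G,\calT,w}$. Since scaling by the positive constant $c$ does not change the sign, $\langle\lambda,\pi_{G,\calT,w}\rangle>0$ for every $2$-tree $\calT$ rooted at every variable node $v$; taking the minimum over $\calT\in\calT[v,2h,2](G)$ and multiplying by $\|w\|_1>0$ gives $\mu_v>0$. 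Hence $\hat x_v=0$ for all $v$, so \NWMS$(\lambda,h,w)$ outputs $0^N=x$, and by Proposition~\ref{proposition:LOsymmetry} the general case follows.

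The main obstacle is the inductive min-sum identity — specifically, being careful that the recursions track costs of \emph{partial} $2$-subtrees consistently across the PING and PONG half-iterations (keeping the index bookkeeping straight between $w_{h-l}$ at level $\ell=\lceil|p|/2\rceil$ and the iteration counter $l$), and verifying that the flooding schedule of \NWMS\ genuinely enumerates all $2$-trees in $\calT_v^{2h}(G)$ rather than only those of height below the girth — this is where the ``beyond the girth'' point lives, and it is handled by working entirely in the path-prefix (computation) tree $\calT_v^{2h}(G)$, where the recursion is exact by construction regardless of cycles in $G$. Everything else is routine.
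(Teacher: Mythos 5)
Your high-level plan --- reduce to the all-zero codeword by the symmetry of \NWMS, then connect the final messages $\mu_v$ to costs of $w$-weighted $2$-trees in $\calT_v^{2h}(G)$, and conclude from local optimality --- is the same as the paper's. However, the key identity you propose,
\begin{equation*}
\mu_v \;=\; \lVert w\rVert_1\cdot \min_{\calT\in\calT[v,2h,2](G)}\bigl\langle\lambda,\pi_{G,\calT,w}\bigr\rangle,
\end{equation*}
is \emph{false}, and the inductive step you sketch for the PONG half-iteration is where it breaks. You assert that ``the cheapest single-branch extension has value $\bigl(\prod\mathrm{sign}\bigr)\cdot\min\lvert\cdot\rvert$.'' This is not what the sign-min rule computes. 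Consider a check node $C$ with incoming messages $\mu_{v_2\to C}=-5$ and $\mu_{v_3\to C}=-3$. The cheapest single-branch extension (the $2$-tree choice) has cost $\min(-5,-3)=-5$, but Line~\ref{line:check-to-variable} of Algorithm~\ref{alg:weighted-min-sum} returns $(-1)(-1)\min(5,3)=+3$. The discrepancy occurs precisely when two or more incoming branch-reliabilities are negative: the min-sum update then implicitly allows \emph{several} of those branches to be simultaneously set to ``1'' in the $0$-rooted hypothesis, which is a valid parity assignment but is not a $2$-tree. Your related remark that \VERIFY\ and \NWMS\ ``compute the same quantities'' for $d=2$ is also incorrect for the same reason --- \VERIFY's check update is $\min$, not $\mathrm{sign}\text{-product}\cdot\min\lvert\cdot\rvert$; the two algorithms are genuinely different.

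What is actually true, and what the paper proves, is that $\mu_v = \calW^{\min}(\calT_v^{2h},1)-\calW^{\min}(\calT_v^{2h},0)$ where the minima range over \emph{all valid configurations} of the parity-check computation tree (Lemma~\ref{lemma:NWMS2dp} and Corollary~\ref{cor:NWMSdp}, via the two-hypothesis reformulation \NWMS2). The support of the minimizing $1$-rooted configuration $z^*$ is a forest that merely \emph{contains} a $2$-tree $\calT$; writing $z^* = \tau + z^0$ with $\tau$ the indicator of $\calT$ and $z^0$ the residual, one checks that $z^0$ is again a valid ($0$-rooted) configuration, and linearity and disjointness give $\calW^*(z^*)=\calW^*(\tau)+\calW^*(z^0)$, hence $\calW^*(\tau)\leqslant 0$ by optimality of $z^*$ (Lemma~\ref{lemma:negDevExists}). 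So the $2$-tree whose cost you need to bound is not the $\argmin$ but a \emph{piece} of it, extracted by the decomposition-and-cancellation step; the $\mu_v$-to-$2$-tree link is an inequality, obtained after this extraction, not the equality you assumed. Your proof as written does not supply this decomposition, and the induction you outline would produce incorrect invariants at the check nodes, so the gap is genuine.

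Everything before the PONG step (the symmetry reduction via Proposition~\ref{proposition:LOsymmetry}, the PING recursion, the final appeal to Definition~\ref{def:localOptimality} once a nonpositive $2$-tree cost is exhibited) is fine and matches the paper. To repair the argument you would either need to adopt the valid-configuration DP of \NWMS2 and the forest-decomposition argument of Lemma~\ref{lemma:negDevExists}, or, alternatively, replace the claimed equality by the one-sided bound $\mu_{C\to v}^{(l)} \geqslant \min_{u\in\calN(C)\setminus\{v\}}\mu_{u\to C}^{(l)}$ (which does hold, by a short sign case analysis) and propagate an inequality rather than an equality through the induction --- but that is a materially different argument than what you wrote.
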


The message-passing algorithm \VERIFY\ (Algorithm~\ref{alg:verify}) described in
Section~\ref{subsec:certificateVerifiaction} can be used to verify
whether \NWMS$(\lambda,h,w)$ outputs the $(h,w,d\!=\!2)$-locally optimal
codeword w.r.t. $\lambda$.
If there exists $(h,w,d\!=\!2)$-locally optimal
codeword w.r.t. $\lambda$, then, by Theorem~\ref{thm:MLsufficient} and Theorem~\ref{thm:NWMS}, it holds that:
\begin{inparaenum}[(i)]
\item the output of \NWMS$(\lambda,h,w)$ is the unique ML codeword, and
\item algorithm \VERIFY\ returns \emph{true} for the decoded codeword.
\end{inparaenum}
If no $(h,w,d\!=\!2)$-locally optimal
codeword exists w.r.t. $\lambda$, then algorithm \VERIFY\ returns \emph{false} for every input codeword.
We can therefore obtain a message-passing decoding algorithm with an ML certificate obtained by local optimality by using Algorithms~\ref{alg:verify} and~\ref{alg:weighted-min-sum} as follows.

\ignore{Theorem~\ref{thm:MPsufficient} states that if an $(h,w,d\!=\!2)$-locally optimal codeword exists, then the \NWMS\ decoding algorithm is guaranteed to find it.}

Algorithm \certNWMS$(\lambda,h,w)$, listed as Algorithm~\ref{alg:certifiedNWMS}, is an ML-certified version of the \NWMS\ decoding algorithm.
The input to algorithm \certNWMS\ consists of an LLR vector $\lambda\in \R^N$, an integer $h>0$ that determines the number of iterations, and a nonnegative weight vector
$w\in \R_+^h\setminus\{0^h\}$.
If the \certNWMS\ decoding algorithm returns a binary word, then it is guaranteed to be the unique ML codeword w.r.t. $\lambda$. Otherwise, \certNWMS\ declares a failure to output an ML-certified codeword. The message complexity of
Algorithm~\ref{alg:weighted-min-sum} is $O(|E|\cdot h)$. (It can be
implemented so that the running time is also $O(|E|\cdot h)$).

\begin{algorithm}
\begin{algorithmic}[1]
\caption{\certNWMS$(\lambda,h,w)$ - An iterative normalized weighted min-sum decoding algorithm with an ML-certified output based on local optimality.
Given an LLR vector $\lambda \in \R^N$ and level weights $w \in \R_+^{h}\setminus\{0^h\}$, outputs the ML codeword $\hat{x} \in \{0,1\}^N$ w.r.t. $\lambda$ or a ``failure''.
}
\label{alg:certifiedNWMS}
\STATE $x \gets \NWMS(\lambda,h,w)$
\IF{$x$ is a codeword}
\IF[$x$ is $(h,w,2)$-locally optimal w.r.t. $\lambda$]{$\VERIFY(x,\lambda,h,w,2)=\TRUE$}
      \RETURN $x$;
\ENDIF
\ENDIF
\RETURN failure;
\end{algorithmic}
\end{algorithm}

\begin{remark}
Local optimality is a sufficient condition for ML.
In case that there is no $(h,w,d\!=\!2)$-locally optimal codeword w.r.t. $\lambda$, then the binary word that the \NWMS\ decoding algorithm outputs may be an ML codeword. Note however, that the \certNWMS\ decoding algorithm declares a failure to output an ML-certified codeword in this case.
In the case where a locally-optimal codeword exists, then both \NWMS\ decoding algorithm and \certNWMS\ decoding algorithm are guaranteed to output this codeword, which is the unique ML codeword w.r.t. $\lambda$.
\end{remark}

\subsection{Symmetry of \NWMS\ Decoding Algorithm and the All-Zero Codeword Assumption}\label{subsec:symmetry}

We define symmetric decoding algorithms (see~\cite[Definition 4.81]{RU08} for
a discussion of symmetry in message passing algorithms).

\begin{definition}[symmetry of decoding algorithm] \label{def:symmetricDec}
Let $x\in\calC$ denote a codeword and let $(-1)^x\in\{\pm1\}^N$ denote the vector whose $i$th component equals $(-1)^{x_i}$. Let $\lambda$ denote an LLR vector. A decoding algorithm, \DEC$(\lambda)$, is \emph{symmetric} w.r.t. code $\calC$, if
\begin{equation}
\forall x\in\calC.\ \ x\oplus\DEC(\lambda) = \DEC\big((-1)^x\ast\lambda\big).
\end{equation}
\end{definition}

\noindent
The following lemma states that the \NWMS\ decoding algorithm is symmetric. The
proof is by induction on the number of iterations.
\begin{lemma}[symmetry of \NWMS] \label{lemma:NWMSsymmetry}
Fix $h\in \N_+$ and $w \in \R_+^N$. Consider $\lambda \in \R^N$ and a codeword $x \in \calC(G)$. Then,
\begin{equation}
x\oplus \NWMS(\lambda,h,w) = \NWMS\big((-1)^x\ast\lambda,h,w\big).
\end{equation}
\end{lemma}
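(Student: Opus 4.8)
The plan is to prove the symmetry relation by induction on the iteration index $l$, tracking how the messages of $\NWMS(\lambda,h,w)$ and $\NWMS((-1)^x\ast\lambda,h,w)$ relate to each other. Concretely, I claim that for every edge $(v,C)$ and every iteration $0\leqslant l\leqslant h-1$,
\[
\mu_{v\rightarrow C}^{(l)}\big((-1)^x\ast\lambda\big) = (-1)^{x_v}\cdot\mu_{v\rightarrow C}^{(l)}(\lambda),
\qquad
\mu_{C\rightarrow v}^{(l)}\big((-1)^x\ast\lambda\big) = (-1)^{x_v}\cdot\mu_{C\rightarrow v}^{(l)}(\lambda).
\]
The base case is the initialization in Line~1, where all check-to-variable messages $\mu_{C\rightarrow v}^{(-1)}$ are $0$, so the identity holds trivially.

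For the induction step, first consider the ``PING'' update in Line~\ref{line:variable-to-check}. Substituting $(-1)^x\ast\lambda$ for $\lambda$ replaces $\lambda_v$ by $(-1)^{x_v}\lambda_v$, and by the induction hypothesis replaces each incoming message $\mu_{C'\rightarrow v}^{(l-1)}$ by $(-1)^{x_v}\mu_{C'\rightarrow v}^{(l-1)}$; since every term in the linear combination defining $\mu_{v\rightarrow C}^{(l)}$ carries exactly one factor $(-1)^{x_v}$, the whole expression is multiplied by $(-1)^{x_v}$, as claimed. Next consider the ``PONG'' update in Line~\ref{line:check-to-variable}. By the (just-established) induction hypothesis for the ping messages, $\mu_{u\rightarrow C}^{(l)}((-1)^x\ast\lambda) = (-1)^{x_u}\mu_{u\rightarrow C}^{(l)}(\lambda)$ for every $u\in\calN(C)\setminus\{v\}$, so the absolute values $|\mu_{u\rightarrow C}^{(l)}|$ are unchanged (hence the $\min$ is unchanged), while $\operatorname{sign}(\mu_{u\rightarrow C}^{(l)}((-1)^x\ast\lambda)) = (-1)^{x_u}\operatorname{sign}(\mu_{u\rightarrow C}^{(l)}(\lambda))$. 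Therefore the product of signs over $u\in\calN(C)\setminus\{v\}$ picks up a factor $\prod_{u\in\calN(C)\setminus\{v\}}(-1)^{x_u}$, and since $x\in\calC(G)$ satisfies the single parity-check local code at $C$, we have $\sum_{u\in\calN(C)}x_u\equiv 0\pmod 2$, i.e. $\prod_{u\in\calN(C)\setminus\{v\}}(-1)^{x_u} = (-1)^{x_v}$. Hence $\mu_{C\rightarrow v}^{(l)}((-1)^x\ast\lambda) = (-1)^{x_v}\mu_{C\rightarrow v}^{(l)}(\lambda)$, completing the induction.

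Finally, I would apply this at the decision stage. In Line~11, $\mu_v((-1)^x\ast\lambda) = \sum_{C\in\calN(v)}\mu_{C\rightarrow v}^{(h-1)}((-1)^x\ast\lambda) = (-1)^{x_v}\mu_v(\lambda)$. If $x_v=0$ the decision rule in Line~12 is literally unchanged, so $\hat{x}_v((-1)^x\ast\lambda)=\hat{x}_v(\lambda) = x_v\oplus\hat{x}_v(\lambda)$. If $x_v=1$ then $\mu_v((-1)^x\ast\lambda) = -\mu_v(\lambda)$, so $\mu_v((-1)^x\ast\lambda)>0$ exactly when $\mu_v(\lambda)<0$; tracing the cases (and noting the tie-breaking convention assigns $1$ when $\mu_v=0$, consistently on both sides) gives $\hat{x}_v((-1)^x\ast\lambda) = 1-\hat{x}_v(\lambda) = x_v\oplus\hat{x}_v(\lambda)$. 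Combining both cases yields $\hat{x}_v((-1)^x\ast\lambda) = (x\oplus\hat{x}(\lambda))_v$ for all $v$, which is the assertion. The only subtle point—the place where I would be most careful—is the sign-flip bookkeeping in the PONG step: it is precisely the parity constraint $x_{\calV_C}\in\calCbar^C$ (single parity-check) that converts the product $\prod_{u\neq v}(-1)^{x_u}$ into the desired $(-1)^{x_v}$, and one must also double-check that the $\mu_v=0$ tie-break does not break symmetry (it does not, since $0 = -0$).
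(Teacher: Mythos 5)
Your proof is essentially the paper's own proof: the same induction on the iteration index, with the PING update propagating the factor $(-1)^{x_v}$ linearly and the PONG update converting $\prod_{u\in\calN(C)\setminus\{v\}}(-1)^{x_u}$ into $(-1)^{x_v}$ via the single parity-check constraint. One caveat: your parenthetical claim that the $\mu_v=0$ tie-break ``does not break symmetry'' is not quite right. If $\mu_v(\lambda)=0$ and $x_v=1$, then $\mu_v\big((-1)^x\ast\lambda\big)=-\mu_v(\lambda)=0$ as well, so \emph{both} executions output $\hat{x}_v=1$, whereas the asserted identity requires the outputs to differ; the exact equality therefore fails on this boundary event. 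The paper shares this blemish (its proof skips the decision step entirely) and implicitly patches it by remarking that ties may instead be broken uniformly at random, in which case the symmetry holds in distribution.
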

\begin{proof}
See Appendix~\ref{app:NWMSSymmetryProof}.
\end{proof}

\noindent The following corollary follows from Lemma~\ref{lemma:NWMSsymmetry}
and the symmetry of an MBIOS channel.
\begin{corollary}[All-zero codeword assumption] \label{corollary:NWMSzeroAssumption}
Fix $h\in \N_+$ and $w \in \R_+^N$. For MBIOS channels, the probability that the \NWMS\ decoding algorithm fails to decode the transmitted codeword is independent of the transmitted codeword itself. That is,
\[\Pr\{\mathrm{\NWMS~fails}\} = \Pr\big\{\NWMS(\lambda,h,w)\neq 0^N~\vert~c=0^N\}.\]
\end{corollary}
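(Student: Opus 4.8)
The plan is to derive Corollary~\ref{corollary:NWMSzeroAssumption} from the symmetry of the \NWMS\ decoding algorithm (Lemma~\ref{lemma:NWMSsymmetry}) together with the channel symmetry of an MBIOS channel, by a standard change-of-variables / coupling argument on the channel output. First I would fix an arbitrary transmitted codeword $c\in\calC(G)$ and condition on $c$ being sent. Writing $\lambda=\lambda(y)$ for the LLR vector of the received word $y$, the event ``\NWMS\ fails'' conditioned on $c$ is $\{\NWMS(\lambda,h,w)\neq c\}$. By Lemma~\ref{lemma:NWMSsymmetry}, $c\oplus\NWMS(\lambda,h,w)=\NWMS\big((-1)^c\ast\lambda,h,w\big)$, and since $x\mapsto c\oplus x$ is a bijection on $\{0,1\}^N$ fixing $0^N$ iff $x=c$, we get the equivalence of events
\begin{equation*}
\{\NWMS(\lambda,h,w)\neq c\} = \{\NWMS\big((-1)^c\ast\lambda(y),h,w\big)\neq 0^N\}.
\end{equation*}

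Next I would invoke the MBIOS channel symmetry $f(y_i\mid c_i)=f\big((-1)^{c_i}y_i\mid 0\big)$, which gives that, conditioned on $c$ being transmitted, the random vector $(-1)^c\ast y$ has exactly the same distribution as the received word conditioned on $0^N$ being transmitted. Because the LLR map is applied coordinatewise and satisfies $\lambda_i\big((-1)^{c_i}y_i\big)=(-1)^{c_i}\lambda_i(y_i)$ (odd symmetry of the LLR for an MBIOS channel, from $f(y_i\mid 0)=f(-y_i\mid 1)$), we have $(-1)^c\ast\lambda(y)=\lambda\big((-1)^c\ast y\big)$. Hence, conditioned on $c$, the LLR vector $(-1)^c\ast\lambda(y)$ is distributed identically to $\lambda$ conditioned on $0^N$ being transmitted. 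Plugging this distributional identity into the event equivalence above yields
\begin{equation*}
\Pr\{\NWMS(\lambda,h,w)\neq c \mid c\ \mathrm{transmitted}\} = \Pr\{\NWMS(\lambda,h,w)\neq 0^N \mid c=0^N\},
\end{equation*}
which is independent of $c$; taking this as the definition of $\Pr\{\mathrm{\NWMS\ fails}\}$ (or averaging over a prior on transmitted codewords) gives the claim.

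I expect the only real subtlety — rather than an obstacle — to be making precise the MBIOS ``flip'' argument: namely that the componentwise transformation $y\mapsto(-1)^c\ast y$ carries the conditional law of the channel output given $c$ onto the conditional law given $0^N$, and that this transformation commutes with the LLR map up to the sign change $(-1)^c$. This is exactly the content of \cite[Definition 4.81 / Lemma 4.82]{RU08} on symmetric channels, and in continuous-alphabet cases amounts to a one-line change of variables in the density $f(\cdot\mid\cdot)$ (with the discrete-alphabet case, e.g.\ BEC with $\lambda\in\{+\infty,-\infty,0\}^N$, handled the same way under the convention that $-1\ast(\pm\infty)=\mp\infty$ and $-1\ast 0=0$). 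Everything else is a routine bookkeeping combination of Lemma~\ref{lemma:NWMSsymmetry} with this distributional identity, so I would keep the write-up short and defer the channel-symmetry bookkeeping to the cited reference.
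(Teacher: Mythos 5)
Your proposal is correct and follows essentially the same two-step route as the paper's own proof: first apply Lemma~\ref{lemma:NWMSsymmetry} to rewrite the failure event $\{\NWMS(\lambda,h,w)\neq c\}$ as $\{\NWMS((-1)^c\ast\lambda,h,w)\neq 0^N\}$, then use the MBIOS symmetry to argue that the mapping $(c,\lambda)\mapsto(0^N,(-1)^c\ast\lambda)$ preserves the conditional law. Your write-up spells out the change-of-variables on $y$ and the odd symmetry of the LLR map slightly more explicitly than the paper does, but it is the same argument.
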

\begin{proof}
Following Lemma~\ref{lemma:NWMSsymmetry}, for every codeword $x$,
\begin{align*}
\Pr\big\{\NWMS&(\lambda,h,w)\neq x~\big\vert~c=x\big\}=\Pr\big\{\NWMS\big((-1)^x\ast\lambda,h,w\big)\neq 0^N~\big\vert~c=x\big\}.
\end{align*}
For MBIOS channels, $f(\lambda_i\mid c_i=0)=f(-\lambda_i\mid c_i=1)$. Therefore, the mapping $(x,\lambda)\mapsto(0^N,(-1)^x\ast\lambda)$ preserves the probability measure. We apply this mapping to $(x,(-1)^x\ast\lambda)\mapsto(0^N,(-1)^x\ast(-1)^x\ast\lambda)$ and conclude that
\begin{align*}
\Pr\big\{\NWMS\big((-1)^x\ast&\lambda,h,w\big)\neq 0^N~\big\vert~c=x\big\}=\Pr\big\{\NWMS(\lambda,h,w)\neq 0^N~\big\vert~c=0^N\big\}.
\end{align*}
\end{proof}

Following the contra-positive of Theorem~\ref{thm:MPsufficient} and Corollary~\ref{corollary:NWMSzeroAssumption}, provided that the channel is symmetric, for a fixed $h$ and $w \in \R_+^h\backslash\{0^h\}$, we have
\begin{align}\label{eqn:MPfailureBound}
\Pr\bigg\{\mathrm{NWMS}(\lambda,h,w)~\mathrm{fails}\bigg\}\leqslant\Pr\bigg\{\exists \beta \in \mathcal{B}_2^{(w)}~\mathrm{s.t.}~\langle \lambda, \beta \rangle\leqslant0~\bigg\vert~c=0^N\bigg\}.
\end{align}

Bounds on the existence of a non-positive \PNW\ deviation (i.e., the right-hand side in Equation~(\ref{eqn:MPfailureBound})) are discussed in Section~\ref{subsec:DiscussionNWMS}.

\subsection{Proof of Theorem~\ref{thm:MPsufficient} -- \NWMS\ Decoding Algorithm Finds the Locally Optimal Codeword}
\label{Sec:proofMPsufficient} \label{sec:nwms proof}

\begin{proof}[Proof outline]
The proof of Theorem~\ref{thm:MPsufficient} is based on two observations.
\begin{compactenum}[(i)]
\item We present an equivalent algorithm, called \NWMS2 (Section~\ref{subsec:NWMS2}), and prove that Algorithm \NWMS2 outputs the all-zero codeword if $0^N$ is locally optimal (Sections~\ref{subsec:NWMS2-DP}--\ref{subsec:LO}).
\item In Lemma~\ref{lemma:NWMSsymmetry} we proved that the \NWMS\ decoding algorithm is symmetric. This symmetry is w.r.t. the mapping of a pair $(x,\lambda)$ of a codeword and an LLR vector to a pair $(0^N,\lambda^0)$ of the all-zero codeword and a corresponding LLR vector $\lambda^0\triangleq(-1)^{x}\ast\lambda$ (recall that ``$\ast$'' denotes a coordinate-wise vector multiplication).
\end{compactenum}

To prove Theorem~\ref{thm:MPsufficient}, we prove the contrapositive statement, that is, if $x\neq\NWMS(\lambda,h,w)$, then $x$ is not $(h,w,d\!=\!2)$-locally optimal w.r.t. $\lambda$.
Let $x$ denote a codeword, and let $(-1)^x$ denote the vector whose $i$th component equals $(-1)^{x_i}$. Define $\lambda^0\triangleq (-1)^x\ast\lambda$. By definition $\lambda = (-1)^x\ast\lambda^0$.

The proof is obtained by the following derivations.
Because $x\neq\NWMS(\lambda,h,w)$, it follows by Lemma~\ref{lemma:NWMSsymmetry} (symmetry of \NWMS) that
$x\neq x\oplus\NWMS(\lambda^0,h,w)$, and hence $0^N\neq\NWMS(\lambda^0,h,w)$. By the upcoming Lemma~\ref{lemma:negDevExists}, $0^N$ is not $(h,w,2)$-locally optimal w.r.t. $\lambda^0$. Because $\lambda = (-1)^x\ast\lambda^0$, it follows by Proposition~\ref{proposition:LOsymmetry} that $x$ is not $(h,w,2)$-locally optimal w.r.t. $\lambda$ as required.
\end{proof}
We are left to prove Lemma~\ref{lemma:negDevExists} used in the foregoing proof.

\subsubsection{NWMS2 : An Equivalent Version} \label{subsec:NWMS2}

The input  to Algorithm \NWMS\ includes
the LLR vector $\lambda$. We refer to this algorithm as a min-sum
decoding algorithm in light of the general description of Wiberg~\cite{Wib96} in the log-domain.
In Wiberg's description, every check node finds a minimum value from a
set of functions on the incoming messages, and every variable node
computes the sum of the incoming messages and its corresponding
channel observation. Hence the name min-sum.

Let $y\in \R^N$ denote channel observations. For $a\in \{0,1\}$,
define the log-likelihood of $y_i$ by $\lambda_{i}(a) \triangleq
-\log\big(f(y_i|c_i=a)\big)$. Note that the log-likelihood ratio
$\lambda_i$ for $y_i$ equals $\lambda_i(1) - \lambda_i(0)$.
For $a\in\{0,1\}$, let $\lambda(a)\in\R^N$ denote the log-likelihood vector whose $i$th component equals $\lambda_i(a)$.

Algorithm \NWMS2$(\lambda(0),\lambda(1),h,w)$, listed as
Algorithm~\ref{alg:weighted-min-sum2}, is a normalized $w$-weighted
min-sum decoding algorithm.  Algorithm \NWMS2 computes separate
reliabilities for ``$0$'' and ``$1$''. Namely, $\mu_{v\rightarrow
  C}^{(l)}(a)$ and $\mu_{C\rightarrow v}^{(l)}(a)$ denote the messages
corresponding to the assumption that node $v$ is assigned the value
$a$ (for $a\in\{0,1\}$). The higher the values of these messages, the lower the likelihood of the event $x_v=a$.

The main difference between the presentations of Algorithm~\ref{alg:weighted-min-sum} and Algorithm~\ref{alg:weighted-min-sum2} is in Line~7.
Consider a check node $C$ and valid assignment $x\in\{0,1\}^{\deg(C)}$ to variable nodes adjacent to $C$ with even weight.
For every such assignment $x$ in which $x_v=a$, the check node $C$ computes the sum of the incoming messages $\mu_{u\rightarrow C}^{(l)}(x_u)$ from the neighboring nodes $u\in\calN(C)\setminus\{v\}$. The message $\mu_{C\rightarrow v}^{(l)}(a)$ equals the minimum value over these valid summations.

\begin{algorithm}
\begin{algorithmic}[1]
\caption{NWMS2$(\lambda(0),\lambda(1),h,w)$ - An iterative normalized weighted min-sum decoding algorithm.
Given log-likelihood vectors $\lambda(a) \in \R^N$ for $a\in\{0,1\}$ and level weights $w \in \R_+^{h}\setminus\{0^h\}$, outputs a binary string
$\hat{x} \in \{0,1\}^N$.
}
\label{alg:weighted-min-sum2}
\STATE Initialize: $\forall C\in\calJ$, $\forall v\in\calN(C)$, $\forall a\in\{0,1\}:$~~~$\mu_{C\rightarrow v}^{(-1)}(a) \gets 0$
\FOR {$l=0$ to $h-1$}
\FORALL[``PING'']{$v\in\calV$, $C\in\calN(v)$, $a\in\{0,1\}$}
\STATE $\mu_{v\rightarrow C}^{(l)}(a) \gets \frac{w_{h-l}}{\deg_G(v)}\lambda_v(a)+\frac{1}{\deg_G(v)-1} \sum_{C'\in\calN(v)\setminus\{C\}}\mu_{C'\rightarrow v}^{(l-1)}(a)$
\ENDFOR
\FORALL[``PONG'']{$C\in\calJ$, $v\in\calN(C)$, $a\in\{0,1\}$}
\STATE \begin{align*}
\mu_{C\rightarrow v}^{(l)}(a) \gets \min
\left\{\sum_{u\in\calN(C)\setminus\{v\}}\mu_{u\rightarrow C}^{(l)}(x_{u})
\,\left|\,
\begin{aligned}
&x\in\{0,1\}^{\deg(C)}\\
&\lVert x\rVert_1\mathrm{~is~even}\\
&x_v = a\end{aligned}\right.
\right\}
\end{align*}
\ENDFOR
\ENDFOR
\FORALL[Decision]{$v\in\calV$}
\STATE $\mu_v(a) \gets \sum_{C\in\calN(v)}\mu_{C\rightarrow v}^{(h-1)}(a)$
\STATE
$\hat{x}_v \gets \begin{cases} 0 & \mathrm{if\ }\big(\mu_v(1)-\mu_v(0)\big) >0,\\ 1 & \mathrm{otherwise}. \end{cases}$
\ENDFOR
\end{algorithmic}
\end{algorithm}

Following Wiberg~\cite[Appendix A.3]{Wib96}, we claim that
Algorithms~\ref{alg:weighted-min-sum} and~\ref{alg:weighted-min-sum2}
are equivalent.

\begin{claim} \label{claim:NWMS-eq} Let $\lambda$, $\lambda(0)$, and
  $\lambda(1)$ in $\R^N$ denote the LLR vector and the two
  log-likelihood vectors for a channel output $y\in\R^N$. Then, for
  every $h\in\N_+$ and $w\in\R_+^h$, the following equalities hold:
  \begin{enumerate}
  \item $\mu_{v\rightarrow C}^{(l)} = \mu_{v\rightarrow C}^{(l)}(1) -
    \mu_{v\rightarrow C}^{(l)}(0)$ and $\mu_{C\rightarrow v}^{(l)} =
    \mu_{C\rightarrow v}^{(l)}(1) - \mu_{C\rightarrow v}^{(l)}(0)$ in
    every iteration $l$.
  \item $\mu_v = \mu_v(1)-\mu_v(0)$. Hence $\NWMS(\lambda,h,w)$ and
    $\NWMS2(\lambda(0),\lambda(1),h,w)$ output the same vector
    $\hat{x}$.
  \end{enumerate}
\end{claim}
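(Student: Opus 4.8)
The plan is to prove both items simultaneously by induction on the iteration index $l$, following Wiberg's observation that separating the reliabilities of ``$0$'' and ``$1$'' and then taking their difference recovers the single-message min-sum update. The base case is $l=-1$: by the initializations in Line~1 of both algorithms, $\mu_{C\rightarrow v}^{(-1)}=0$ and $\mu_{C\rightarrow v}^{(-1)}(1)-\mu_{C\rightarrow v}^{(-1)}(0)=0-0=0$, so the claimed identity holds trivially.

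For the inductive step, assume $\mu_{C'\rightarrow v}^{(l-1)}=\mu_{C'\rightarrow v}^{(l-1)}(1)-\mu_{C'\rightarrow v}^{(l-1)}(0)$ for all edges. First I would handle the variable-to-check (``PING'') update. Both updates are affine in the incoming messages and in the channel term, with identical coefficients $\frac{w_{h-l}}{\deg_G(v)}$ and $\frac{1}{\deg_G(v)-1}$; subtracting the $a=0$ instance of Line~4 of Algorithm~\ref{alg:weighted-min-sum2} from the $a=1$ instance, and using $\lambda_v=\lambda_v(1)-\lambda_v(0)$ together with the induction hypothesis, yields exactly Line~\ref{line:variable-to-check} of Algorithm~\ref{alg:weighted-min-sum}. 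Hence $\mu_{v\rightarrow C}^{(l)}=\mu_{v\rightarrow C}^{(l)}(1)-\mu_{v\rightarrow C}^{(l)}(0)$.

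The substantive step — and the one I expect to be the main obstacle — is the check-to-variable (``PONG'') update, where one must show that the min over even-weight assignments in Line~7 of Algorithm~\ref{alg:weighted-min-sum2}, after taking the difference $\mu_{C\rightarrow v}^{(l)}(1)-\mu_{C\rightarrow v}^{(l)}(0)$, collapses to the sign-product-times-minimum-absolute-value rule of Line~\ref{line:check-to-variable} of Algorithm~\ref{alg:weighted-min-sum}. The key algebraic fact to establish is the following: writing $m_u \triangleq \mu_{u\rightarrow C}^{(l)}(1)-\mu_{u\rightarrow C}^{(l)}(0)$ and choosing for each $u\in\calN(C)\setminus\{v\}$ the value $x_u^*\in\{0,1\}$ that minimizes $\mu_{u\rightarrow C}^{(l)}(x_u)$ (so that $\mu_{u\rightarrow C}^{(l)}(x_u^*) = \min_a \mu_{u\rightarrow C}^{(l)}(a)$ and $(-1)^{x_u^*}=\mathrm{sign}(m_u)$ when $m_u\neq 0$), the minimizing even-weight assignment with $x_v=a$ either equals the unconstrained minimizer $x^*$ (if its parity already matches) or differs from it in exactly the single coordinate $u^*$ achieving $\min_u |m_u|$ (if the parity must be flipped). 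Carrying out this case analysis shows $\mu_{C\rightarrow v}^{(l)}(a) = \sum_{u\neq v}\min_b \mu_{u\rightarrow C}^{(l)}(b) + [\text{parity of }(x^*_v\text{-adjusted})\neq a]\cdot\min_u|m_u|$, and subtracting the $a=1$ and $a=0$ expressions cancels the common sum and leaves $\pm\min_u|m_u|$ with the sign equal to $\prod_{u\neq v}\mathrm{sign}(m_u)$, which is precisely Line~\ref{line:check-to-variable}. (The degenerate case where some $m_u=0$, i.e.\ ties, is compatible with the tie-breaking convention and does not affect the value.) This proves item~1.

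Finally, item~2 follows immediately from item~1 applied at $l=h-1$: summing over $C\in\calN(v)$ gives $\mu_v = \sum_{C}\mu_{C\rightarrow v}^{(h-1)} = \sum_C\big(\mu_{C\rightarrow v}^{(h-1)}(1)-\mu_{C\rightarrow v}^{(h-1)}(0)\big) = \mu_v(1)-\mu_v(0)$, so the decision rules in Line~12 of the two algorithms (one testing $\mu_v>0$, the other testing $\mu_v(1)-\mu_v(0)>0$, with the same tie-breaking toward $1$) produce the same $\hat{x}$.
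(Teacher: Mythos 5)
Your proof is correct, and it takes essentially the same route that the paper implicitly relies on: the paper does not reproduce a proof of this claim, but instead cites Wiberg's thesis (Appendix~A.3), where exactly this ``separate the $0$- and $1$-reliabilities, subtract, and observe that the even-parity min collapses to a sign-product times minimum absolute value'' argument appears. Your induction on $l$ is organized cleanly, the PING step is a one-line subtraction using $\lambda_v=\lambda_v(1)-\lambda_v(0)$, and the PONG step --- which you rightly identify as the crux --- is carried out correctly: taking the unconstrained coordinatewise minimizer $x^*$, noting that matching the parity constraint costs either $0$ or $\min_u|m_u|$ depending on whether the parity of $x^*$ already agrees with $a$, and then observing that the difference over $a\in\{0,1\}$ cancels the common baseline $\sum_{u\neq v}\min_b\mu_{u\to C}^{(l)}(b)$ and leaves $(-1)^{\sum_{u\neq v}x_u^*}\min_u|m_u|=\bigl(\prod_{u\neq v}\mathrm{sign}(m_u)\bigr)\min_u|m_u|$. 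Your remark on the degenerate case $m_u=0$ is also right: both sides vanish regardless of how $\mathrm{sign}(0)$ is defined, since the minimum absolute value is then $0$.
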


\subsubsection{\NWMS2 as a Dynamic Programming Algorithm}
\label{subsec:NWMS2-DP}
In Lemma~\ref{lemma:NWMS2dp} we prove that
Algorithm \NWMS2 is a dynamic programming algorithm that computes, for
every variable node $v$, two min-weight valid configurations. One configuration is $0$-rooted and the other configuration is $1$-rooted.
Algorithm \NWMS2 decides $\hat{x}_v=0$ if the min-weight valid configuration rooted at $v$ is $0$-rooted, otherwise decides $\hat{x}_v=1$. We now
elaborate on the definition of valid configurations and their weight.

\emph{Valid configurations and their weight.}~~
Fix a variable node $r\in \cal V$. We refer to $r$ as the \emph{root}.
Consider the path-prefix tree $\calT_r^{2h}(G)$ rooted at $r$ consisting
of all the paths of length at most $2h$ starting at $r$.  Denote the vertices
of $\calT_r^{2h}$ by $\hat{\calV}\cup\hat{\calJ}$, where paths in
$\hat{\calV}=\{p\ \vert\ p\in\hat{V},\ t(p)\in\calV\}$ are
variable paths, and paths in $\hat{\calJ}=\{p\ \vert\ p\in\hat{V},\
t(p)\in\calJ\}$ are parity-check paths.
Denote by $(r)$ the zero-length path, i.e., the path consisting of only the root $r$.

A binary word $z \in \{0,1\}^{|\hat{\calV}|}$ is interpreted as an assignment to
variable paths $p\in\hat{\calV}$ where $z_p$ is assigned to $p$. We
say that $z$ is a \emph{valid configuration} if it satisfies all
parity-check paths in $\hat{\calJ}$. Namely, for every check path
$q\in\hat\calJ$, the assignment to its neighbors has an even number of
ones. We denote the set of valid configurations of $\calT_r^{2h}$ by
$\config(\calT_r^{2h})$.

The weight $\calW_{\calT_r^{2h}}(z)$ of a valid configuration $z$ is
defined by weights $\calW_{\calT_r^{2h}}(p)$ that are assigned to
variable paths $p\in \hat\calV$ as follows. We start with level
weights $w = (w_1,\ldots,w_h) \in \R_+^h$ that are assigned to levels
of variable paths in ${\calT_r^{2h}}$.
Define the weight of a variable path $p\in \hat\calV$ w.r.t.
$w$ by\footnote{We use the same notation as in Definition~\ref{def:weightedSubtree}.}
\[
\calW_{\calT_r^{2h}}(p)
\triangleq \frac{w_{\lvert p\rvert/2}}{\deg_G\big(t(p)\big)}\cdot \prod_{q\in\PPrefix(p)\cap\hat{\calV}}\frac{1}{\deg_G\big(t(q)\big)-1}.
\]
There is a difference between Definition~\ref{def:weightedSubtree} and $\calW_{\calT_r^{2h}}(p)$. A minor difference is that we do not divide by $\lVert w\rVert_1$ as in Definition~\ref{def:weightedSubtree}. The main difference is that in Definition~\ref{def:weightedSubtree} the product is taken over all paths in $\PPrefix(p)$. However, in $\calW_{\calT_r^{2h}}(p)$ the product is taken only over variable paths in $\PPrefix(p)$.

The \emph{weight of a valid configuration $z\in\{0,1\}^{\lvert\hat\calV\rvert}$} is defined by
\[
\calW_{\calT_r^{2h}}(z) \triangleq \sum_{p\in\hat\calV\setminus\{(r)\}} \lambda_{t(p)}(z_p)\cdot \calW_{\calT_r^{2h}}(p).
\]

Given a variable node $r\in\calV$ and a bit $a\in\{0,1\}$, our goal is to compute the value of a min-weight valid configuration $\calW^{\min}(r,a)$ defined by
\begin{align*}
  \calW^{\min}(\calT_r^{2h},a) \triangleq \min \bigg\{ \calW_{\calT_r^{2h}}(z) \bigg\vert~\begin{aligned}&z\in \config(\calT_r^{2h}),\\ &z_{(r)}=a\end{aligned}\bigg\}.
\end{align*}

In the following lemma we show that \NWMS2 computes $\calW^{\min}(\calT_r^{2h},a)$ for every $r\in\calV$ and $a\in\{0,1\}$. The proof is based on interpreting \NWMS2 as dynamic programming. See Appendix~\ref{app:NWMS2_DP} for details.

\begin{lemma}\label{lemma:NWMS2dp}
Consider an execution of $\NWMS2(\lambda(0),\lambda(1),h,w)$. For every variable node $r$, $\mu_r(a) = \calW^{\min}(\calT_r^{2h},a)$.
\end{lemma}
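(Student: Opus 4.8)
The plan is to prove the statement by induction on the number of iterations, showing that the message $\mu_{C\rightarrow v}^{(l)}(a)$ equals the minimum weight of a valid configuration over the subtree of $\calT_r^{2h}$ that ``hangs'' from the local-code path corresponding to the edge $(v,C)$ at the appropriate level, with the root variable path of that subtree assigned the value $a$; and symmetrically that $\mu_{v\rightarrow C}^{(l)}(a)$ equals the minimum weight over the subtree hanging from the variable path corresponding to $(v,C)$, again with that variable path assigned $a$. To make this precise I would first fix the root $r$ and set up a correspondence between the pair ``edge $(v,C)$, iteration $l$'' and a canonical subtree of $\calT_r^{2h}$: namely the subtree rooted at a variable path $p$ with $t(p)=v$, $|p| = 2(h-1-l)$, whose next edge goes to $C$; all descendants of $p$ down to depth $2h$ are included. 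Because the weights $\calW_{\calT_r^{2h}}(p)$ are defined multiplicatively along the path from the root, the weight of the whole configuration decomposes as a sum over the children subtrees, each scaled by the same combinatorial factor $\frac{1}{\deg_G(t(p))-1}$ (for the step past a variable node) or by no extra factor (for the step past a check node, since the product in $\calW_{\calT_r^{2h}}(p)$ runs only over variable prefixes) — this is exactly why the division by $\deg_G(v)-1$ appears in Line~4 and no such division appears in Line~7.

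The induction would be organized as follows. The base case $l=-1$ (or the first PING at $l=0$) is the initialization $\mu_{C\rightarrow v}^{(-1)}(a)=0$, matching the empty configuration on the trivial subtree consisting of a single leaf variable path with no descendants, which has weight $0$ by the definition of $\calW_{\calT_r^{2h}}(z)$ (the sum over $\hat\calV\setminus\{(r)\}$ is empty, or more precisely the contribution of the leaf variable path itself is accounted for in its parent's message via the $\frac{w_{h-l}}{\deg_G(v)}\lambda_v(a)$ term). For the inductive step, the PING update in Line~4 says that assigning $a$ to the variable path $p$ contributes its own LLR term $\frac{w_{h-l}}{\deg_G(t(p))}\lambda_{t(p)}(a)$ plus, independently over the other incident check children $C'\ne C$, the best completion of each child subtree consistent with $p$ being set to $a$ — and since the child subtrees past a variable node are independent given the value at $p$, the minimum of the sum is the sum of the minima; this is exactly the summation $\sum_{C'}\mu_{C'\rightarrow v}^{(l-1)}(a)$, scaled by $\frac{1}{\deg_G(t(p))-1}$ to account for the multiplicative weight factor incurred crossing the variable path. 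The PONG update in Line~7 says that the best completion of a subtree rooted at a check path $q$ with the parent variable path set to $a$ is obtained by choosing, among all local-code words $x\in\{0,1\}^{\deg(C)}$ of even weight with $x_v=a$, the one minimizing $\sum_{u\ne v}\mu_{u\rightarrow C}^{(l)}(x_u)$ — again using independence of the child subtrees given the values at the variable children of $q$, so the min over the joint configuration factors into a min over even-weight local assignments of a sum of already-optimized child values. Chaining these through $h$ iterations and then summing the final $\mu_{C\rightarrow v}^{(h-1)}(a)$ over all $C\in\calN(r)$ in Line~11 reconstructs $\calW^{\min}(\calT_r^{2h},a)$: the root $r$ has degree $\deg_G(r)$ in the tree, its subtrees past each incident check are independent given $z_{(r)}=a$, and the root contributes no LLR term (matching the remark in the text that \NWMS\ omits $w_0\lambda_v$).

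The main obstacle I anticipate is bookkeeping the weight normalization carefully: the message recursion carries the factors $\frac{1}{\deg_G(v)-1}$ and $\frac{w_{h-l}}{\deg_G(v)}$ inline, whereas the target quantity $\calW_{\calT_r^{2h}}(p)$ is written as a single product over variable prefixes together with a leading $\frac{w_{|p|/2}}{\deg_G(t(p))}$ and a trailing $\frac{1}{\deg_G(t(q))-1}$ per intermediate variable path; one must check that crossing a check path contributes \emph{no} weight factor (hence no division in Line~7) while crossing a variable path contributes exactly $\frac{1}{\deg_G(t(q))-1}$, and that the level index $h-l$ used in iteration $l$ matches $|p|/2$ for the variable path at the corresponding depth in $\calT_r^{2h}$. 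Once this correspondence is pinned down, the induction is routine. I would state the two-part induction hypothesis explicitly (one clause for variable-to-check messages, one for check-to-variable messages), verify the base case, do the two update-rule verifications, and close with the Line~11 aggregation. The details are deferred to Appendix~\ref{app:NWMS2_DP} as the text indicates.
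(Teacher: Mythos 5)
Your proposal is correct and follows essentially the same route as the paper's Appendix~C: the paper formalizes your ``subtree hanging from the edge $(v,C)$ at iteration $l$'' as the substructures $\calT_{v\rightarrow C}^{2l+1}$ and $\calT_{C\rightarrow v}^{2l+2}$, defines their minimum-weight valid subconfigurations, shows these satisfy exactly the PING/PONG recurrences (Proposition~\ref{prop:calW_induction}), and closes with the same two-part induction on $l$ and the aggregation at the root. The bookkeeping points you flag --- the product in $\calW$ running only over variable prefixes (hence no division in Line~7), and the level index $h-l$ matching $\lvert p\rvert/2$ --- are precisely the details the paper's substructure weight definition is designed to handle.
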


From Line~12 in Algorithm \NWMS2 we obtain the following corollary that characterizes \NWMS2 as a computation of
min-weight configurations.
\begin{corollary}\label{cor:NWMS2dp}
Let $\hat x$ denote the output of
  \NWMS2$(\lambda(0),\lambda(1),h,w)$.
  For every variable node $r$, \[\hat{x}_r =\begin{cases} 0 & \mathrm{if\ }\calW^{\min}(\calT_r^{2h},1)>\calW^{\min}(\calT_r^{2h},0),\\ 1& \mathrm{otherwise.}\end{cases}\]
\end{corollary}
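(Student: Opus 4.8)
The plan is to read off the statement directly from Line~12 of Algorithm~\ref{alg:weighted-min-sum2} combined with the identification provided by Lemma~\ref{lemma:NWMS2dp}. First I would recall that, by Lemma~\ref{lemma:NWMS2dp}, the final messages computed by \NWMS2 satisfy $\mu_r(a) = \calW^{\min}(\calT_r^{2h},a)$ for every variable node $r$ and every bit $a\in\{0,1\}$. This is exactly what converts the algorithmic quantities $\mu_r(0),\mu_r(1)$ appearing in the decision rule into the combinatorial min-weight configuration values.

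The next step is purely a substitution into the decision rule of Line~12. That line sets $\hat{x}_r = 0$ precisely when $\mu_r(1) - \mu_r(0) > 0$, i.e.\ when $\mu_r(1) > \mu_r(0)$, and sets $\hat{x}_r = 1$ otherwise. Replacing $\mu_r(1)$ by $\calW^{\min}(\calT_r^{2h},1)$ and $\mu_r(0)$ by $\calW^{\min}(\calT_r^{2h},0)$ via Lemma~\ref{lemma:NWMS2dp} yields $\hat{x}_r = 0$ iff $\calW^{\min}(\calT_r^{2h},1) > \calW^{\min}(\calT_r^{2h},0)$, and $\hat{x}_r = 1$ otherwise, which is exactly the claimed case analysis. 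The ``otherwise'' branch absorbs the tie case $\calW^{\min}(\calT_r^{2h},1) = \calW^{\min}(\calT_r^{2h},0)$, consistently with the algorithm's convention of assigning $1$ when the difference is not strictly positive.

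There is essentially no obstacle here: the corollary is a one-line consequence of Lemma~\ref{lemma:NWMS2dp}, and all the real work — namely establishing that \NWMS2 correctly implements the dynamic program computing $\calW^{\min}(\calT_r^{2h},a)$ — is deferred to the proof of that lemma in Appendix~\ref{app:NWMS2_DP}. The only point worth stating explicitly in the write-up is that the decision rule in Line~12 matches the displayed case split verbatim after the substitution, including agreement on the tie-breaking convention.
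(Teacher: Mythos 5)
Your proposal is correct and matches the paper's reasoning exactly: the paper derives this corollary directly from Line~12 of Algorithm~\ref{alg:weighted-min-sum2} together with the identification $\mu_r(a)=\calW^{\min}(\calT_r^{2h},a)$ from Lemma~\ref{lemma:NWMS2dp}, which is precisely your substitution argument, tie-breaking convention included.
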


Define the ${\mathcal W}^*$ cost of a configuration $z$ in $\calT_r^{2h}$ to be
\[
\calW^*_{\calT_r^{2h}}(z) \triangleq \sum_{p\in \hat\calV}
\lambda_{t(p)} \cdot {\mathcal W}_{\calT_r^{2h}} (p) \cdot z_p.\]
Note that $\mathcal{W}^*_{\calT_r^{2h}}(z)$ uses the LLR vector $\lambda$ (i.e., $\lambda_v
=\lambda_v(1) - \lambda_v(0)$).

\begin{corollary}\label{cor:NWMSdp}
  Let $\hat x$ denote the output of \NWMS$(\lambda,h,w)$.  Let
  $z^*$ denote a valid configuration in $\calT^{2h}_r$ with minimum
  $\mathcal W^*$ cost. Then, $\hat{x}_r =
  z^*_{(r)}$.
\end{corollary}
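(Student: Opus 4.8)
The plan is to reduce the claim about the $\mathcal{W}^*$-cost minimizer to the already-established Corollary~\ref{cor:NWMS2dp}, which characterizes the decision rule of \NWMS2 (and hence, via Claim~\ref{claim:NWMS-eq}, of \NWMS) in terms of the two ``rooted'' minimum-weight valid configurations $\calW^{\min}(\calT_r^{2h},0)$ and $\calW^{\min}(\calT_r^{2h},1)$. The key observation is the relationship between the two cost functions: for a valid configuration $z$,
\[
\calW^*_{\calT_r^{2h}}(z) = \sum_{p\in\hat\calV}\big(\lambda_{t(p)}(1)-\lambda_{t(p)}(0)\big)\cdot\mathcal{W}_{\calT_r^{2h}}(p)\cdot z_p,
\]
whereas
\[
\calW_{\calT_r^{2h}}(z) = \sum_{p\in\hat\calV\setminus\{(r)\}}\lambda_{t(p)}(z_p)\cdot\mathcal{W}_{\calT_r^{2h}}(p).
\]
First I would write $\lambda_{t(p)}(z_p) = \lambda_{t(p)}(0) + z_p\cdot\big(\lambda_{t(p)}(1)-\lambda_{t(p)}(0)\big)$, so that $\calW_{\calT_r^{2h}}(z)$ splits into a constant term $\sum_{p\ne(r)}\lambda_{t(p)}(0)\mathcal{W}_{\calT_r^{2h}}(p)$ (independent of $z$) plus the sum $\sum_{p\ne(r)}\big(\lambda_{t(p)}(1)-\lambda_{t(p)}(0)\big)\mathcal{W}_{\calT_r^{2h}}(p)z_p$, which is exactly $\calW^*_{\calT_r^{2h}}(z)$ minus its root contribution $\lambda_{t((r))}\mathcal{W}_{\calT_r^{2h}}((r))z_{(r)}$.

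Next I would exploit that the root contribution to $\calW^*$ is fixed once we fix $z_{(r)}=a$: it equals $a\cdot\lambda_r\cdot\mathcal{W}_{\calT_r^{2h}}((r))$, a constant depending only on $a$ (and $\mathcal{W}_{\calT_r^{2h}}((r))\geqslant 0$). Therefore, minimizing $\calW^*_{\calT_r^{2h}}(z)$ over all valid $z$ with $z_{(r)}=a$ is equivalent, up to additive constants that depend only on $a$ (namely the root term $a\lambda_r\mathcal{W}_{\calT_r^{2h}}((r))$ and the $z$-independent offset $\sum_{p\ne(r)}\lambda_{t(p)}(0)\mathcal{W}_{\calT_r^{2h}}(p)$), to minimizing $\calW_{\calT_r^{2h}}(z)$ over the same set; hence the minimizing configurations coincide. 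Consequently, the global minimizer $z^*$ of $\calW^*_{\calT_r^{2h}}$ over all valid configurations has $z^*_{(r)}=a$ precisely when the $a$-rooted minimum of $\calW_{\calT_r^{2h}}$ (corrected by the appropriate $a$-dependent constant) is the smaller of the two, i.e., when $\calW^{\min}(\calT_r^{2h},a)$ plus that constant is minimal over $a\in\{0,1\}$. The cleanest route is to note that if $z^*$ minimizes $\calW^*$ globally then, for each $a$, restricting to $z_{(r)}=a$, $z^*$ restricted to that fiber minimizes $\calW_{\calT_r^{2h}}$ there too, so $z^*_{(r)}$ is the value of $a$ achieving $\min_a\big(\calW^{\min}(\calT_r^{2h},a) + a\lambda_r\mathcal{W}_{\calT_r^{2h}}((r))\big)$.

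The main obstacle — and the point requiring care rather than difficulty — is bookkeeping the root term: \NWMS2's decision in Corollary~\ref{cor:NWMS2dp} compares $\calW^{\min}(\calT_r^{2h},1)$ against $\calW^{\min}(\calT_r^{2h},0)$ \emph{without} any root contribution (the root path $(r)$ is excluded from $\calW_{\calT_r^{2h}}(z)$), whereas $\calW^*$ \emph{does} include $\lambda_r\mathcal{W}_{\calT_r^{2h}}((r))z_{(r)}$. I would resolve this by recalling the footnote in the description of \NWMS: Line~11 of Algorithm~\ref{alg:weighted-min-sum} deliberately omits the $w_0\lambda_v$ root term, and the definition of $\mathcal{W}_{\calT_r^{2h}}$ matches that omission, so in fact the root term in $\calW^*$ is either absent or, if one uses the convention $\mathcal{W}_{\calT_r^{2h}}((r))=0$ (consistent with $w_\calT(p)=0$ for zero-length $p$ in Definition~\ref{def:weightedSubtree}), it vanishes identically. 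With that convention the root term drops out entirely, the two minimizations become identical modulo a single $z$-independent additive constant, and the comparison defining $z^*_{(r)}$ coincides verbatim with the comparison in Corollary~\ref{cor:NWMS2dp} for $\hat x_r$. Then, since $\NWMS(\lambda,h,w)$ and $\NWMS2(\lambda(0),\lambda(1),h,w)$ output the same vector by Claim~\ref{claim:NWMS-eq}(2), we conclude $\hat{x}_r = z^*_{(r)}$.
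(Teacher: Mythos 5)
Your proposal is correct and follows essentially the same route as the paper: rewrite $\lambda_{t(p)}(z_p)$ affinely in $z_p$ (equivalently, subtract the $z$-independent constant $\calW_{\calT_r^{2h}}(0^{|\hat\calV|})$), observe that the minimizers of $\calW_{\calT_r^{2h}}$ and $\calW^*_{\calT_r^{2h}}$ over $\config(\calT_r^{2h})$ coincide, and then invoke Corollary~\ref{cor:NWMS2dp} and Claim~\ref{claim:NWMS-eq}. The paper's Equation~(\ref{eqn:WW1}) is exactly this argmin calculation, and your extra care with the root term $\lambda_r\,\calW_{\calT_r^{2h}}((r))\,z_{(r)}$ is a sound observation (it vanishes by the zero-length-path convention of Definition~\ref{def:weightedSubtree}) but is handled only implicitly in the paper, so you have made the same proof slightly more explicit rather than taken a different path.
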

\begin{proof}
The derivation in Equation~(\ref{eqn:WW1}) shows that the valid configuration $z^*$ that minimizes the $\calW^*$ cost also minimizes the $\calW$ cost.
\begin{align}
\nonumber \argmin_{z\in\config(\calT_r^{2h})} \calW_{\calT_r^{2h}}(z) &\stackrel{(\mathrm{a})}{=} \argmin_{z\in\config(\calT_r^{2h})} \big\{ \calW_{T_r^{2h}}(z)-\calW_{\calT_r^{2h}}(0^{|\hat\calV|})\big\} \\ \nonumber
  &\stackrel{(\mathrm{b})}{=}
  \argmin_{z\in\config(\calT_r^{2h})}\bigg\{  \!\!\sum_{\{p\in\hat{\calV}~\vert~z_p=1\}}\!\!\!\!\!\!\lambda_{t(p)}(1) \cdot \calW_{\calT_r^{2h}}(p)-\!\!\!\!\!\!  \sum_{\{p\in\hat{\calV}~\vert~z_p=1\}}\!\!\!\!\!\!\!\lambda_{t(p)}(0)\cdot\mathcal{W}_{\calT_r^{2h}}(p) \bigg\}\\
  \nonumber
  &\stackrel{(\mathrm{c})}{=} \argmin_{z\in\config(\calT_r^{2h})}
  \sum_{p\in \hat\calV} \lambda_{t(p)}\cdot {\mathcal W}_{\calT_r^{2h}}(p)\cdot z_p\\
  &= \argmin_{z\in\config(\calT_r^{2h})}\calW^*_{\calT_r^{2h}}(z). \label{eqn:WW1}
\end{align}
Equality ($\mathrm{a}$) relies on the fact that $\calW_{\calT_r^{2h}}(0^{|\hat\calV|})$ is a constant. The summands $\lambda_{t(p)}(z_p)\cdot\calW_{\calT_r^{2h}}(p)$ in $\calW_{\calT_r^{2h}}(z)$ with $z_p=0$ are reduced by the substraction of the same summands in $\calW_{\calT_r^{2h}}(0^{|\hat\calV|})$. This leaves in Equality ($\mathrm{b}$) only summands that correspond to bits $z_p=1$.
Equality ($\mathrm{c}$) is obtained by the LLR definition $\lambda_{t(p)} = \lambda_{t(p)}(1) - \lambda_{t(p)}(0)$.

Let $\hat{x} = \NWMS(\lambda,h,w)$ and $\hat{y} = \NWMS2(\lambda(0),\lambda(1),h,w)$.
By Corollary~\ref{cor:NWMS2dp} and Equation~(\ref{eqn:WW1}), $\hat{y}_r=z^*_{(r)}$. By Claim~\ref{claim:NWMS-eq}, $\hat{x}_r = \hat{y}_r$, and the corollary follows.
\end{proof}

\subsubsection{Connections to Local Optimality}\label{subsec:LO}
The following lemma states that the \NWMS\ decoding algorithm computes the all-zero codeword if $0^N$ is locally optimal.

\begin{lemma} \label{lemma:negDevExists}
Let $\hat{x}$ denote the output of $\NWMS(\lambda,h,w)$.
If $0^N$ is $(h,w,d\!=\!2)$-locally optimal w.r.t. $\lambda$, then $\hat{x}=0^N$.
\end{lemma}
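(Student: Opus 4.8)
The plan is to prove the contrapositive: if $\hat{x} \neq 0^N$, then $0^N$ is not $(h,w,d\!=\!2)$-locally optimal w.r.t.\ $\lambda$. Suppose $\hat{x}_r = 1$ for some variable node $r$. By Corollary~\ref{cor:NWMSdp}, the output bit $\hat{x}_r$ equals $z^*_{(r)}$, where $z^*$ is a valid configuration in $\calT_r^{2h}$ of minimum $\calW^*$ cost; hence $z^* \neq 0^{|\hat\calV|}$. Since $\calW^*_{\calT_r^{2h}}(0^{|\hat\calV|}) = 0$ and $z^*$ is a minimizer, we get $\calW^*_{\calT_r^{2h}}(z^*) \leqslant 0$. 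The idea is to extract from this negative-cost valid configuration a \PNW\ deviation $\beta \in \mathcal{B}_2^{(w)}$ (with $d=2$) whose cost $\langle \lambda, 0^N \oplus \beta \rangle - \langle \lambda, 0^N \rangle = \langle \lambda, \beta \rangle$ is non-positive, contradicting local optimality of $0^N$.

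The main work is the combinatorial step of converting $z^*$ into a $2$-tree. First I would observe that a valid configuration of $\calT_r^{2h}$ that is non-zero at the root must ``contain'' a minimal valid deviation in Wiberg's sense: following Wiberg~\cite{Wib96}, one shows that since $z^*$ has minimum $\calW^*$ cost (and level weights $w$ are non-negative), the support of $z^*$ can be pruned to the support of a skinny tree --- at the root, exactly one child check path is ``active''; under each active check path exactly one further variable child is active (because a single parity-check path needs an even, hence at least two, active neighbors, and minimality of cost with non-negative weights forces exactly two: the parent and one child); and this continues down to the leaves. Pruning to such a skinny subtree $\calT$ can only decrease (weakly) the cost because we are deleting variable paths whose contribution $\lambda_{t(p)} \cdot \calW_{\calT_r^{2h}}(p) \cdot z_p$, when summed over the deleted subtrees, is non-negative --- here one must be careful and argue that at each active check node the retained child is the one with the smallest subtree cost, so the deleted children have cost at least that of the retained one, which (together with the parity constraint forcing the retained child to be active as well) makes the deleted contributions non-negative; alternatively, one invokes that \NWMS2 already computed the min over valid even-weight assignments, so the optimal configuration is automatically ``thin.'' The key bookkeeping point is that the $\calW_{\calT_r^{2h}}$ weights on variable paths of a skinny tree $\calT$, after renormalizing by $\lVert w\rVert_1$, are exactly the $w_\calT$ weights of Definition~\ref{def:weightedSubtree} with $d = 2$: the product $\prod_{q \in \PPrefix(p) \cap \hat\calV} \frac{1}{\deg_G(t(q)) - 1}$ matches $\prod_{q \in \PPrefix(p)} \frac{1}{\deg_\calT(q) - 1}$ because in a $2$-tree every check path $q$ has $\deg_\calT(q) = 2$ so its factor $\frac{1}{\deg_\calT(q)-1} = 1$ contributes nothing, and the variable-path factors agree since $\deg_\calT(q) = \deg_{\calT_r^{2h}}(q) = \deg_G(t(q))$ for variable paths.

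Having identified a skinny subtree $\calT \in \calT[r,2h,2](G)$ with $\calW^*_{\calT_r^{2h}}$ restricted to its support being $\leqslant 0$, I would set $\beta \triangleq \frac{1}{c}\cdot \pi_{G,\calT,w}$ where $c \geqslant 1$ is the normalizing constant from Definition~\ref{def:PNWdevSet}, so $\beta \in \mathcal{B}_2^{(w)}$. Then
\begin{align*}
\langle \lambda, \beta \rangle = \frac{1}{c}\sum_{v \in \calV} \lambda_v \cdot \pi_{G,\calT,w}(v) = \frac{1}{c}\sum_{p \in \calT \cap \hat\calV} \lambda_{t(p)} \cdot w_\calT(p) = \frac{1}{c\lVert w\rVert_1} \sum_{p \in \calT \cap \hat\calV} \lambda_{t(p)} \cdot \calW_{\calT_r^{2h}}(p),
\end{align*}
and the last sum is exactly $\calW^*_{\calT_r^{2h}}(z^*_\calT) \leqslant 0$ (where $z^*_\calT$ is the indicator of $\calT$'s variable paths among $\hat\calV$), using that along a skinny tree $z_p = 1$ precisely on the retained variable paths. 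Since $c\lVert w\rVert_1 > 0$, this gives $\langle \lambda, 0^N \oplus \beta \rangle = \langle \lambda, 0^N \rangle + \langle \lambda, \beta \rangle \leqslant \langle \lambda, 0^N \rangle$, so $0^N$ is not $(h,w,2)$-locally optimal, completing the contrapositive.

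I expect the main obstacle to be the pruning argument: rigorously showing that a minimum-$\calW^*$-cost valid configuration with a non-zero root bit can be thinned to a genuine $2$-tree without increasing cost. The subtlety is handling check paths with more than two active neighbors and ensuring the retained skinny branch is chosen so that the discarded branches have non-negative aggregate cost; the clean way is to exploit the dynamic-programming characterization (Lemma~\ref{lemma:NWMS2dp} and Corollary~\ref{cor:NWMSdp}), since the min-sum recursion at each check node in \NWMS2 already selects a minimal even-weight support, so the optimal configuration's support is automatically a skinny tree --- making the ``pruning'' essentially a re-reading of the recursion rather than a separate combinatorial lemma. A secondary point to get right is the exact matching of the normalization factors between $\calW_{\calT_r^{2h}}$ and $w_\calT$, which is routine but must be stated carefully because of the different conventions flagged in the text (product over all prefixes vs.\ over variable prefixes, and the presence/absence of the $\lVert w\rVert_1$ divisor).
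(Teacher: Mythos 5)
Your high-level strategy matches the paper's: prove the contrapositive, use Corollary~\ref{cor:NWMSdp} to get a minimum-cost valid configuration $z^*$ with $z^*_{(r)}=1$, extract a $2$-tree from its support, show the $2$-tree's cost is non-positive, and identify it with a \PNW\ deviation. The bookkeeping on normalization factors is also handled correctly. But there is a genuine gap in the key step you yourself flagged as the ``main obstacle.''

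Your pruning justification is wrong. You claim that deleting the non-retained branches hanging from a check path has non-negative aggregate cost (``the deleted contributions non-negative''), so the pruned $2$-tree has cost $\leqslant \calW^*_{\calT_r^{2h}}(z^*) \leqslant 0$. This is false in general: the subtree costs below active check-path branches can perfectly well be negative, and minimality of $z^*$ does not bound them below by zero --- it bounds them below by the minimum cost, which is itself $\leqslant 0$. Similarly, the alternative you propose (``the min-sum recursion at each check node already selects a minimal even-weight support, so the optimal configuration is automatically thin'') is also false: the minimum in Line~7 of \NWMS2 ranges over all even-weight local assignments, and nothing forces the minimizer to have exactly two active neighbors at each check path; with many negative $\lambda_i$ the optimal support can be fat.

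The fix, which the paper uses, avoids reasoning about the sign of the discarded part directly. Take any $2$-tree $\calT$ contained in the forest induced by the support of $z^*$ (such a $\calT$ exists because $z^*_{(r)}=1$ and validity forces every check path with an active parent to have at least one more active child, all the way down). Write $z^* = \tau + z^0$ with $\tau$ the indicator of $\calT$'s variable paths and $z^0$ the indicator of the rest. The crucial observation is that $z^0$ is itself a valid configuration: at every check path, the $2$-tree $\tau$ contributes either $0$ or exactly $2$ active neighbors, so the parity is preserved in $z^0 = z^* - \tau$. Then by additivity $\calW^*(z^*) = \calW^*(\tau) + \calW^*(z^0)$, and by optimality of $z^*$ over all valid configurations, $\calW^*(z^*) \leqslant \calW^*(z^0)$, which gives $\calW^*(\tau) \leqslant 0$ directly. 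No claim about the sign of $\calW^*(z^0)$ is needed, and no ``greedy cheapest-child'' selection is required --- any $2$-tree inside the active support works. Once you make this replacement, the rest of your argument (the identification $\beta = \frac{1}{c}\pi_{G,\calT,w}$ and the matching of $w_\calT$ with $\calW_{\calT_r^{2h}}$ restricted to $\calT$) is correct.
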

\begin{proof}
We prove the contrapositive statement. Assume that $\hat{x}\neq0^N$. Hence, there exists a variable node $v$ for which $\hat{x}_v = 1$. Consider $\calT_v^{2h}=(\hat{\calV}\cup\hat{\calJ},\hat{E})$. Then, by Corollary~\ref{cor:NWMSdp}, there exists a valid configuration $z^*\in\{0,1\}^{|\hat{\calV}|}$ in $\calT_v^{2h}$ with $z^*_{(v)} = 1$ such that for every valid configuration $y\in \calT_v^{2h}$ it holds that
\begin{equation}\label{eqn:z*opt}
\calW^*_{\calT_v^{2h}}(z^*) \leqslant \calW^*_{\calT_v^{2h}}(y).
\end{equation}

Let $\calT(z^*)$ denote the subgraph of $\calT_v^{2h}$ induced by $\hat{\calV}(z^*)\cup\calN\big(\hat{\calV}(z^*)\big)$ where $\hat{\calV}(z^*) = \{p\in\hat{\calV}~\mid~z^*_p=1\}$. Note that $\calT(z^*)$ is a forest.
Because $z^*_{(v)}=1$ and $z^*$ is a valid configuration in $\calT_v^{2h}$, the forest $\calT(z^*)$ must contain a $2$-tree of height $2h$ rooted at the node $v$; denote this tree by $\calT$. Let $\tau \in \{0,1\}^{|\hat{\calV}|}$ denote the characteristic vector of the support of $\calT$, and let $z^0\in \{0,1\}^{|\hat{\calV}|}$ denote the characteristic vector of the support of $\calT(z^*) \setminus \calT$.
Then, $z^* = \tau + z^0$, where $z^0$ is also necessarily a valid configuration.
By linearity and disjointness of $\tau$ and $z^0$, we have
\begin{equation}\label{eqn:decompose}
\calW^*_{\calT_v^{2h}}(z^*) = \calW^*_{\calT_v^{2h}}(\tau + z^0) = \calW^*_{\calT_v^{2h}}(\tau) + \calW^*_{\calT_v^{2h}}(z^0).
\end{equation}
Because $z^0$ is a valid configuration, by Equation~(\ref{eqn:z*opt}), we have $\calW^*_{\calT_v^{2h}}(z^*) \leqslant \calW^*_{\calT_v^{2h}}(z^0)$. By Equation~(\ref{eqn:decompose}), it holds that $\calW^*_{\calT_v^{2h}}(\tau) \leqslant 0$.

Let $w^*_\tau \in R^{|\hat\calV|}$ denote the vector whose component indexed by $p\in\hat\calV$ equals \mbox{$\calW_{\calT_v^{2h}}(p) \cdot \tau_{p}$}. The vector $w^*_\tau$ is equal to the $w$-weighted $2$-tree $w_\calT$ according to Definition~\ref{def:weightedSubtree}. Hence, $\beta = \frac{1}{c}\cdot\pi_{G,\calT,w} \in \calB_2^{(w)}$ satisfies $\langle \lambda,\beta\rangle = \frac{1}{c}\cdot \calW^*_{\calT_v^{2h}}(\tau) \leqslant 0$, where $c\geqslant1$ is a normalizing constant so that $\frac{1}{c}\cdot\pi_{G,\calT,w}\in[0,1]^N$ (see Definition~\ref{def:PNWdevSet}). We therefore conclude that $0^N$ is not $(h,w,d\!=\!2)$-locally optimal w.r.t. $\lambda$ and the lemma follows.
\end{proof}

\subsection{Numerical Results for Regular LDPC Codes} \label{subsec:results}

We chose a $(3,6)$-regular LDPC code with block length $N=4896$ for which the girth of the Tanner graph equals $12$~\cite{RV00}. We ran up to $h=400$ iterations of the \NWMS\ decoding algorithm (Algorithm~\ref{alg:weighted-min-sum}) and the \certNWMS\ decoding algorithm (Algorithm~\ref{alg:certifiedNWMS}) for received words over an AWGN channel. Three choices of level weights $w$ were
considered: (1)~Unit level weights, $w_\ell\triangleq1$.  This choice
reduces local optimality to~\cite{ADS09,HE11} (although in these
papers $h$ is limited by a quarter of the girth). (2)~Geometric level
weights $w_\ell\triangleq3\cdot(3-1)^{\ell-1}=3\cdot2^{\ell-1}$. In this case the \NWMS\ decoding algorithm
reduces to the standard min-sum decoding algorithm~\cite{Wib96}. (3)~Geometric
level weights $w_\ell\triangleq3\cdot(\frac{3-1}{1.25})^{\ell-1}=3\cdot(\frac{2}{1.25})^{\ell-1}$.  In this case the
\NWMS\ decoding algorithm reduces to normalized BP-based algorithm with $\alpha =
1.25$~\cite{CF02}. The choice of weights in~\cite{CF02} was obtained by optimizing
density evolution w.r.t. minimum bit error probability.

\ignore{We chose a $(3,6)$-regular LDPC code with block length $N=4896$ for which the girth of the Tanner graph equals $12$~\cite{RV00}. We ran up to $h=400$ iterations of the \NWMS\ decoding algorithm for received words over an AWGN channel. The decoded words were checked by the local
optimality verifier (Algorithm~\ref{alg:verify}). Three choices of level weights $w$ were
considered: (1)~Unit level weights, $w_\ell\triangleq1$.  This choice
reduces local optimality to~\cite{ADS09,HE11} (although in these
papers $h$ is limited by a quarter of the girth). (2)~Geometric level
weights $w_\ell\triangleq3\cdot2^{\ell-1}$. In this case the \NWMS\ decoding algorithm
reduces to the standard min-sum decoding algorithm~\cite{Wib96}. (3)~Geometric
level weights $w_\ell\triangleq3\cdot1.25^{\ell-1}$.  In this case the
\NWMS\ decoding algorithm reduces to normalized BP-based algorithm with $\alpha =
1.25$~\cite{CF02}. The choice of weights in~\cite{CF02} was obtained by optimizing
density evolution w.r.t. minimum bit error probability.
}

Figure~\ref{fig:FERcurve} depicts the word error rate of the \NWMS\ decoding algorithm with respect to these three level weights by solid lines.
The word error rate of the \certNWMS\ decoding algorithm with respect to these three level weights is depicted by dashed lines, i.e., the dashed lines depict the cases in which the \NWMS\ decoding algorithm failed to return the transmitted codeword
certified as locally optimal (and hence ML optimal).
The error rate of LP decoding and sum-product decoding algorithm are depicted as well for comparison.

The results show that the choice of unit level weights
minimizes the gap between the cases in which the \NWMS\ decoding algorithm fails to decode the transmitted codeword with and without an ML-certificate by local optimality.
That is, with unit level weights the main cause for a failure in decoding the transmitted codeword is the lack of a locally optimal codeword.
Moreover, there is a tradeoff between maximizing the rate of successful ML-certified decoding by the \certNWMS\ decoding algorithm, and minimizing the (not necessarily ML-certified) word error rate by the \NWMS\ decoding algorithm.
This tradeoff was also observed in~\cite{JP11}.
\ignore{The gap between the cases in which the transmitted codeword is not locally optimal even though that the transmitted codeword is successfully decoded by the weighted min-sum algorithm is observed also in~\cite{JP11} in the case of another weight vector.
}

\begin{figure}[ht]
  \begin{center}
 \includegraphics[width=\textwidth]{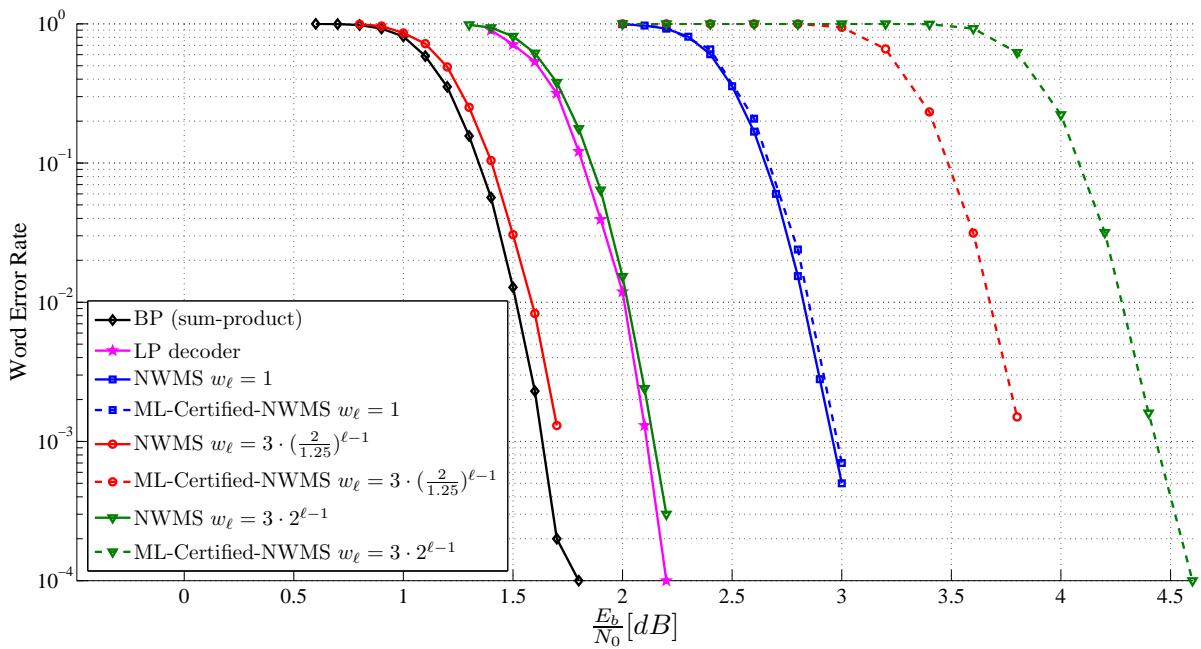}
 \caption{ Simulations for $(3,6)$-regular LDPC code of length
   $N=4896$~\cite{RV00} over a BI-AWGN channel.  Solid lines depict the WER of the
   \NWMS\ decoding algorithm for three level weights, the sum-product decoding algorithm and LP decoding. Dashed lines depict the word error rate of the
   \certNWMS\ decoding algorithm, i.e., the probability that the transmitted codeword is not locally optimal.}
  \label{fig:FERcurve}
  \end{center}
\end{figure}

\section{Bounds on the Error Probability of LP Decoding Using\\ Local Optimality} \label{sec:bounds}

In this section we analyze the probability that a local optimality certificate for regular Tanner codes exists, and therefore LP decoding succeeds. The analysis is based on the study of a sum-min-sum process that characterizes $d$-trees of a regular Tanner graph. We prove upper bounds on the error probability of LP decoding of regular Tanner codes in MBIOS channels. The upper bounds on the error probability imply lower bounds on the noise threshold of LP decoding for channels in which the channel parameter increases with noise level (e.g., BSC($p$) and BI-AWGNC($\sigma$))\footnote{On the other hand, upper bounds on the error probability imply upper bounds on the channel parameter threshold for channels in which the channel parameter is inverse proportional to the noise level (e.g., BI-AWGN channel described with a channel parameter of signal-to-noise ratio $\frac{E_b}{N_0}$).}. We apply the analysis to a BSC, and compare our results with previous results on expander codes.
The analysis presented in this section generalizes the probabilistic analysis of Arora \emph{et al.}~\cite{ADS09} from $2$-trees (skinny trees) to $d$-trees for any $d \geqslant 2$.

In the remainder of this section, we restrict our discussion to $(\dL,\dR)$-regular Tanner codes with minimum local distance $d^*$. Let $d$ denote a parameter such that $2 \leqslant d \leqslant d^*$.

The upcoming Theorem~\ref{thm:main-bound-BSC} summarizes the main results presented in this section for a BSC, and generalizes to any MBIOS channel as described in Section~\ref{subsec:MBIOSbound}.
Concrete bounds are given for a $(2,16)$-regular Tanner code with code rate at least $0.375$ when using $[16,11,4]$-extended Hamming codes as local codes.

\begin{theorem} \label{thm:main-bound-BSC} Let $G$ denote a
  $(\dL,\dR)$-regular bipartite graph with girth $g$, and let
  $\calC(G)\subset\{0,1\}^N$ denote a Tanner code based on $G$ with
  minimum local distance $d^*$. Let $x\in\calC(G)$ be a codeword.
  Suppose that $y\in\{0,1\}^N$ is obtained from $x$ by a BSC with
  crossover probability $p$. Then,
\begin{enumerate}
\item{[finite length bound]} Let $d=d_0$, $p\leqslant p_0$, $(\dL,\dR) = (2,16)$, and $d^*=4$. For the values of $d_0$ and $p_0$ in the rows labeled ``finite'' in Table~\ref{table:bounds} it holds that $x$ is the unique optimal solution to the LP decoder with probability at least
    \begin{align}\label{eqn:LP-finiteLB}
    \Pr\big\{\hat{x}^{\LP}(y)=x\big\} \geqslant 1-N\cdot {\alpha^{(d-1)}}^{\lfloor\frac{1}{4}g\rfloor}
    \end{align}
    for some constant $\alpha<1$.
\item{[asymptotic bound]} Let $d=d_0$, $(\dL,\dR) = (2,16)$, $d^*=4$, and $g=\Omega(\log N)$ sufficiently large. For the values of $d_0$ and $p_0$ in the rows labeled ``asymptotic'' in Table~\ref{table:bounds} it holds that $x$ is the unique optimal solution to the LP decoder with probability at least $1-\exp(-N^{\delta})$ for some constant $0<\delta<1$, provided that $p \leqslant p_0(d_0)$.
\item Let $d'\triangleq d-1$, $\dL'\triangleq \dL-1$, and $d'\triangleq \dR-1$. For any $(\dL,\dR)$ and $2 \leqslant d \leqslant d^*$ s.t. $\dL'\cdot d' \geqslant 2$, the codeword $x$ is the unique optimal solution to the LP decoder with probability at least $1-N\cdot {\alpha^{(\dL'\cdot d')}}^{\lfloor\frac{1}{4}g\rfloor}$ for some constant $\alpha<1$, provided that
\begin{align*}
\min_{t\geqslant0}\bigg\{\alpha_1(p,d, \dL, \dR,t)\cdot\big( \alpha_2(p,d, \dL, \dR,t) \big)^{1/(\dL'\cdot d'-1)}\bigg\}<1,
\end{align*}
\indent where
\begin{align*}
\alpha_1(p,d, \dL, \dR,t) =& \sum_{k=0}^{d'-1}{{\binom{\dR'}{k} p^k(1-p)^{(\dR'-k)}} e^{-t(d'-2k)}}\\&+\bigg(\sum_{k=d'}^{\dR'}\binom{\dR'}{k} p^k(1-p)^{\dR'-k}\bigg)e^{td'},\\
  \alpha_2(p,d, \dL, \dR,t)=& \binom{\dR'}{d'}\big((1-p)e^{-t}+pe^t\big)^{d'}.
\end{align*}
\end{enumerate}
\end{theorem}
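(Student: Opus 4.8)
The plan is to deduce all three parts from a single large-deviation estimate --- that with high probability no \PNW\ deviation anchored anywhere in $G$ has non-positive cost --- together with the sufficiency of local optimality for LP optimality (Theorem~\ref{thm:LPsufficient}). First I would reduce to the all-zero codeword. By Theorem~\ref{thm:LPsufficient}, $\Pr\{\hat{x}^{\LP}(y)\neq x\}\le\Pr\{x\text{ is not }(h,w,d)\text{-locally optimal w.r.t. }\lambda(y)\}$; by Proposition~\ref{proposition:LOsymmetry} together with the symmetry of the BSC (the measure-preserving map $(x,\lambda)\mapsto(0^N,(-1)^x\ast\lambda)$, exactly as in Corollary~\ref{corollary:NWMSzeroAssumption}), the right-hand side equals $\Pr\{0^N\text{ is not }(h,w,d)\text{-locally optimal w.r.t. }\lambda\mid c=0^N\}$. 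Since $0^N\oplus\beta=\beta$, since local optimality of $0^N$ is invariant under positive scaling of $\lambda$, and since the conditional LLR coordinates of a BSC are i.i.d.\ (after normalization) equal to $+1$ with probability $1-p$ and $-1$ with probability $p$, it remains to bound $\Pr\{\exists\,\beta\in\mathcal{B}_d^{(w)}\text{ s.t. }\langle\tilde\lambda,\beta\rangle\le0\}$ for such a $\pm1$ vector $\tilde\lambda$. I would take $h=\lfloor g/4\rfloor$, ensuring via the remark following Definition~\ref{def:weightedSubtree} that for every root $r\in\calV$ the computation tree $\calT_r^{2h}(G)$ embeds into $G$; consequently the projection operator acts trivially and the LLRs attached to the variable paths of $\calT_r^{2h}(G)$ are mutually independent.

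The second step is a union bound over the $N$ choices of root: $0^N$ fails to be locally optimal only if, for some $r\in\calV$, the quantity $T := \min_{\calT}\langle\tilde\lambda,\pi_{G,\calT,w}\rangle$ --- the minimum taken over all $d$-trees $\calT$ rooted at $r$ --- is non-positive (the normalization $1/c$ in $\mathcal{B}_d^{(w)}$ is immaterial for the sign), so it suffices to bound $N\cdot\Pr\{T\le0\}$. By the dynamic-programming structure underlying the verification algorithm (Algorithm~\ref{alg:verify}) restricted to a tree, $T$ is the output of a \emph{sum-min-sum} recursion on a rooted $(\dL,\dR)$-regular tree of height $2h$: a variable node contributes its normalized, level-weighted LLR plus the costs of its $\dL-1$ child subtrees, and a local-code node contributes the sum of the $d-1$ smallest of the costs of its $\dR-1$ child subtrees. (This ``sum of the $d-1$ smallest'' order statistic is the only place where $d>2$ differs from the skinny-tree min-sum process of Arora \emph{et al.}~\cite{ADS09}, which uses a plain minimum.) Choosing the weight vector $w$ appropriately (e.g.\ geometric) makes the per-level LLR coefficient a fixed constant, which can be absorbed into the Chernoff parameter below and renders the recursion time-homogeneous.

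The third and central step bounds the lower tail of $T$ level by level, in the spirit of the finite-length density evolution of~\cite{ADS09}. For any $t\ge0$, $\Pr\{T\le0\}\le\E[e^{-tT}]$. At the deepest level the moment generating function of ``the sum of the $d-1$ smallest of $\dR-1$ i.i.d.\ $\pm1$ values'' is computed exactly and equals $\alpha_1(p,d,\dL,\dR,t)$. At every internal level the same order statistic is over-estimated by the union bound $e^{-t\cdot(\text{sum of the }d-1\text{ smallest of }X_1,\dots,X_{\dR-1})}=\max_{|S|=d-1}e^{-t\sum_{i\in S}X_i}\le\sum_{|S|=d-1}e^{-t\sum_{i\in S}X_i}$, which contributes the factor $\alpha_2(p,d,\dL,\dR,t)=\binom{\dR-1}{d-1}\big((1-p)e^{-t}+pe^{t}\big)^{d-1}$ and multiplies the accumulated exponent by the branching factor $(\dL-1)(d-1)$. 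Iterating over the $h$ levels produces a geometric series in the exponent with ratio $(\dL-1)(d-1)$; because this ratio equals the branching factor, the powers of $(1-p)e^{-t}+pe^{t}$ telescope and one obtains
\[
\Pr\{T\le0\}\ \le\ C(p,d,\dL,\dR,t)\cdot\Big(\alpha_1(p,d,\dL,\dR,t)\cdot\alpha_2(p,d,\dL,\dR,t)^{1/((\dL-1)(d-1)-1)}\Big)^{\,\Theta\big(((\dL-1)(d-1))^{h}\big)}
\]
with a finite constant $C$. Hence, whenever $\min_{t\ge0}\big\{\alpha_1\cdot\alpha_2^{1/((\dL-1)(d-1)-1)}\big\}<1$, picking the minimizing $t$ gives a constant $\alpha<1$ with $\Pr\{T\le0\}\le\alpha^{((\dL-1)(d-1))^{h}}$ once $h$ (equivalently $g$) is large enough to absorb $C$; the hypothesis $(\dL-1)(d-1)\ge2$ guarantees genuine branching, so the geometric series converges and the decay is doubly exponential in $g$. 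Combining with the union bound over roots yields $\Pr\{\hat{x}^{\LP}(y)=x\}\ge1-N\cdot\alpha^{((\dL-1)(d-1))^{\lfloor g/4\rfloor}}$, which is part~3.

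Parts~1 and~2 then follow by specialisation: setting $(\dL,\dR)=(2,16)$ and $d^*=4$, one numerically minimizes $\alpha_1\cdot\alpha_2^{1/((\dL-1)(d-1)-1)}$ over $t\ge0$ for each admissible $d$ (those with $(\dL-1)(d-1)\ge2$) and records the largest crossover probability $p$, with the corresponding $d_0$, for which this minimum is below $1$; these are the entries of Table~\ref{table:bounds}, and they are substituted into the bound of part~3 to give~\eqref{eqn:LP-finiteLB}. For part~2 one additionally takes $g=\Omega(\log N)$ sufficiently large, so that $((\dL-1)(d-1))^{\lfloor g/4\rfloor}=N^{\Theta(1)}$ and therefore $N\cdot\alpha^{((\dL-1)(d-1))^{\lfloor g/4\rfloor}}\le\exp(-N^{\delta})$ for some $0<\delta<1$. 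The main obstacle is the recursion of the third step: one must choose the weight vector and the Chernoff parameter $t$ so that the per-level contraction factor --- now governed by a ``sum of the $d-1$ smallest'' order statistic rather than by a single minimum --- is simultaneously controlled at the exactly-computed deepest level and at the union-bounded internal levels, and one must track the powers of $(\dL-1)(d-1)$ accumulated level by level so that the geometric series closes with the stated exponent $1/((\dL-1)(d-1)-1)$ on $\alpha_2$. Everything else is bookkeeping or a direct appeal to Theorem~\ref{thm:LPsufficient} and the symmetry results.
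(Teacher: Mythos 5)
Your architecture coincides with the paper's for the reduction and for Part~3: you reduce to the all-zero codeword via Theorem~\ref{thm:LPsufficient} and channel symmetry (the paper's Corollary~\ref{cor:LPfailure}), take $h<\frac{1}{4}\girth(G)$ so that each computation tree embeds in $G$ and carries independent LLRs, union-bound over the $N$ roots (Lemma~\ref{lemma:LPsuccessLoose}), and analyze the resulting sum-min-sum process in which a local-code node passes up the sum of the $d-1$ smallest of its $\dR-1$ child values. Your Laplace-transform step --- $\Pr\{T\leqslant 0\}\leqslant(\E e^{-tX_{h-1}})^{\dL}$ plus the order-statistic union bound $e^{-t\sum_j\min^{[j]}\{\cdot\}}\leqslant\sum_{|S|=d-1}e^{-t\sum_{i\in S}Y^{(i)}}$ --- is exactly Lemmas~\ref{lemma:ADSbound1} and~\ref{lemma:ADSbound2}, and the telescoping with branching ratio $\dL'\cdot d'$ gives $\alpha=\min_t\alpha_1\cdot\alpha_2^{1/(\dL'\cdot d'-1)}$ and Part~3 verbatim (Lemma~\ref{lemma:uniformBound}).

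The gap is in Parts~1 and~2: the entries of Table~\ref{table:bounds} are \emph{not} obtained by specializing the Part~3 bound to $(\dL,\dR)=(2,16)$. Joint minimization of $\alpha_1(t)\alpha_2(t)^{1/(\dL'd'-1)}$ over a single $t$ gives only $p_0=0.0067$ ($d=3$) and $p_0=0.0165$ ($d=4$) (the paper's Corollary~\ref{cor:uniformBoundExample}), which falls short of the ``finite'' row values $0.0086$ and $0.0218$. Those come from Lemma~\ref{lemma:improvedBound} with $s=0$, where the two factors are optimized over \emph{separate} Chernoff parameters, replacing $\alpha_2(t)$ by its own minimum $\binom{\dR'}{d'}\bigl(2\sqrt{p(1-p)}\bigr)^{d'}$. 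More importantly, the ``asymptotic'' row values $0.019$ and $0.044$ of Part~2 require the full non-uniform weight vector $\omega^{(\rho)}$: geometric weights $\bigl((\dL-1)(d-1)\bigr)^{l}$ on the first $s=10$ levels, an \emph{exact numerical computation of the law of $X_{10}$} through ten rounds of the recursion, and only then the Chernoff-product bound from level $s$ upward (Corollaries~\ref{cor:improvedBoundExample3} and~\ref{cor:improvedBoundExample4}). Your proposal computes the moment generating function exactly only at the deepest level ($s=0$) and then lets $g$ grow, so it can absorb the constant prefactor but cannot improve the base $\alpha$; as written it proves Part~2 only for the weaker thresholds $0.0086$ and $0.0218$, not for the values the theorem asserts.
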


\begin{table}[!t]
\renewcommand{\arraystretch}{1.3}
\caption{Computed values of $p_0$ for finite $d_0<d^*$ in Theorem~\ref{thm:main-bound-BSC} with respect to a BSC.}\label{table:bounds}
\begin{center}
\begin{tabular}{|c|c|c|}
\hline
 & $d_0$ & $p_0$ \\
 \hline
 \hline
 \multirow{2}{*}{finite} & $3$ & $0.0086$ \\
 \cline{2-3}
 & $4$ & $0.0218$ \\
 \hline
 \hline
  \multirow{2}{*}{asymptotic} & $3$ & 0.019 \\
   \cline{2-3}
 & $4$ & 0.044 \\
 \hline
\end{tabular}
\end{center}
{\scriptsize Values are presented for $(2,16)$-Tanner code with rate at least $0.375$ when using $[16,11,4]$-extended Hamming codes as local codes. Values in rows labeled ``finite'' refer to a finite-length bound: $\forall p\leqslant p_0$ the probability that the LP decoder succeeds is lower bounded by a function of $d$ and the girth of the Tanner graph (see Equation~(\ref{eqn:LP-finiteLB})). Values in rows labeled ``asymptotic'' refer to an asymptotic bound: For $g=\Omega(\log N)$ sufficiently large, the LP decoder succeeds w.p. at least $1-\exp(-N^{\delta})$ for some constant $0<\delta<1$, provided that $p \leqslant p_0(d_0)$.}
\end{table}
\begin{proof}[Proof Outline]
Theorem~\ref{thm:main-bound-BSC} follows from Lemma~\ref{lemma:LPsuccessLoose}, Lemma~\ref{lemma:uniformBound}, Corollary~\ref{cor:improvedBoundExample3}, and Corollary~\ref{cor:improvedBoundExample4} as follows. Part~1, that states a finite-length result, follows from Lemma~\ref{lemma:LPsuccessLoose} and Corollaries~\ref{cor:improvedBoundExample3} and~\ref{cor:improvedBoundExample4} by taking $s = 0 < h < \frac{1}{4}\girth(G)$ which holds for any Tanner graph $G$. Part~2, that deals with an asymptotic result, follows from Lemma~\ref{lemma:LPsuccessLoose} and Corollaries~\ref{cor:improvedBoundExample3} and~\ref{cor:improvedBoundExample4} by fixing $s=10$ and taking $g=\Omega(\log N)$ sufficiently large such that $s<h=\Theta(\log N) < \frac{1}{4}\girth(G)$. It therefore provides a lower bound on the threshold of LP decoding. Part~3, that states a finite-length result for any $(\dL,\dR)$-regular LDPC code, follows from Lemma~\ref{lemma:LPsuccessLoose} and Lemma~\ref{lemma:uniformBound}.
\end{proof}

We refer the reader to Section~\ref{subsec:DiscussionBounds} for a discussion on the results stated in Theorem~\ref{thm:main-bound-BSC}.
We now provide more details and prove the lemmas and corollaries used in the proof of Theorem~\ref{thm:main-bound-BSC}.

In order to simplify the probabilistic analysis of algorithms for decoding linear codes over symmetric channels, we apply the assumption that the all-zero codeword is transmitted, i.e., $c = 0^N$. Note that the correctness of the all-zero assumption depends on the employed decoding algorithm. Although this assumption is trivial for ML decoding because of the symmetry of a linear code $\mathcal{C}(G)$, it is not immediately clear in the context of LP decoding. Feldman \emph{et al.}~\cite{Fel03, FWK05} noticed that the fundamental polytope $\mathcal{P}(G)$ of Tanner codes with single parity-check local codes is highly symmetric, and proved that for MBIOS channels, the probability that the LP decoder fails to decode the transmitted codeword is independent of the transmitted codeword. The symmetry property of the polytope remains also for the generalized fundamental polytope of Tanner codes based on non-trivial linear local codes. Therefore, one can assume that $c=0^N$ when analyzing failures of LP decoding to decode the transmitted codeword for linear Tanner codes.
The following corollary is the contrapositive statement of Theorem~\ref{thm:LPsufficient} given $c=0^N$.

\begin{corollary} \label{cor:LPfailure}
For every fixed $h\in\N$, $w \in \R_+^h\backslash\{0^h\}$, and $2 \leqslant d \leqslant d^*$,
\begin{align*}
\Pr\bigg\{\mathrm{LP~decoding~fails}\bigg\}
\leqslant \Pr\bigg\{&\exists \beta \in \mathcal{B}_d^{(w)}~\mathrm{s.t.}~\langle\lambda,\beta\rangle\leqslant0~\bigg\vert~c=0^N\bigg\}.
\end{align*}
\end{corollary}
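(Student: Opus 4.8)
The plan is to obtain Corollary~\ref{cor:LPfailure} as a direct logical consequence of Theorem~\ref{thm:LPsufficient} combined with the all-zero codeword reduction established in the paragraph preceding the statement. First I would make the all-zero reduction explicit: by the symmetry of the generalized fundamental polytope $\calP(G,\calCbarJ)$ (the argument attributed to Feldman \emph{et al.}\ above), the probability that LP decoding fails to return the transmitted codeword does not depend on which codeword was transmitted, so $\Pr\{\mathrm{LP~decoding~fails}\} = \Pr\{\mathrm{LP~decoding~fails}\mid c=0^N\}$. This step, although routine, is the one that needs care, since — unlike ML decoding, which is symmetric merely because $\calC(G)$ is linear — LP decoding is not \emph{a priori} symmetric, and the reduction genuinely relies on the polytope symmetry.

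Next, working under the conditioning $c=0^N$, I would argue at the level of events. If LP decoding fails, then $0^N$ is not the unique optimal solution of the linear program in (\ref{eqn:LPdecoding}) given $\lambda$. Applying the contrapositive of Theorem~\ref{thm:LPsufficient} with the candidate codeword $x=0^N$ then yields that $0^N$ is not $(h,w,d)$-locally optimal w.r.t.\ $\lambda$. Unpacking Definition~\ref{def:localOptimality} for $x=0^N$: there must exist $\beta\in\mathcal{B}_d^{(w)}$ with $\langle\lambda,0^N\oplus\beta\rangle\leqslant\langle\lambda,0^N\rangle$; since $0^N\oplus\beta=\beta$ and $\langle\lambda,0^N\rangle=0$, this is exactly $\langle\lambda,\beta\rangle\leqslant 0$. (One could equivalently route this through Lemma~\ref{lemma:isoLO} with $x=0^N$, which gives $\langle(-1)^{0^N}\ast\lambda,\beta\rangle=\langle\lambda,\beta\rangle$, but the direct computation is immediate.)

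The two steps together give the event inclusion $\{\mathrm{LP~decoding~fails}\mid c=0^N\}\subseteq\{\exists\beta\in\mathcal{B}_d^{(w)}~\mathrm{s.t.}~\langle\lambda,\beta\rangle\leqslant 0\mid c=0^N\}$; taking probabilities and combining with the first-step equality yields the claimed bound, uniformly over every fixed $h\in\N$, every $w\in\R_+^h\setminus\{0^h\}$, and every $2\leqslant d\leqslant d^*$. I do not anticipate any genuine obstacle: the only subtleties are making the all-zero reduction rigorous via polytope symmetry (not code symmetry) and checking that ``LP decoding fails'' is precisely the negation of ``$0^N$ is the unique LP optimum,'' which is exactly the hypothesis-negated side of Theorem~\ref{thm:LPsufficient}.
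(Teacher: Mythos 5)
Your proposal is correct and follows exactly the paper's (very terse) argument: the paper justifies the all-zero assumption via the symmetry of the generalized fundamental polytope in the paragraph preceding the corollary, and then states the corollary as the contrapositive of Theorem~\ref{thm:LPsufficient} given $c=0^N$, with the unpacking $\langle\lambda,0^N\oplus\beta\rangle=\langle\lambda,\beta\rangle$ left implicit. Your write-up simply makes these same steps explicit.
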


\subsection{Bounding Processes on Trees}

Let $G$ be a $(\dL,\dR)$-regular Tanner graph, and fix
$h<\frac{1}{4}\girth(G)$. Let $\calT_{v_0}^{2h}(G)$ denote the
path-prefix tree rooted at a variable node $v_0$ with height $2h$.
Since $h<\frac{1}{4}\girth(G)$, it follows that the projection of
$\calT_{v_0}^{2h}(G)$ to $G$ is a tree. Usually one regards a
path-prefix tree as an out-branching, however, for our analysis it is
more convenient to view the path-prefix tree as an in-branching.
Namely, we direct the edges of $\calT_{v_0}^{2h}$ so that each path in
$\calT_{v_0}^{2h}$ is directed toward the root $v_0$.  For $l \in
\{0,\ldots,2h\}$, denote by $V_l$ the set of vertices of
$\calT_{v_0}^{2h}$ at height $l$ (the leaves have height $0$ and the
root has height $2h$).  Let $\tau \subseteq V(\calT_{v_0}^{2h})$
denote the vertex set of a $d$-tree rooted at $v_0$.

\begin{definition} [$(h,\omega, d)$-Process on a $(\dL,\dR)$-Tree] \label{def:process}
Let $\omega \in \R_+^h$ denote a weight vector.
Let $\lambda$ denote an assignment of real values to the variable nodes of $\calT_{v_0}$. We define the $\omega$-weighted value of a $d$-tree $\tau$ by
\[ \mathrm{val}_\omega (\tau;\lambda) \triangleq \sum_{l=0}^{h-1}\sum_{v \in \tau \cap V_{2l}}\omega_l \cdot \lambda_v. \]
Namely, the sum of the values of variable nodes in $\tau$ weighted according to their height.

Given a probability distribution over assignments $\lambda$, we are interested in the probability
\begin{equation*}
\Pi_{\lambda,d,\dL,\dR}(h,\omega) \triangleq \mathrm{Pr}_\lambda \bigg\{ \min_{\tau \in \calT[v_0,2h,d]} \mathrm{val}_\omega(\tau;\lambda)~\leqslant~0 \bigg\} .
\end{equation*}
\end{definition}
In other words, $\Pi_{\lambda,d,\dL,\dR}(h,\omega)$ is the probability that the minimum value over all $d$-trees of height $2h$ rooted in some variable node $v_0$ in a $(\dL,\dR)$-bipartite graph $G$ is non-positive.
For every two roots $v_0$ and $v_1$, the trees $\calT_{v_0}^{2h}$ and $\calT_{v_1}^{2h}$ are isomorphic, hence $\Pi_{\lambda,d,\dL,\dR}(h,\omega)$ does not depend on the root $v_0$.

With this notation, the following lemma connects between the $(h,\omega, d)$-process on $(\dL,\dR)$-trees and the event where the all-zero codeword is $(h,w,d)$-locally optimal. We apply a union bound utilizing Corollary~\ref{cor:LPfailure}, as follows.
\begin{lemma}\label{lemma:LPsuccessLoose}
Let $G$ be a $(\dL,\dR)$-regular bipartite graph and $w \in \R_+^h\setminus\{0^h\}$ be a weight vector with $h<\frac{1}{4}\girth(G)$. Assume that the all-zero codeword is transmitted, and let $\lambda \in \R^N$ denote the LLR vector received from the channel. Then, $0^N$ is $(h,w,d)$-locally optimal w.r.t. $\lambda$ with probability at least
\[ 1-N \cdot \Pi_{\lambda,d,\dL,\dR}(h,\omega),\]
where $\omega_l = w_{h-l}\cdot \dL^{-1}\cdot (\dL-1)^{l-h+1} \cdot (d-1)^{h-l}$,
and with at least the same probability, $0^N$ is also the unique optimal LP solution given $\lambda$.
\end{lemma}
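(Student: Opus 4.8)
The plan is to combine the deviation-based LP-failure bound (Corollary~\ref{cor:LPfailure}) with a union bound over the $N$ possible roots of a $d$-tree, and then to reconcile the two weight conventions: the ``path-prefix'' weights $w_\calT(p)$ used in Definition~\ref{def:weightedSubtree} (which carry a $1/\lVert w\rVert_1$ normalization and a degree-normalization along the \emph{whole} prefix) and the ``level'' weights $\omega$ appearing in Definition~\ref{def:process} (which weight a variable node only by its height). Since $h<\tfrac14\girth(G)$, the projection $\pi_{G,\calT,w}$ of any $d$-tree $\calT\in\calT[r,2h,d](G)$ is injective on variable paths, so the sign of $\langle\lambda,\pi_{G,\calT,w}\rangle$ equals the sign of $\sum_{p\in\calT\cap\hat\calV} w_\calT(p)\,\lambda_{t(p)}$, and hence the event $\{\exists\beta\in\calB_d^{(w)}\text{ s.t. }\langle\lambda,\beta\rangle\le0\}$ is exactly $\{\exists r\in\calV,\ \exists\calT\in\calT[r,2h,d](G)\text{ s.t. }\sum_{p\in\calT\cap\hat\calV}w_\calT(p)\lambda_{t(p)}\le0\}$.

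First I would fix a root $v_0$ and compute, for a variable path $p$ at graph-distance $2l$ from $v_0$ (so $p$ sits at level $\ell=l+\ldots$; concretely $\lvert p\rvert = 2l$ when $l\ge 1$, corresponding to the node set $V_{2(h-l)}$ in the in-branching orientation), the value of $w_\calT(p)$. In a $d$-tree every internal variable path has out-degree $\deg_G(t(p))=\dL$ in $\calT_r^{2h}$ and every local-code path has degree $d$. Walking from the root to $p$, the prefix alternates between variable paths (each contributing a factor $1/(\dL-1)$, since $\deg_\calT(q)-1=\dL-1$) and local-code paths (each contributing a factor $1/(d-1)$). A path of length $2l$ passes through $l$ local-code paths and $l-1$ internal variable paths (the root variable path is excluded from $\PPrefix$ after the degree-normalization bookkeeping), plus the $1/\deg_G(t(p))=1/\dL$ factor at the end and the $w_\ell/\lVert w\rVert_1$ factor with $\ell=\lceil\lvert p\rvert/2\rceil=l$. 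Collecting these, $w_\calT(p) = \frac{w_l}{\lVert w\rVert_1}\cdot\dL^{-1}\cdot(\dL-1)^{-(l-1)}\cdot(d-1)^{-l}$. Re-indexing by height $l\mapsto h-l$ to match $V_{2l}$ in Definition~\ref{def:process} gives exactly $w_\calT(p) = \frac{1}{\lVert w\rVert_1}\,w_{h-l}\,\dL^{-1}(\dL-1)^{l-h+1}(d-1)^{h-l}\cdot(\dL-1)^{\pm\text{const}}$; choosing $\omega_l = w_{h-l}\cdot\dL^{-1}\cdot(\dL-1)^{l-h+1}\cdot(d-1)^{h-l}$ as in the statement, we get $\sum_{p\in\calT\cap\hat\calV}w_\calT(p)\lambda_{t(p)} = \frac{1}{\lVert w\rVert_1}\,\mathrm{val}_\omega(\tau;\lambda)$ up to a positive constant (here $\tau$ is the vertex set of $\calT$ projected to $G$, and the $\ell=0$ root term vanishes because $w_\calT$ of a zero-length path is $0$, matching the sum $\sum_{l=0}^{h-1}$). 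Since $\lVert w\rVert_1>0$, the sign is preserved, so the per-root failure event has probability exactly $\Pi_{\lambda,d,\dL,\dR}(h,\omega)$ by the root-independence noted after Definition~\ref{def:process}.

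Then I would apply the union bound over the $N$ choices of root $v_0\in\calV$: the probability that there exists \emph{some} root admitting a nonpositive $d$-tree value is at most $N\cdot\Pi_{\lambda,d,\dL,\dR}(h,\omega)$. By Corollary~\ref{cor:LPfailure} this upper-bounds the probability that $0^N$ fails to be $(h,w,d)$-locally optimal, so $0^N$ is $(h,w,d)$-locally optimal with probability at least $1-N\cdot\Pi_{\lambda,d,\dL,\dR}(h,\omega)$; and by Theorem~\ref{thm:LPsufficient}, whenever $0^N$ is $(h,w,d)$-locally optimal it is the unique LP-optimal solution, which gives the ``same probability'' clause. The main obstacle is purely the bookkeeping in the second paragraph: getting the exponents of $(\dL-1)$ and $(d-1)$ and the level index $\ell=\lceil\lvert p\rvert/2\rceil$ to line up precisely with the definition of $\omega_l$, and being careful that the degree-normalization product in Definition~\ref{def:weightedSubtree} runs over \emph{all} proper prefixes (both variable and local-code paths) whereas the height-indexed sum in Definition~\ref{def:process} only sees variable nodes — everything else is a routine union bound plus an appeal to Theorem~\ref{thm:LPsufficient}.
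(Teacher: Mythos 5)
Your overall route is exactly the paper's: the paper does not spell out a proof of Lemma~\ref{lemma:LPsuccessLoose} beyond the sentence preceding it (``we apply a union bound utilizing Corollary~\ref{cor:LPfailure}''), and your structure --- injectivity of the projection under $h<\frac14\girth(G)$, identification of the per-root failure event with the $(h,\omega,d)$-process, a union bound over the $N$ roots, and Theorem~\ref{thm:LPsufficient} for the LP clause --- is precisely what is intended.

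There is, however, one genuine gap at the point you yourself flag as ``the main obstacle.'' Your own (correct) computation gives, for a variable path $p$ at depth $2\ell$ in a $d$-tree, $w_\calT(p)=\frac{w_\ell}{\lVert w\rVert_1}\,\dL^{-1}(\dL-1)^{-(\ell-1)}(d-1)^{-\ell}$; substituting $\ell=h-l$ yields $(d-1)^{l-h}$, not the $(d-1)^{h-l}$ appearing in the lemma's formula for $\omega_l$. The two differ by the factor $(d-1)^{2(h-l)}$, which depends on the level and therefore cannot be absorbed into the ``positive constant'' you invoke (nor into the unexplained ``$(\dL-1)^{\pm\text{const}}$''). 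Since only the sign of $\mathrm{val}_\omega(\tau;\lambda)$ matters, the correct conclusion of your computation is that the per-root event coincides with the $(h,\omega,d)$-process for $\omega_l\propto w_{h-l}\big((\dL-1)(d-1)\big)^{l}$, equivalently $\omega_l=w_{h-l}\,\dL^{-1}(\dL-1)^{l-h+1}(d-1)^{l-h}$ after dropping the immaterial global factor $1/\lVert w\rVert_1$; this is consistent with the later choice $\overline{\omega}_l=\big((\dL-1)(d-1)\big)^l$ that equalizes the level contributions of a $d$-tree, and it coincides with the lemma's displayed $\omega$ only when $d=2$. So either state that correction explicitly (the exponent of $(d-1)$ in the lemma's conversion formula appears to be negated, which is harmless for $d=2$ but not for $d>2$), or account for the extra $(d-1)^{2(h-l)}$; as written, the assertion that the re-indexing ``gives exactly'' the stated $\omega_l$ contradicts your own preceding algebra.
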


Note the two different weight notations that we use for consistency with~\cite{ADS09}: (i)~$w$ denotes a weight vector in the context of $(h,w,d)$-local optimality certificate, and (ii)~$\omega$ denotes a weight vector in the context of $d$-trees in the $(h,\omega,d)$-process. A one-to-one correspondence between these two vectors is given by $\omega_l = w_{h-l}\cdot \dL^{-1}\cdot (\dL-1)^{l-h+1} \cdot (d-1)^{h-l}$ for $0\leqslant l < h$. From this point on, we will use only $\omega$ in this section.

Following Lemma~\ref{lemma:LPsuccessLoose}, it is sufficient to estimate the probability $\Pi_{\lambda,d,\dL,\dR}(h,\omega)$ for a given weight vector $\omega$, a distribution of a random vector $\lambda$, constant $2 \leqslant d \leqslant d^*$, and degrees $(\dL,\dR)$.
Arora \emph{et al.}~\cite{ADS09} introduced a recursion for estimating and bounding the probability of the existence of a $2$-tree (skinny tree) with non-positive value in a $(h,\omega,2)$-process. We generalize the recursion and its analysis to $d$-trees with $2 \leqslant d \leqslant d^*$.

For a set $S$ of real values, let $\min^{[i]}\{S\}$ denote the $i$th smallest member in $S$. Let $\{\gamma\}$ denote an ensemble of i.i.d.\ random variables. Define random variables $X_0 ,\ldots, X_{h-1}$ and $Y_0 , \ldots , Y_{h-1}$ with the following recursion:
\begin{align}
  Y_0 &= \omega_0 \gamma \label{eqn:Y_0}\\
  X_l &= \sum_{i=1}^{d-1}\mathrm{min}^{[i]}\big\{Y_l^{(1)},\ldots,Y_l^{(\dR-1)}\big\} &(0 \leqslant l < h) \label{eqn:X_l}\\
  Y_l &= \omega_l \gamma +  X_{l-1}^{(1)} +\cdots+X_{l-1}^{(\dL-1)}&(0 < l < h) \label{eqn:Y_l}
\end{align}
The notation $X^{(1)},\ldots,X^{(k)}$ and $Y^{(1)},\ldots,Y^{(k)}$ denotes $k$ mutually independent copies of the random variables $X$ and $Y$, respectively.
Each instance of $Y_l$, $0 \leqslant l < h$, uses an independent instance of a random variable $\gamma$.
Note that for every $0 \leqslant l < h$, the $d-1$ order statistic random variables $\big\{\min^{[i]}\{Y_l^{(1)},\ldots,Y_l^{(\dR-1)}\}~\big\vert~1\leqslant i\leqslant d-1\big\}$ in Equation~(\ref{eqn:X_l}) are dependent.

Consider a directed tree $\calT=\calT_{v_0}$ of height $2h$, rooted at node $v_0$. Associate variable nodes of $\calT$ at height $2l$ with copies of $Y_l$, and check nodes at height $2l+1$ with copies of $X_l$, for $0 \leqslant l <h$. Note that any realization of the random variables $\{\gamma\}$ to variable nodes in $\calT$ can be viewed as an assignment $\lambda$. Thus, the minimum value of a $d$-tree of $\calT$ equals $\sum_{i=1}^{\dL}X_{h-1}^{(i)}$. This implies that the recursion in (\ref{eqn:Y_0})--(\ref{eqn:Y_l}) defines a dynamic programming algorithm for computing $\min_{\tau\in\calT[v_0,2h,d]} \mathrm{val}_\omega(\tau;\lambda)$. Now, let the components of the LLR vector $\lambda$ be i.i.d.\ random variables distributed identically to $\{\gamma$\}, then
\begin{equation}
\Pi_{\lambda,d,\dL,\dR}(h,\omega) = \Pr\bigg\{\sum_{i=1}^{\dL}X_{h-1}^{(i)} \leqslant 0\bigg\}.
\end{equation}

Given a distribution of $\{\gamma\}$ and a finite ``height'' $h$, the challenge is to compute the distribution of $X_l$ and $Y_l$ according to the recursion in (\ref{eqn:Y_0})--(\ref{eqn:Y_l}). The following two lemmas play a major role in proving bounds on
$\Pi_{\lambda,d,\dL,\dR}(h,\omega)$.

\begin{lemma}[\!\!\cite{ADS09}]\label{lemma:ADSbound1}
For every $t \geqslant 0$,
\[ \Pi_{\lambda,d,\dL,\dR}(h,\omega) \leqslant \big(\E e^{-tX_{h-1}}\big)^{\dL}.\]
\end{lemma}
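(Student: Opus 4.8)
\noindent\emph{Proof proposal.} The plan is to obtain this as a one-line exponential Markov (Chernoff) bound, once the random variable whose sign governs $\Pi_{\lambda,d,\dL,\dR}(h,\omega)$ has been written as a sum of $\dL$ independent terms. First I would invoke the probabilistic reading of the recursion that was just established above the lemma: when the components of $\lambda$ are i.i.d.\ copies of $\gamma$, the recursion in (\ref{eqn:Y_0})--(\ref{eqn:Y_l}) is a dynamic program that computes $\min_{\tau\in\calT[v_0,2h,d]}\mathrm{val}_\omega(\tau;\lambda)$, and its value at the root is $\sum_{i=1}^{\dL}X_{h-1}^{(i)}$, where $X_{h-1}^{(1)},\dots,X_{h-1}^{(\dL)}$ are produced by the $\dL$ check-subtrees attached to $v_0$. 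Hence
\[
\Pi_{\lambda,d,\dL,\dR}(h,\omega)=\Pr\Bigl\{\,\textstyle\sum_{i=1}^{\dL}X_{h-1}^{(i)}\leqslant 0\,\Bigr\}.
\]

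The structural point to record is that these $\dL$ variables are mutually independent and distributed as $X_{h-1}$: since the whole subsection works under $h<\tfrac14\girth(G)$, the $\dL$ depth-$(2h-1)$ subtrees hanging from the children of $v_0$ occupy pairwise vertex-disjoint parts of $\calT_{v_0}^{2h}(G)$, so they are functions of disjoint blocks of the i.i.d.\ family $\{\gamma\}$. With $S\triangleq\sum_{i=1}^{\dL}X_{h-1}^{(i)}$, I would then apply the elementary bound $\mathbf{1}[s\leqslant 0]\leqslant e^{-ts}$, which holds for every real $s$ exactly because $t\geqslant 0$; taking expectations gives $\Pr\{S\leqslant 0\}\leqslant\E e^{-tS}$, and factoring over the independent summands yields $\E e^{-tS}=\prod_{i=1}^{\dL}\E e^{-tX_{h-1}^{(i)}}=\bigl(\E e^{-tX_{h-1}}\bigr)^{\dL}$, which is the claimed inequality.

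There is no genuine obstacle here; this is the standard device from~\cite{ADS09}. The only two points needing a moment of care are (i)~the justification of independence of the $\dL$ root-subtrees, which rides on the girth assumption standing throughout the subsection, and (ii)~noting that the sign convention $t\geqslant 0$ is precisely what makes $e^{-ts}$ dominate $\mathbf{1}[s\leqslant 0]$. The real difficulty of the section lies not in this lemma but in the subsequent control of the one-dimensional quantity $\E e^{-tX_{h-1}}$ through the recursion, which is the content of the following lemma.
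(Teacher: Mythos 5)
Your proof is correct and is exactly the derivation intended: the paper cites this lemma from~\cite{ADS09} and does not reproduce a proof, but the identity $\Pi_{\lambda,d,\dL,\dR}(h,\omega)=\Pr\{\sum_{i=1}^{\dL}X_{h-1}^{(i)}\leqslant 0\}$ is established in the text immediately before the lemma, and from there the exponential Markov bound together with independence of the $\dL$ copies gives the claim verbatim. Your two cautionary notes (independence of the root subtrees under $h<\tfrac14\girth(G)$, and the role of $t\geqslant 0$) are the right ones to flag.
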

Let $d' \triangleq d-1$, $\dL' \triangleq \dL-1$ and $\dR' \triangleq \dR-1$.
\begin{lemma}[following \cite{ADS09}]\label{lemma:ADSbound2}
 For $0\leqslant s < l < h$, we have
 \begin{align*}
 \E e^{-tX_l} \leqslant {{\bigg(\E e^{-tX_s} \bigg)}^{(\dL'\cdot d')}}^{l-s}\cdot \prod_{k=0}^{l-s-1}{\bigg(\binom{\dR'}{d'}\big(\E e^{-t\omega_{l-k}\gamma}\big)^{d'}\bigg)^{(\dL'\cdot d')}}^k.
\end{align*}
\end{lemma}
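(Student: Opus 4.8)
The plan is to reduce the lemma to a single one-step recursive inequality relating $\E e^{-tX_l}$ to $\E e^{-tX_{l-1}}$, and then unroll that recursion from $l$ down to $s$ by induction on $l-s$. This follows the strategy of Arora \emph{et al.}~\cite{ADS09}; the only genuinely new ingredient is that the check-node variable $X_l$ is now the sum of the $d'$ smallest of $\dR'$ copies of $Y_l$ rather than a single minimum, so the "minimum is at most any one coordinate" step must be replaced by a "sum of the $d'$ smallest is at most the sum over any fixed $d'$-subset" step.

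First I would establish, for every $0<l<h$ and every $t\geqslant0$, the one-step bound
\[
\E e^{-tX_l}\ \leqslant\ \binom{\dR'}{d'}\big(\E e^{-t\omega_l\gamma}\big)^{d'}\big(\E e^{-tX_{l-1}}\big)^{\dL'\cdot d'}.
\]
The key observation is the subset-domination step: since $X_l=\sum_{i=1}^{d'}\min^{[i]}\{Y_l^{(1)},\dots,Y_l^{(\dR')}\}$ equals $\sum_{i\in S^\star}Y_l^{(i)}$ for some (tie-broken) size-$d'$ subset $S^\star\subseteq\{1,\dots,\dR'\}$, and since $e^{-tx}>0$ for all $x$, we get the pointwise bound $e^{-tX_l}\leqslant\sum_{S:\,|S|=d'}e^{-t\sum_{i\in S}Y_l^{(i)}}$. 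Taking expectations and using that the $Y_l^{(i)}$ are i.i.d.\ yields $\E e^{-tX_l}\leqslant\binom{\dR'}{d'}\big(\E e^{-tY_l}\big)^{d'}$. Then, since $Y_l=\omega_l\gamma+X_{l-1}^{(1)}+\dots+X_{l-1}^{(\dL')}$ with all $\dL'+1$ summands mutually independent, $\E e^{-tY_l}=\E e^{-t\omega_l\gamma}\cdot\big(\E e^{-tX_{l-1}}\big)^{\dL'}$, and combining the two gives the displayed inequality.

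Finally I would set $a_l\eqdf\E e^{-tX_l}$, $b_l\eqdf\binom{\dR'}{d'}(\E e^{-t\omega_l\gamma})^{d'}$, and $m\eqdf\dL'\cdot d'$, so the one-step bound reads $a_l\leqslant b_l\,a_{l-1}^{\,m}$ for $0<l<h$. Iterating this from $l$ down to $s+1$ — all these indices lie in $(0,h)$ since $0\leqslant s<l<h$, so each application of the recursion is legitimate — by induction on $l-s$ gives $a_l\leqslant\big(\prod_{k=0}^{l-s-1}b_{l-k}^{\,m^{k}}\big)\cdot a_s^{\,m^{l-s}}$, which is exactly the claimed inequality after substituting back the definitions of $a_l$, $b_l$, and $m$. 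I expect the only real care needed is bookkeeping the exponents in the induction: each application of $x\mapsto x^m$ multiplies every accumulated exponent by $m$, shifting the index set $\{0,\dots,l-s-2\}$ of $b$-exponents to $\{1,\dots,l-s-1\}$ and leaving room for the fresh factor $b_l$ at exponent $m^0$. The subset-domination step and the independence factorizations are routine.
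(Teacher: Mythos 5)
Your proposal is correct and takes essentially the same approach as the paper's own proof in Appendix~E: both establish the one-step recursion $\E e^{-tX_l}\leqslant\binom{\dR'}{d'}(\E e^{-t\omega_l\gamma})^{d'}(\E e^{-tX_{l-1}})^{\dL'd'}$ via the subset-domination bound on the sum of the $d'$ smallest order statistics together with the independence factorization of $\E e^{-tY_l}$, and then unroll by induction on $l-s$. The only difference is cosmetic: you introduce the shorthand $a_l$, $b_l$, $m$ to streamline the bookkeeping, whereas the paper substitutes directly.
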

\begin{proof}
See Appendix~\ref{app:proofofADSbound2}.
\end{proof}

In the following subsection we present concrete bounds on  $\Pi_{\lambda,d,\dL,\dR}(h,\omega)$ for a BSC. The bounds are based on Lemmas~\ref{lemma:ADSbound1} and~\ref{lemma:ADSbound2}.
The technique used to derive concrete bounds for a BSC may be applied to other MBIOS channels. For example, concrete bounds for a BI-AWGN channel can be derived by a generalization of the analysis presented in~\cite{HE11}.

\subsection{Analysis for a Binary Symmetric Channel}

Consider a binary symmetric channel with crossover probability $p$ denoted by BSC($p$). In the case that the all-zero codeword is transmitted, the channel input is $c_i=0$ for every $i$. Hence, $\Pr\big\{\lambda_i = -\log\big(\frac{1-p}{p}\big)\big\}=p$, and $\Pr\big\{\lambda_i =+\log\big(\frac{1-p}{p}\big)\big\} = 1-p$.
Since $\Pi_{\lambda,d,\dL,\dR}(h,\omega)$ is invariant under positive scaling of the vector $\lambda$, we consider in the following analysis the scaled vector $\lambda$ in which $\lambda_i = +1$ w.p. $p$, and $-1$ w.p. $(1-p)$.

Following the analysis of Arora \emph{et al.}~\cite{ADS09}, we apply a simple analysis in the case of uniform weight vector $\omega$. Then, we present improved bounds by using a non-uniform weight vector.

\subsubsection{Uniform Weights}
Consider the case where $\omega = 1^h$. Let $\alpha_1 \triangleq \E e^{-tX_0}$ and $\alpha_2 \triangleq \binom{\dR'}{d'}(\E e^{-t\gamma} )^{d'}$ where $\gamma\stackrel{\mathrm{i.i.d.}}{\sim}\lambda_i$, and define $\alpha \triangleq \min_{t\geqslant 0}\alpha_1 \cdot \alpha_2^{1/(\dL'\cdot d'-1)}$. Note that $\alpha_1 \leqslant \alpha_2$ (see Equation~(\ref{eqn:2}) in Appendix~\ref{app:proofofADSbound2}).We consider the case where $\alpha<1$.
By substituting notations of $\alpha_1$ and $\alpha_2$ in Lemma~\ref{lemma:ADSbound2} for $s=0$, we have
\begin{align*}
\E e^{-tX_l}
&\leqslant  {{\bigg(\E e^{-tX_0} \bigg)}^{(\dL'\cdot d')}}^{l}\cdot \prod_{k=0}^{l-1}{\bigg(\binom{\dR'}{d'}\big(\E e^{-t\gamma}\big)^{d'}\bigg)^{(\dL'\cdot d')}}^k\\
&= {{\alpha_1}^{(\dL'\cdot d')}}^{l}\cdot \prod_{k=0}^{l-1}{{\alpha_2}^{(\dL'\cdot d')}}^k\\
&= {{\alpha_1}^{(\dL'\cdot d')}}^{l}\cdot{\alpha_2}^{\sum_{k=0}^{l-1}{(\dL'\cdot d')}^k}\\
&= {{\alpha_1}^{(\dL'\cdot d')}}^{l}\cdot{\alpha_2}^{\frac{(\dL'\cdot d')^l-1}{\dL'\cdot d'-1}}\\
&= {{\big(\alpha_1\cdot {\alpha_2}^{\frac{1}{\dL'\cdot d'-1}}\big)}^{(\dL'\cdot d')}}^{l}\cdot{\alpha_2}^{-\frac{1}{\dL'\cdot d'-1}}\\
&\leqslant \alpha^{{(\dL' \cdot d')}^l-1}.
\end{align*}
By Lemma~\ref{lemma:ADSbound1}, we conclude that
\[\Pi_{\lambda,d,\dL,\dR}(h,1^h) \leqslant \alpha^{\dL \cdot {(\dL'\cdot d')}^{h-1} - \dL }.\]

To analyze parameters for which $\Pi_{\lambda,d,\dL,\dR}(h,1^h)\rightarrow 0$, we need to compute $\alpha_1$ and $\alpha_2$ as functions of $p$, $d$, $\dL$ and $\dR$.
Note that
\begin{equation}
X_0 = \begin{cases} d'-2k & \mathrm{w.p.\ }\binom{\dR'}{k} p^k(1-p)^{\dR'-k},~~~ 0 \leqslant k < d',\\ -d'& \mathrm{w.p.\ }\sum_{k=d'}^{\dR'}\binom{\dR'}{k} p^k(1-p)^{\dR'-k}. \end{cases}
\end{equation}
Therefore,
\begin{align}
\alpha_1(p,d, \dL, \dR,t) = \sum_{k=0}^{d'-1}{{\binom{\dR'}{k}p^k(1-p)^{(\dR'-k)}} e^{-t(d'-2k)}}
+\bigg(\sum_{k=d'}^{\dR'}\binom{\dR'}{k}p^k(1-p)^{\dR'-k}\bigg)e^{td'},\label{eqn:c_1}
\end{align}
and
\begin{align}
  \alpha_2(p,d, \dL, \dR,t)= \binom{\dR'}{d'}\big((1-p)e^{-t}+pe^t\big)^{d'}. \label{eqn:c_2}
\end{align}
The above calculations give the following bound on $\Pi_{\lambda,d,\dL,\dR}(h,1^h)$.

\begin{lemma} \label{lemma:uniformBound}
Let $p\in (0,\frac{1}{2})$ and let $d,\dL,\dR\geqslant2$ s.t. $\dL'\cdot d'\geqslant2$. Denote by $\alpha_1$ and $\alpha_2$ the functions defined in (\ref{eqn:c_1})--(\ref{eqn:c_2}), and let
\[ \alpha=\min_{t\geqslant0}\bigg\{\alpha_1(p,d, \dL, \dR,t)\cdot \big(\alpha_2(p,d, \dL, \dR,t) \big)^{1/(\dL'\cdot d'-1)}\bigg\}.
\]
Then, for $h \in \mathds{N}$ and $\omega = {1}^h$, we have
\[\Pi_{\lambda,d,\dL,\dR}(h,\omega) \leqslant \alpha^{\dL \cdot {\dL'}^{h-1} - \dL }.\]
\end{lemma}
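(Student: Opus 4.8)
The plan is to derive Lemma~\ref{lemma:uniformBound} by specializing the two moment-generating-function estimates, Lemmas~\ref{lemma:ADSbound1} and~\ref{lemma:ADSbound2}, to uniform level weights $\omega = 1^h$ and to the base index $s = 0$, after first pinning down the distribution of the base variable $X_0$.

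\emph{Base case.} First I would use the distribution of $X_0$ already recorded above: with $\omega = 1^h$ the recursion (\ref{eqn:Y_0})--(\ref{eqn:Y_l}) gives $Y_0 = \gamma$, where $\gamma$ is the positively rescaled $\mathrm{BSC}(p)$ LLR (under all-zero transmission), i.e.\ $\gamma = -1$ with probability $p$ and $\gamma = +1$ with probability $1-p$; by (\ref{eqn:X_l}), $X_0$ is the sum of the $d' = d-1$ smallest among $\dR' = \dR-1$ i.i.d.\ copies of $\gamma$, so if exactly $k < d'$ of them equal $-1$ the sum is $-k+(d'-k) = d'-2k$, and otherwise it is $-d'$. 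This is the displayed distribution of $X_0$, from which the closed forms $\alpha_1 = \E e^{-tX_0}$ in (\ref{eqn:c_1}) and $\alpha_2 = \binom{\dR'}{d'}(\E e^{-t\gamma})^{d'} = \binom{\dR'}{d'}\big((1-p)e^{-t}+pe^{t}\big)^{d'}$ in (\ref{eqn:c_2}) follow immediately.

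\emph{Propagating the bound.} Next, for a fixed $t \geq 0$ I would apply Lemma~\ref{lemma:ADSbound2} with $s = 0$; since every level weight is $1$ we have $\E e^{-t\omega_{l-k}\gamma} = \E e^{-t\gamma}$, so the lemma gives $\E e^{-tX_l} \leq \alpha_1^{(\dL'd')^l}\prod_{k=0}^{l-1}\alpha_2^{(\dL'd')^k}$ for $0 < l < h$. Evaluating the geometric sum $\sum_{k=0}^{l-1}(\dL'd')^k = \frac{(\dL'd')^l-1}{\dL'd'-1}$ (legitimate as $\dL'd'\geq 2$) and regrouping, this reads $\E e^{-tX_l} \leq \big(\alpha_1\alpha_2^{1/(\dL'd'-1)}\big)^{(\dL'd')^l}\cdot\alpha_2^{-1/(\dL'd'-1)}$. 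Then I would pick $t$ to attain $\alpha \triangleq \min_{t\geq 0}\alpha_1\alpha_2^{1/(\dL'd'-1)}$. If $\alpha \geq 1$ the asserted bound has the nonnegative exponent $\dL\dL'^{h-1}-\dL = \dL(\dL'^{h-1}-1)$ and is thus at least $1$, so the lemma holds trivially; hence I would assume $\alpha < 1$. Using the pointwise inequality $\alpha_1 \leq \alpha_2$ noted above (it forces $\alpha \leq \alpha_2^{\dL'd'/(\dL'd'-1)}$, hence $\alpha_2^{-1/(\dL'd'-1)} \leq \alpha^{-1/(\dL'd')} \leq \alpha^{-1}$), I would conclude $\E e^{-tX_l} \leq \alpha^{(\dL'd')^l-1}$.

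\emph{Conclusion.} Finally, feeding $l = h-1$ into Lemma~\ref{lemma:ADSbound1} with the same $t$ (the case $h = 1$ being vacuous) gives $\Pi_{\lambda,d,\dL,\dR}(h,1^h) \leq (\E e^{-tX_{h-1}})^{\dL} \leq \alpha^{\dL((\dL'd')^{h-1}-1)} = \alpha^{\dL(\dL'd')^{h-1}-\dL}$. Since $d' = d-1 \geq 1$ we have $\dL(\dL'd')^{h-1}-\dL \geq \dL\dL'^{h-1}-\dL$, and because $\alpha \leq 1$ this yields $\Pi_{\lambda,d,\dL,\dR}(h,1^h) \leq \alpha^{\dL\dL'^{h-1}-\dL}$, the claimed bound (in fact the stronger form with $(\dL'd')^{h-1}$ in the exponent is what Part~3 of Theorem~\ref{thm:main-bound-BSC} invokes through Lemma~\ref{lemma:LPsuccessLoose}). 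The step I expect to need the most care is the collapse of the two-base expression $\alpha_1^{\,\cdot}\alpha_2^{\,\cdot}$ into the single base $\alpha$: that is where the minimization over $t$, the auxiliary inequality $\alpha_1 \leq \alpha_2$, and the separate treatment of the degenerate regime $\alpha \geq 1$ all enter; the remainder is the geometric-series bookkeeping already carried out in the display preceding the lemma.
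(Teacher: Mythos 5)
Your proof is correct and follows essentially the same route as the paper: specialize Lemma~\ref{lemma:ADSbound2} to $s=0$ with uniform weights, sum the geometric series, collapse the two-base expression into $\alpha$ using $\alpha_1\leqslant\alpha_2$ and $\alpha<1$, and finish with Lemma~\ref{lemma:ADSbound1}. In fact you are slightly more explicit than the paper at the two points it glosses over --- the dispatch of the degenerate regime $\alpha\geqslant1$ and the reconciliation of the exponent $\dL\cdot(\dL'\cdot d')^{h-1}-\dL$ obtained in the derivation with the weaker exponent $\dL\cdot\dL'^{h-1}-\dL$ appearing in the lemma statement --- and both of those fills are handled correctly.
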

\noindent Note that if $\alpha<1$, then $\Pi_{\lambda,d,\dL,\dR}(h,1^h)$ decreases doubly exponentially as a function of $h$.

For $(2,16)$-regular graphs and $d\in\{3,4\}$, we obtain the following corollary.

\begin{corollary}
\label{cor:uniformBoundExample}
Let $\dL=2$, and $\dR=16$.
\begin{enumerate}
\item Let $d=3$ and $p\leqslant0.0067$. Then, there exists a constant $\alpha<1$ such that for every $h\in\N$ and $w=1^h$, \[\Pi_{\lambda,d,\dL,\dR}(h,1^h) \leqslant {\alpha^2}^{h-1}.\]
    \item Let $d=4$ and $p\leqslant0.0165$. Then, there exists a constant $\alpha<1$ such that for every $h\in\N$ and $w=1^h$, \[\Pi_{\lambda,d,\dL,\dR}(h,1^h) \leqslant {\alpha^3}^{h-1}.\]
\end{enumerate}
\end{corollary}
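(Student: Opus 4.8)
The plan is to obtain Corollary~\ref{cor:uniformBoundExample} as a direct specialization of Lemma~\ref{lemma:uniformBound}, so that the only real work is a one-variable numerical optimization. Recall that the derivation establishing Lemma~\ref{lemma:uniformBound} actually shows $\Pi_{\lambda,d,\dL,\dR}(h,1^h)\leqslant\alpha^{\dL\cdot(\dL'd')^{h-1}-\dL}$, where $\alpha=\min_{t\geqslant0}\alpha_1(p,d,\dL,\dR,t)\cdot\alpha_2(p,d,\dL,\dR,t)^{1/(\dL'd'-1)}$ and $\alpha_1,\alpha_2$ are the explicit expressions in~(\ref{eqn:c_1})--(\ref{eqn:c_2}). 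Setting $(\dL,\dR)=(2,16)$ gives $\dL'=1$, so for $d=3$ we have $\dL'd'=2$ and the outer exponent $1/(\dL'd'-1)=1$, and for $d=4$ we have $\dL'd'=3$ and $1/(\dL'd'-1)=1/2$; in both cases $\dL'd'\geqslant2$, so the hypothesis of the lemma is met. Moreover $\dL\cdot(\dL'd')^{h-1}-\dL\geqslant(\dL'd')(h-1)$ for every $h\geqslant1$ (trivially with equality at $h=1$), so once we exhibit a constant $\alpha<1$ the lemma yields $\Pi_{\lambda,d,\dL,\dR}(h,1^h)\leqslant\alpha^{(\dL'd')(h-1)}$, i.e.\ ${\alpha^2}^{h-1}$ when $d=3$ and ${\alpha^3}^{h-1}$ when $d=4$. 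Thus everything reduces to checking that $\alpha<1$ for $(\dL,\dR,d)=(2,16,3)$ at $p=0.0067$ and for $(\dL,\dR,d)=(2,16,4)$ at $p=0.0165$.

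The second step is this numerical check. Specializing~(\ref{eqn:c_1})--(\ref{eqn:c_2}) with $\dR'=15$, for $d=3$ one gets $\alpha_1(p,t)=(1-p)^{15}e^{-2t}+15p(1-p)^{14}+\big(1-(1-p)^{15}-15p(1-p)^{14}\big)e^{2t}$ and $\alpha_2(p,t)=105\big((1-p)e^{-t}+pe^{t}\big)^{2}$, with $\alpha(p)=\min_{t\geqslant0}\alpha_1(p,t)\alpha_2(p,t)$; the case $d=4$ is analogous with $d'=3$, the $e^{td'}$-tail summed from $k=3$, and $\alpha(p)=\min_{t\geqslant0}\alpha_1(p,t)\sqrt{\alpha_2(p,t)}$. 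For each fixed $t\geqslant0$ both $\alpha_1(p,t)$ and $\alpha_2(p,t)$ are non-decreasing in $p$ on $(0,\tfrac12)$ --- for $\alpha_2$ because $e^{t}-e^{-t}\geqslant0$, and for $\alpha_1$ because the coefficient of $e^{td'}$ equals $\Pr[\mathrm{Bin}(\dR',p)\geqslant d']$, which increases in $p$ and dominates the (bounded, and at the relevant scale also increasing) remaining terms --- so $\alpha(p)$ is non-decreasing in $p$ and it suffices to verify $\alpha(p)<1$ at $p=p_0$. Since the $e^{td'}$-term drives $\alpha_1(p,t)\to\infty$ as $t\to\infty$ for any $p>0$, the minimum over $t$ is attained on a bounded interval; one solves $\tfrac{d}{dt}\big(\alpha_1\alpha_2^{1/(\dL'd'-1)}\big)=0$ numerically, substitutes the optimizing $t$, and confirms that the resulting value is strictly below $1$.

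The only genuine obstacle is this last computation: one must confirm that the quoted thresholds $0.0067$ and $0.0165$ lie at (or just below) the exact crossover values of $p$ for which $\min_{t\geqslant0}\alpha_1\alpha_2^{1/(\dL'd'-1)}=1$, which is a careful evaluation rather than a new idea. A minor point needing care is the monotonicity-in-$p$ reduction used to pass from ``$p\leqslant p_0$'' to ``$p=p_0$''; if one prefers not to argue monotonicity of $\alpha_1$ in full, it can be replaced by the observation that $p\mapsto\alpha(p)$ is continuous, so certifying $\alpha<1$ on a sufficiently fine grid of $p\in(0,p_0]$ (together with a crude Lipschitz bound) still proves the claim on the whole interval. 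Everything else is the exponent bookkeeping from Lemma~\ref{lemma:uniformBound} described above.
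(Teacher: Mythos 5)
Your approach --- specialize Lemma~\ref{lemma:uniformBound} to $(\dL,\dR)=(2,16)$, note that $\dL'=1$ so $\dL'd'=d-1\in\{2,3\}$, and numerically minimize $\alpha_1\cdot\alpha_2^{1/(\dL'd'-1)}$ over $t$ to certify $\alpha<1$ --- is exactly what the paper does implicitly (the corollary is stated there without proof, as a direct evaluation of the lemma). You are also right that the exponent in the lemma statement, ${\dL'}^{h-1}$, is a typo for ${(\dL'\cdot d')}^{h-1}$; with $\dL'=1$ the literal statement would collapse to the trivial bound, so invoking the derivation that precedes the lemma is the correct reading.

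Two points deserve flagging. First, the exponent bookkeeping. You read ${\alpha^2}^{h-1}$ as $\alpha^{2(h-1)}$ and deduce it from $\dL(\dL'd')^{h-1}-\dL\geqslant(\dL'd')(h-1)$. This is a \emph{singly}-exponential bound in $h$. But the paper's convention --- compare the unambiguous $\alpha^{{2}^{h-s}}$ in Corollaries~\ref{cor:improvedBoundExample3}--\ref{cor:improvedBoundExample4}, and the remark after Lemma~\ref{lemma:uniformBound} that $\Pi$ ``decreases doubly exponentially'' --- indicates the intended reading is $\alpha^{2^{h-1}}$ (resp.\ $\alpha^{3^{h-1}}$). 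That doubly-exponential form does follow from the lemma for $h\geqslant 2$, since $\dL(\dL'd')^{h-1}-\dL=2\cdot 2^{h-1}-2\geqslant 2^{h-1}$ for $d=3$ (and analogously for $d=4$); at $h=1$ the lemma only gives the trivial bound $\alpha^0=1$, so there is a corner-case gap in the corollary as stated, but your weakening throws away more than necessary for $h\geqslant 3$ and is not what the paper asserts. Second, the monotonicity-in-$p$ argument for $\alpha_1$ is hand-wavy as written: the term $(1-p)^{\dR'}e^{-td'}$ is \emph{decreasing} in $p$, so it is not true that ``the remaining terms [are] also increasing.'' A direct differentiation does show $\partial\alpha_1/\partial p>0$ for $t>0$ and $p$ in the relevant range (roughly $p<1/\dR'$), but the argument you give is not correct as stated; the grid-plus-Lipschitz fallback you offer is a sound substitute.
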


The bound on $p$ for which Corollary~\ref{cor:uniformBoundExample} applies grows with $d$. This fact confirms that analysis based on denser trees, i.e., $d$-trees with $d>2$ instead of skinny trees, implies better bounds on the error probability and higher lower bounds on the threshold. Also, for $d>2$, we may apply the analysis to $(2,\dR)$-regular codes; a case that is not applicable by the analysis of Arora \emph{et al.}~\cite{ADS09}.

\subsubsection{Improved Bounds Using Non-Uniform Weights}

The following lemma implies an improved bound for $\Pi_{\lambda,d,\dL,\dR}(h,\omega)$ using a non-uniform weight vector $\omega$.

\begin{lemma} \label{lemma:improvedBound}
Let $p\in (0,\frac{1}{2})$ and let $d,\dL,\dR\geqslant2$ s.t. $\dL'\cdot d'\geqslant2$. For $s \in \mathds{N}$ and a weight vector $\overline{\omega}\in\R_+^s$, let
\begin{equation} \label{eqn:MinE}
 \alpha = \min_{t \geqslant 0}\big\{\E e^{-tX_s}\big\} \cdot \bigg(\binom{\dR'}{d'}\big(2\sqrt{p(1-p)}\big)^{d'}\bigg)^{\frac{1}{\dL'\cdot d' -1}}.
\end{equation}
Let $\omega^{(\rho)} \in \R^h_+$ denote the concatenation of the vector $\overline{\omega}\in\R_+^s$ and the vector $(\rho,\ldots,\rho) \in \R_+^{h-s}$. Then, for every $h>s$ there exists a constant $\rho > 0$ such that
\begin{align*}
\Pi_{\lambda,d,\dL,\dR}(h,\omega^{(\rho)})\leqslant\bigg(\binom{\dR'}{d'}\big(2\sqrt{p(1-p)}\big)^{d'}\bigg)^{-\frac{\dL'}{\dL'\cdot d' -1}}\cdot \alpha^{\dL \cdot {(\dL'\cdot d')}^{h-s-1}}.
\end{align*}
\end{lemma}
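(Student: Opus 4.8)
The plan is to combine Lemma~\ref{lemma:ADSbound1} with Lemma~\ref{lemma:ADSbound2}, using a judicious choice of the constant $\rho$ that governs the top $h-s$ levels of the weight vector so as to \emph{decouple} the two roles that a single parameter $t$ must play in the uniform-weight analysis of Lemma~\ref{lemma:uniformBound}: controlling the moment generating function $\E e^{-tX_s}$ of the bottom part of the tree, and controlling each per-level factor $\E e^{-t\omega_l\gamma}$. An independent scaling $\rho$ of the upper-level weights lets the product $t\rho$ land exactly on the minimizer of $u\mapsto\E e^{-u\gamma}$, which for the scaled BSC LLR $\gamma$ ($\gamma=-1$ with probability $p$, $\gamma=+1$ with probability $1-p$) equals $u^{*}=\tfrac{1}{2}\ln\tfrac{1-p}{p}>0$, where $\E e^{-u^{*}\gamma}=2\sqrt{p(1-p)}$.

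First I would apply Lemma~\ref{lemma:ADSbound2} with cutoff $s$ and $l=h-1$ to the weight vector $\omega^{(\rho)}$. Every index appearing in the product, namely $h-1-k$ for $0\leqslant k\leqslant h-s-2$, lies in $\{s+1,\dots,h-1\}$, so the corresponding weight equals $\rho$ and the product collapses to a single base raised to $\sum_{k=0}^{h-s-2}(\dL'\cdot d')^{k}=\frac{(\dL'\cdot d')^{h-s-1}-1}{\dL'\cdot d'-1}$. Writing $D=\dL'\cdot d'$, $M=D^{\,h-s-1}$ and $B=\binom{\dR'}{d'}\bigl(2\sqrt{p(1-p)}\bigr)^{d'}$, this gives
\[
\E e^{-tX_{h-1}}\ \leqslant\ \bigl(\E e^{-tX_s}\bigr)^{M}\cdot\Bigl(\textstyle\binom{\dR'}{d'}\,(\E e^{-t\rho\gamma})^{d'}\Bigr)^{(M-1)/(D-1)}.
\]

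Next, for a fixed $t>0$ I would set $\rho:=u^{*}/t$, a positive constant, so that $t\rho=u^{*}$ and the second factor equals $B^{(M-1)/(D-1)}=\bigl(B^{1/(D-1)}\bigr)^{M}\cdot B^{-1/(D-1)}$; hence the right-hand side becomes $\bigl(\bigl(\E e^{-tX_s}\bigr)\cdot B^{1/(D-1)}\bigr)^{M}\cdot B^{-1/(D-1)}$. Taking $t=t^{*}$ to be the minimizer in the definition of $\alpha$ makes the base equal $\alpha$ while keeping $\rho=u^{*}/t^{*}>0$, and Lemma~\ref{lemma:ADSbound1} then yields
\[
\Pi_{\lambda,d,\dL,\dR}\bigl(h,\omega^{(\rho)}\bigr)\ \leqslant\ \bigl(\E e^{-t^{*}X_{h-1}}\bigr)^{\dL}\ \leqslant\ \alpha^{\dL\cdot M}\cdot B^{-\dL/(D-1)},
\]
which is the asserted doubly-exponential bound $\alpha^{\dL\cdot(\dL'\cdot d')^{h-s-1}}$ up to the claimed constant prefactor; the last step is to reconcile the precise exponent on $B$ with the statement.

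Two routine points remain. The edge case $h=s+1$ must be handled separately: there the product in Lemma~\ref{lemma:ADSbound2} is empty, $M=1$, and the bound degenerates directly to $\Pi\leqslant(\E e^{-t^{*}X_s})^{\dL}$, consistently. And the one genuine hypothesis is that the minimizer $t^{*}$ of $\E e^{-tX_s}$ is strictly positive, which is what makes $\rho=u^{*}/t^{*}$ a legitimate positive constant; this holds precisely when $\E X_s>0$ (i.e.\ $p$ small enough), which is exactly the regime in which the claimed bound is non-vacuous, so nothing is lost. I expect the main obstacle to be precisely this decoupling step --- recognizing that the freedom in $\rho$ allows $t$ to be spent entirely on the bottom $s$ levels while each top-level factor is driven down to its BSC-optimal value $2\sqrt{p(1-p)}$; once that observation is in place, what remains is the geometric-series bookkeeping already present in the analysis of Arora \emph{et al.}~\cite{ADS09}.
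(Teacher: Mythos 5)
Your plan reproduces the paper's proof step for step: apply Lemma~\ref{lemma:ADSbound2} with cutoff $s$ and $l=h-1$, observe that every weight index appearing in the product exceeds $s$ so the weight there equals $\rho$, choose $\rho$ so that $t\rho$ hits the positive minimizer of $u\mapsto\E e^{-u\gamma}$ (yielding $2\sqrt{p(1-p)}$), peel off a factor $\big(\binom{\dR'}{d'}(2\sqrt{p(1-p)})^{d'}\big)^{-1/(\dL'\cdot d'-1)}$ from the geometric sum so the remaining base is $\alpha$, and close with Lemma~\ref{lemma:ADSbound1}. The prefactor exponent $-\dL/(\dL'\cdot d'-1)$ you obtain matches the paper's own proof; the $-\dL'/(\dL'\cdot d'-1)$ in the lemma statement and the $(\dL'\cdot d'-1)^{h-s-1}$ in the paper's penultimate display (where the statement correctly has $(\dL'\cdot d')^{h-s-1}$) both appear to be typos, so your derivation is the internally consistent one. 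Your remark that $t^{*}>0$ is needed for $\rho=u^{*}/t^{*}$ to be a legitimate positive constant is correct and is left implicit in the paper.
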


\begin{proof}
See Appendix~\ref{app:proofofImprovedBound}.
\end{proof}

Consider a weight vector $\overline{\omega}$ with components $\overline{\omega}_l = \big((\dL-1)(d-1)\big)^l$. This weight vector has the effect that every level in a skinny tree $\tau$ contributes equally to $\mathrm{val}_{\overline{\omega}} (\tau;\lvert\lambda\rvert)$ (note that $\lvert\lambda\rvert\equiv \vec{1}$).
For $h>s$, consider a weight vector $\omega^{(\rho)} \in \mathds{R}_+^h$ defined by
\[\omega^{(\rho)}_l = \begin{cases} \overline{\omega}_l & \mathrm{if\ }0 \leqslant l <s,\\
\rho & \mathrm{if\ } s\leqslant l < h .\end{cases}\] Note that the first $s$ components of $\omega^{(\rho)}$ are geometric while the other components are uniform.

For a given $p$, $d$, $\dL$, and $\dR$, and for a concrete value $s$ we can compute the distribution of $X_s$ using the recursion in (\ref{eqn:Y_0})--(\ref{eqn:Y_l}). Moreover, we can also compute the value $\min_{t\geqslant0}\mathds{E}e^{-tX_s}$.
For $(2,16)$-regular graphs we obtain the following corollaries. Corollary~\ref{cor:improvedBoundExample3} is stated for the case where $d=3$, and Corollary~\ref{cor:improvedBoundExample4} is stated for the case where $d=4$.

\begin{corollary} \label{cor:improvedBoundExample3}
Let $p\leqslant p_0$, $d=3$, $\dL = 2$, and $\dR=16$.
For the following values of $p_0$ and $s$ shown in Table~\ref{table:thresholds3} it holds that there exists constants $\rho>0$ and $\alpha<1$ such that for every $h>s$,
\[\Pi_{\lambda,d,\dL,\dR}(h,\omega^{(\rho)}) \leqslant \frac{1}{420}\big(p(1-p)\big)^{-1}\cdot \alpha^{{2}^{h-s}}.\]
\begin{table}
\renewcommand{\arraystretch}{1.3}
\caption{Computed values of $p_0$ for finite $s$ in Corollary~\ref{cor:improvedBoundExample3} for a BSC.
Values are presented for $(\dL,\dR)=(2,16)$ and $d=3$.}\label{table:thresholds3}
\begin{center}
{\scriptsize
\begin{tabular}{||c|c||c||c|c||}
  \cline{1-2}
  \cline{4-5}
  $s$ & $p_0$   & & $s$ & $p_0$ \\
  \cline{1-2}
  \cline{4-5}
  \cline{1-2}
  \cline{4-5}
  $0$ & $0.0086$ & & $4$ & $0.0164$ \\
  \cline{1-2}
  \cline{4-5}
  $1$ & $0.011$ & & $5$ & $0.0171$ \\
  \cline{1-2}
  \cline{4-5}
  $2$ & $0.0139$  & & $6$ & $0.0177$\\
  \cline{1-2}
  \cline{4-5}
  $3$ & $0.0154$ & & $10$ & $0.0192$\\
  \cline{1-2}
  \cline{4-5}
  \cline{1-2}
  \cline{4-5}
\end{tabular}
} 
\end{center}
\end{table}
\end{corollary}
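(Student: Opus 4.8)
The plan is to obtain Corollary~\ref{cor:improvedBoundExample3} as the concrete instantiation of Lemma~\ref{lemma:improvedBound} for the parameters $d=3$, $\dL=2$, $\dR=16$, taking as $\overline{\omega}$ the geometric weight vector with $\overline{\omega}_l=\big((\dL-1)(d-1)\big)^l=2^l$ (the one discussed right after that lemma, which makes every level of a skinny tree contribute equally). With these parameters $d'=d-1=2$, $\dL'=\dL-1=1$, and $\dR'=\dR-1=15$, so $\dL'\cdot d'=2\geqslant2$ and $\dL'\cdot d'-1=1$; moreover
\[
\binom{\dR'}{d'}\big(2\sqrt{p(1-p)}\big)^{d'}=\binom{15}{2}\cdot 4p(1-p)=420\,p(1-p).
\]
Hence the quantity $\alpha$ of~(\ref{eqn:MinE}) becomes $\alpha=\min_{t\geqslant0}\{\E e^{-tX_s}\}\cdot 420\,p(1-p)$, the exponent $\dL\cdot(\dL'\cdot d')^{h-s-1}$ in the conclusion of Lemma~\ref{lemma:improvedBound} becomes $2\cdot2^{h-s-1}=2^{h-s}$, and the prefactor $\big(420\,p(1-p)\big)^{-\dL'/(\dL'\cdot d'-1)}$ becomes $\frac{1}{420}\big(p(1-p)\big)^{-1}$. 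Substituting into Lemma~\ref{lemma:improvedBound} yields, for every $h>s$ and the $\rho>0$ furnished by its proof, exactly the stated bound $\Pi_{\lambda,d,\dL,\dR}(h,\omega^{(\rho)})\leqslant\frac{1}{420}\big(p(1-p)\big)^{-1}\cdot\alpha^{2^{h-s}}$.

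It then remains to pin down, for each $s$ listed in Table~\ref{table:thresholds3}, the largest $p_0$ for which $\alpha<1$ whenever $p\leqslant p_0$; equivalently, for which $\min_{t\geqslant0}\{\E e^{-tX_s}\}<\big(420\,p(1-p)\big)^{-1}$. For this I would compute the distribution of $X_s$ explicitly from the recursion~(\ref{eqn:Y_0})--(\ref{eqn:Y_l}), specialized to $\dL'=1$ and $\dR'=15$ and to the geometric weights $\overline{\omega}_l=2^l$ on the first $s$ layers. The base case is the three-point distribution of $X_0$ stated just before Lemma~\ref{lemma:uniformBound} (after the rescaling $\lambda_i=\pm1$): $X_0=2$ w.p.\ $(1-p)^{15}$, $X_0=0$ w.p.\ $15p(1-p)^{14}$, and $X_0=-2$ otherwise. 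Since $\dL'=1$, each further layer is obtained by forming $Y_l=\omega_l\gamma+X_{l-1}$ with a single summand, drawing $\dR'=15$ i.i.d.\ copies of $Y_l$, and setting $X_l$ to the sum of the two smallest of them as in~(\ref{eqn:X_l}). Propagating this through the $s$ geometric layers gives $\E e^{-tX_s}$ as an explicit function of $p$ and $t$ alone (in particular not of $\rho$ or $h$); minimizing numerically over $t\geqslant0$ and solving $420\,p(1-p)\cdot\min_t\E e^{-tX_s}=1$ for the largest admissible $p$ produces the tabulated values $p_0$ (and for $s=0$ this recovers the finite-length value $p_0=0.0086$ of Table~\ref{table:bounds}).

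The only real obstacle is computational: as $s$ increases the support of $X_s$ grows, because $X_l$ is a sum of order statistics of sums, so the exact distribution of $X_s$ and the one-dimensional minimization over $t$ become progressively heavier to evaluate. However, for each fixed $s$ this is a bounded, finite computation that does not grow with $h$, so the entries of Table~\ref{table:thresholds3} follow by direct evaluation; no idea beyond Lemma~\ref{lemma:improvedBound} and the recursion is needed, the corollary being merely its specialization to $(\dL,\dR)=(2,16)$ and $d=3$.
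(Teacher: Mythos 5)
Your proposal matches the paper's (implicit) proof exactly: the corollary is stated without a separate argument, as the direct specialization of Lemma~\ref{lemma:improvedBound} to $(\dL,\dR)=(2,16)$ and $d=3$ (so $\dL'=1$, $d'=2$, $\dR'=15$, $\binom{15}{2}\big(2\sqrt{p(1-p)}\big)^2=420\,p(1-p)$, exponent $\dL\cdot(\dL'd')^{h-s-1}=2^{h-s}$, prefactor exponent $-\dL'/(\dL'd'-1)=-1$), with $\overline{\omega}_l=2^l$ and the tabulated $p_0$ obtained by numerically computing $\min_{t\geqslant0}\E e^{-tX_s}$ from the recursion~(\ref{eqn:Y_0})--(\ref{eqn:Y_l}) and solving $420\,p(1-p)\cdot\min_t\E e^{-tX_s}=1$. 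All of your arithmetic checks out, and your reading that $X_s$ is determined by the geometric layers alone (so that $\min_t\E e^{-tX_s}$ does not depend on $\rho$, avoiding circularity in the choice of $t^*$ and then $\rho$) is clearly the intended one, even though the paper's literal indexing of $\omega^{(\rho)}$ places $\omega_s=\rho$ inside $Y_s$.
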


\begin{corollary} \label{cor:improvedBoundExample4}
Let $p\leqslant p_0$, $d=4$, $\dL = 2$, and $\dR=16$.
For the following values of $p_0$ and $s$ shown in Table~\ref{table:thresholds4} it holds that there exists constants $\rho>0$ and $\alpha<1$ such that for every $h>s$,
\[\Pi_{\lambda,d,\dL,\dR}(h,\omega^{(\rho)}) \leqslant \frac{1}{60}\big(p(1-p)\big)^{-\frac{3}{4}}\cdot \alpha^{{3}^{h-s}}.\]
\begin{table}
\renewcommand{\arraystretch}{1.3}
\caption{Computed values of $p_0$ for finite $s$ in Corollary~\ref{cor:improvedBoundExample4} for a BSC. Values are presented for $(\dL,\dR)=(2,16)$ and $d=4$.}\label{table:thresholds4}
\begin{center}
{\scriptsize
\begin{tabular}{||c|c||c||c|c||}
  \cline{1-2}
  \cline{4-5}
  $s$ & $p_0$   & & $s$ & $p_0$ \\
  \cline{1-2}
  \cline{4-5}
  \cline{1-2}
  \cline{4-5}
  $0$ & $0.0218$ & & $4$ & $0.039$ \\
  \cline{1-2}
  \cline{4-5}
  $1$ & $0.0305$ & & $5$ & $0.0405$ \\
  \cline{1-2}
  \cline{4-5}
  $2$ & $0.0351$  & & $6$ & $0.0415$\\
  \cline{1-2}
  \cline{4-5}
  $3$ & $0.0375$ & & $10$ & $0.044$\\
  \cline{1-2}
  \cline{4-5}
  \cline{1-2}
  \cline{4-5}
\end{tabular}
} 
\end{center}
\end{table}
\end{corollary}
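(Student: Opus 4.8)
The plan is to specialize Lemma~\ref{lemma:improvedBound} to $(\dL,\dR)=(2,16)$ and $d=4$, for which $d'=3$, $\dL'=1$, $\dR'=15$, so that $\dL'\cdot d'=3\geqslant2$ and $\binom{\dR'}{d'}=\binom{15}{3}=455$. Since $p\leqslant p_0\leqslant0.044<\tfrac12$, the hypotheses of Lemma~\ref{lemma:improvedBound} hold. I take the geometric prefix $\overline{\omega}\in\R_+^s$ with $\overline{\omega}_l=\big((\dL-1)(d-1)\big)^l=3^l$ and let $\omega^{(\rho)}$ be its concatenation with $(\rho,\ldots,\rho)\in\R_+^{h-s}$, exactly as in the paragraph preceding the corollary.

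Next I make the two factors of the bound in Lemma~\ref{lemma:improvedBound} explicit for these parameters. The leading constant is
\[
\Big(\binom{\dR'}{d'}\big(2\sqrt{p(1-p)}\big)^{d'}\Big)^{-\dL'/(\dL'\cdot d'-1)}=\big(3640\,(p(1-p))^{3/2}\big)^{-1/2}=\tfrac{1}{\sqrt{3640}}\,(p(1-p))^{-3/4}\leqslant\tfrac1{60}\,(p(1-p))^{-3/4},
\]
using $\sqrt{3640}>60$; and the exponent of $\alpha$ is $\dL\cdot(\dL'\cdot d')^{h-s-1}=2\cdot3^{h-s-1}=\tfrac23\,3^{h-s}$, so that $\alpha^{\dL\cdot(\dL'\cdot d')^{h-s-1}}=\big(\alpha^{2/3}\big)^{3^{h-s}}$. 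Hence, writing $\alpha$ for the quantity in~(\ref{eqn:MinE}), Lemma~\ref{lemma:improvedBound} already gives, for every $h>s$ and a suitable $\rho>0$, the bound $\Pi_{\lambda,d,\dL,\dR}(h,\omega^{(\rho)})\leqslant\tfrac1{60}(p(1-p))^{-3/4}\cdot(\alpha^{2/3})^{3^{h-s}}$. Setting $\alpha'\triangleq\alpha^{2/3}$, which is $<1$ precisely when $\alpha<1$, puts this in the claimed form, so the only remaining task is to verify that $\alpha<1$ for every $p\leqslant p_0(s)$, with $p_0(s)$ as in Table~\ref{table:thresholds4}.

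For that I compute $\alpha=\big(\min_{t\geqslant0}\E e^{-tX_s}\big)\cdot\sqrt{3640}\,(p(1-p))^{3/4}$ explicitly. Under the all-zero transmission and the positive-scaling normalization used in the preceding analysis, the $\gamma$'s are i.i.d.\ $\pm1$-valued, so $X_0$ is the four-atom random variable displayed just above Lemma~\ref{lemma:uniformBound} (with $d'=3$, $\dR'=15$), and for $l\geqslant1$ the recursion~(\ref{eqn:Y_0})--(\ref{eqn:Y_l}) obtains $X_l$ from $X_{l-1}$ by two finite operations: form $Y_l=3^l\gamma+X_{l-1}$ (a single independent copy, since $\dL'=1$), then let $X_l$ be the sum of the $d-1=3$ smallest among $\dR'=15$ i.i.d.\ copies of $Y_l$. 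Thus the law of $X_s$ is finitely supported and is obtained exactly by iterating the recursion $s$ times. The function $t\mapsto\E e^{-tX_s}$ is convex, equals $1$ at $t=0$, and has slope $-\E X_s$ there, so once $p$ is small enough that $\E X_s>0$ its minimum over $t\geqslant0$ is attained at a finite $t^\star>0$ and is strictly below $1$; minimizing it numerically and then raising $p$ until $\alpha$ first reaches $1$ produces the threshold $p_0(s)$. Carrying this out for $s\in\{0,1,2,3,4,5,6,10\}$ yields exactly the entries of Table~\ref{table:thresholds4}. The one genuine obstacle is this last step: for $s$ as large as $10$ the geometric weights $3^l$ make the support of $X_s$ have size on the order of $3^s$, so the distribution has to be propagated by machine, but it remains a finite, exact computation rather than an asymptotic estimate.
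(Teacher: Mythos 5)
Your derivation is correct and takes essentially the route the paper implicitly intends: specialize Lemma~\ref{lemma:improvedBound} to $(\dL,\dR)=(2,16)$, $d=4$, so $\dL'=1$, $d'=3$, $\dR'=15$, $\binom{15}{3}=455$, and $B\triangleq\binom{\dR'}{d'}(2\sqrt{p(1-p)})^{d'}=3640\,(p(1-p))^{3/2}$. Your arithmetic checks out on both pieces: the prefactor $B^{-\dL'/(\dL'd'-1)}=B^{-1/2}=3640^{-1/2}(p(1-p))^{-3/4}\leq\frac{1}{60}(p(1-p))^{-3/4}$ since $60^2=3600<3640$; and $\alpha^{\dL\cdot(\dL'd')^{h-s-1}}=\alpha^{2\cdot 3^{h-s-1}}=(\alpha^{2/3})^{3^{h-s}}$, so renaming $\alpha^{2/3}$ as the constant in the corollary is exactly right, and $\alpha^{2/3}<1\iff\alpha<1$. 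The reduction of the verification to a finite, exact propagation of the law of $X_s$ through the recursion~(\ref{eqn:Y_0})--(\ref{eqn:Y_l}) (single independent copy per $Y_l$ because $\dL'=1$, then sum of the three smallest of fifteen i.i.d.\ copies), followed by numerically locating where $\min_{t\geq0}\E e^{-tX_s}\cdot B^{1/2}$ crosses $1$, is precisely how the tabulated $p_0(s)$ values are obtained; the paper gives no separate proof for this corollary beyond stating it as a consequence of Lemma~\ref{lemma:improvedBound}, so you have supplied exactly the missing arithmetic. Your convexity/MGF remark ($\E e^{-tX_s}$ is $1$ at $t=0$ with slope $-\E X_s$, hence dips below $1$ once $\E X_s>0$) is a correct and useful justification that the minimization over $t$ is well-posed. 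One caveat worth being aware of, though it does not affect your argument: the exponent on the prefactor in the \emph{statement} of Lemma~\ref{lemma:improvedBound} is $-\dL'/(\dL'd'-1)$, but the derivation in Appendix~E actually produces $-\dL/(\dL'd'-1)$; you have consistently used the lemma as stated, which is what the corollary's constant $\frac{1}{60}(p(1-p))^{-3/4}$ corresponds to, and in the relevant parameter range $B>1$ so the stated (weaker) exponent is still a valid upper bound.
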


Note that for a fixed $s$, the probability $\Pi_{\lambda,d,\dL,\dR}(h,\omega)$ decreases doubly exponentially as a function of $h$.

\subsection{Analysis for MBIOS Channels}\label{subsec:MBIOSbound}

Theorem~\ref{thm:main-bound-BSC} generalizes to MBIOS channels as follows.

\begin{theorem} \label{thm:main-bound-MBIOS}
Let $G$ denote a $(\dL,\dR)$-regular bipartite graph with girth $\Omega(\log N)$, and let $\calC(G)
  \subset \{0,1\}^N$ denote a Tanner code based on $G$ with minimum local distance $d^*$. Consider an MBIOS channel, and let $\lambda \in
  \R^N$ denote the LLR vector received from the channel given $c=0^N$.  Let
  $\gamma \in \R$ denote a random variable independent and identically distributed to components of $\lambda$. Then, for any $(\dL,\dR)$ and $2\leqslant d \leqslant d^*$ s.t. $(\dL-1)(d-1)\geqslant 2$,  LP decoding succeeds with probability at least $1-\exp(-N^\delta)$ for some
  constant $0<\delta<1$, provided that
\begin{equation*}
  \min_{t\geqslant0}\bigg\{ \E e^{-tX_0}\cdot\bigg( \binom{\dR-1}{d-1}\big(\E e^{-t\gamma}\big)^{(d-1)} \bigg)^{\frac{1}{(\dL-1)(d-1)-1}} \bigg\}<1.
\end{equation*}
where $X_0=\sum_{i=1}^{d-1}\min^{[i]}\{\gamma^{(1)},\ldots,\gamma^{(\dR-1)}\}$ and the random variables $\gamma^{(i)}$ are independent and distributed identically to $\gamma$.
\end{theorem}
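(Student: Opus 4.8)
The plan is to mirror the structure of the proof of Theorem~\ref{thm:main-bound-BSC} (Parts~2 and~3), replacing the BSC-specific computation of $\E e^{-t\gamma}$ by the generic MGF of the channel LLR. First I would invoke Corollary~\ref{cor:LPfailure}, which under the all-zero codeword assumption (valid for LP decoding by the symmetry of the generalized fundamental polytope) bounds $\Pr\{\text{LP fails}\}$ by $\Pr\{\exists\beta\in\calB_d^{(w)}\ \text{s.t.}\ \langle\lambda,\beta\rangle\leqslant0\}$. Then, for $h<\tfrac14\girth(G)$, Lemma~\ref{lemma:LPsuccessLoose} translates this into the statement that $0^N$ is $(h,w,d)$-locally optimal (hence the unique LP optimum) with probability at least $1-N\cdot\Pi_{\lambda,d,\dL,\dR}(h,\omega)$, where $\omega$ is the rescaled weight vector specified there. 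So it remains to show that, under the stated condition on the MGF, one can choose $h=\Theta(\log N)$ (which is $<\tfrac14\girth(G)$ since $\girth(G)=\Omega(\log N)$ is sufficiently large) so that $N\cdot\Pi_{\lambda,d,\dL,\dR}(h,\omega)\leqslant\exp(-N^\delta)$.

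The core estimate is the doubly-exponential decay of $\Pi_{\lambda,d,\dL,\dR}(h,\omega)$, obtained exactly as in the uniform-weight analysis preceding Lemma~\ref{lemma:uniformBound}. Set $\omega=1^h$ and let
\[
\alpha \;=\; \min_{t\geqslant0}\Bigl\{\E e^{-tX_0}\cdot\bigl(\tbinom{\dR-1}{d-1}(\E e^{-t\gamma})^{d-1}\bigr)^{1/((\dL-1)(d-1)-1)}\Bigr\},
\]
with $X_0=\sum_{i=1}^{d-1}\min^{[i]}\{\gamma^{(1)},\ldots,\gamma^{(\dR-1)}\}$ as in the recursion~\eqref{eqn:Y_0}--\eqref{eqn:Y_l}. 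By hypothesis $\alpha<1$. Applying Lemma~\ref{lemma:ADSbound2} with $s=0$ and the telescoping computation already displayed in the text gives $\E e^{-tX_l}\leqslant\alpha^{(\dL'd')^l-1}$ for the optimizing $t$ (here $\dL'=\dL-1$, $d'=d-1$, so $\dL'd'\geqslant2$ by assumption), and then Lemma~\ref{lemma:ADSbound1} yields
\[
\Pi_{\lambda,d,\dL,\dR}(h,1^h)\;\leqslant\;\alpha^{\dL\cdot(\dL'd')^{h-1}-\dL}.
\]
Since $\dL'd'\geqslant2$, the exponent grows at least like $2^{h-1}$, so this is doubly-exponentially small in $h$. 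Choosing $h=c\log N$ for a suitable constant $c>0$ makes $(\dL'd')^{h-1}\geqslant N^{\delta'}$ for some $\delta'>0$, hence $N\cdot\Pi_{\lambda,d,\dL,\dR}(h,1^h)\leqslant N\cdot\alpha^{\dL N^{\delta'}-\dL}\leqslant\exp(-N^{\delta})$ for some $0<\delta<1$ and all sufficiently large $N$. Combining with Lemma~\ref{lemma:LPsuccessLoose} completes the proof.

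The only place requiring care — and the main obstacle — is ensuring the MGF manipulations used in Lemma~\ref{lemma:ADSbound2} remain valid for a general MBIOS channel rather than the bounded BSC LLR. Concretely, one needs $\E e^{-t\gamma}<\infty$ for the relevant range of $t\geqslant0$ and the inequality $\E e^{-tX_0}\leqslant\tbinom{\dR-1}{d-1}(\E e^{-t\gamma})^{d-1}$ (the analogue of~\eqref{eqn:2}), which follows because $X_0$ is a sum of $d-1$ of the $\dR-1$ values $\gamma^{(i)}$ and by a union/exchangeability bound over the $\tbinom{\dR-1}{d-1}$ choices of which $d-1$ are smallest. For channels with unbounded LLR (e.g.\ BI-AWGN) one restricts attention to $t$ in the interval where the MGF is finite; the hypothesis $\min_{t\geqslant0}\{\cdots\}<1$ is understood to mean the infimum is attained (or approached) at such a $t$, exactly as in the BI-AWGN computation of~\cite{HE11}. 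With that caveat, every step of the BSC argument carries over verbatim, which is why the theorem holds in the stated generality.
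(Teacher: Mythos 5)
Your proposal is correct and follows exactly the route the paper intends: the paper gives no explicit proof of Theorem~\ref{thm:main-bound-MBIOS} (it merely asserts that Theorem~\ref{thm:main-bound-BSC} ``generalizes''), and the intended argument is precisely the chain Corollary~\ref{cor:LPfailure} $\to$ Lemma~\ref{lemma:LPsuccessLoose} $\to$ the uniform-weight telescoping of Lemma~\ref{lemma:ADSbound2} with $s=0$ $\to$ Lemma~\ref{lemma:ADSbound1}, with $h=\Theta(\log N)<\tfrac14\girth(G)$, which is what you wrote (and your condition correctly involves only $X_0$, matching the $s=0$ choice). Your added caveat about finiteness of $\E e^{-t\gamma}$ for unbounded-LLR channels is a sensible precision the paper leaves implicit.
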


\section{Conclusions and Discussion} \label{sec:conclusion}

We have presented a new combinatorial characterization of local optimality for irregular Tanner codes w.r.t. any MBIOS channel. This characterization provides an ML certificate and an LP certificate for a given codeword. Moreover, the certificate can be efficiently computed by a dynamic programming algorithm. Two applications of local optimality
are presented based on this new characterization.
(i)~A new message-passing decoding algorithm for irregular LDPC codes, called \NWMS. The \NWMS\ decoding algorithm is guaranteed to find the locally-optimal codeword if one exists.
(ii)~Bounds for LP decoding failure to decode the transmitted codeword are proved in the case of regular Tanner codes.
We discuss these two applications of local optimality in the following subsections.

\subsection{Applying \NWMS\ Decoding Algorithm to Regular LDPC Codes}\label{subsec:DiscussionNWMS}
The \NWMS\ decoding algorithm is a generalization of the min-sum decoding algorithm (a.k.a. max-product algorithm in the probability-domain) and other BP-based decoding algorithms in the following sense. When restricted to regular Tanner graphs and exponential level weights (to cancel the normalization in the variable node degrees), the \NWMS\ decoding algorithm reduces to the standard min-sum decoding algorithm~\cite{WLK95,Wib96}. Reductions of the \NWMS\ decoding algorithm to other BP-based decoding  algorithms (see, e.g., attenuated max-product \cite{FK00} and normalized BP-based \cite{CF02,CDEFH05}) can be obtained by other weight level functions.

Many works on the BP-based decoding algorithms study the convergence of message passing algorithms to an optimum solution on various settings (e.g.,~\cite{WF01,WJW05,RU01,JP11}). However, bounds on the running time required to decode have not been proven for these algorithms. The analyses of convergence in these works often rely on the existence of a single optimal solution in addition to other conditions such as: a single cycle, large girth, large reweighing coefficient, consistency conditions, etc. On the other hand, the \NWMS\ decoding algorithm is guaranteed to compute the ML codeword within $h$ iterations if a locally optimal certificate with height parameter $h$ exists for some codeword. Moreover, the certificate can be computed efficiently (see Algorithm~\ref{alg:verify}).

In previous works~\cite{ADS09,HE11}, the probability that a locally optimal certificate with height parameter $h$ exists for some codeword was investigated for regular LDPC codes with $h<\frac{1}{4}\girth(G)$.
Consider a $(\dL,\dR)$-regular LDPC code whose Tanner graph $G$ has logarithmic girth, let $h<\frac{1}{4}\girth(G)$ and define a constant weight vector $w\triangleq1^h$. In that case, the message normalization by variable node degrees has the effect that each level of variable nodes in a $2$-tree contributes equally to the cost of the $w$-weighted value of the $2$-tree. Hence, the set $\calB_2^{(w)}$ of \PNW\ deviations is equal to the set of $(\dL-1)$-exponentially weighted skinny trees~\cite{ADS09,HE11}. Following Equation~(\ref{eqn:MPfailureBound}), we conclude that the previous bounds on the probability that a locally optimal certificate exists~\cite{ADS09,HE11} apply also to the probability that the \NWMS\ and \certNWMS\ decoding algorithms successfully decode the transmitted codeword.

Consider $(3,6)$-regular LDPC codes whose Tanner graphs $G$ have logarithmic girth, and let $h=\frac{1}{4}\girth(G)$ and $w=1^h$. Then, $\NWMS(\lambda,h,w)$ and $\certNWMS(\lambda,h,w)$ succeed in recovering the transmitted codeword with probability at least $1-\exp(-N^\delta)$ for some constant $0<\delta<1$ in the following cases:
\begin{compactenum}[(1)]
\item In a BSC with crossover probability $p<0.05$ (implied by~\cite[Theorem~5]{ADS09}).
\item In a BI-AWGN channel with $\frac{E_b}{N_0}\geqslant 2.67$dB (implied by~\cite[Theorem~1]{HE11}).
\end{compactenum}

It remains to explore good weighting schemes (choice of vectors $w$) for specific families of irregular LDPC codes, and prove that a locally optimal codeword exists with high probability provided that the noise is bounded.
Such a result would imply that the \NWMS\ decoding algorithm is a good, efficient replacement for LP decoding.

\subsection{Bounds on the Word Error Probability for LP Decoding of Tanner Codes}
\label{subsec:DiscussionBounds}
In Section~\ref{sec:bounds} we proved bounds on the word error probability of LP decoding of regular Tanner codes. In particular, we considered a concrete example of $(2,16)$-regular Tanner codes with $[16,11,4]$-Hamming codes as local codes and Tanner graphs with logarithmic girth. The rate of such codes is at least $0.375$.
For the case of a BSC with crossover probability $p$, we prove a lower bound of $p^*=0.044$ on the noise threshold. Below that threshold the word error probability decreases doubly exponential in the girth of the Tanner graph.

Most of the research on the error correction of Tanner codes deals with families of expander Tanner codes.
How do the bounds presented in Section~\ref{sec:bounds} compare with results on expander Tanner codes?
The error correction capability of expander codes depends on the expansion, thus a fairly large degree and huge block lengths are required to achieve good error correction. Our example for which results are stated in Theorem~\ref{thm:main-bound-BSC}(1) and ~\ref{thm:main-bound-BSC}(2) relies only on a $16$-regular graph with logarithmic girth.
Sipser and Spielman~\cite{SS96} studied Tanner codes based on expander graphs and analyzed a simple bit-flipping iterative decoding algorithm. Their novel scheme was later improved, and it was shown that expander Tanner codes can even \emph{asymptotically} achieve capacity in a BSC with an iterative decoding bit-flipping scheme~\cite{Zem01,BZ02,BZ04}. In these works, a worst-case analysis (for an adversarial channel) was performed as well.

The best result for iterative decoding of such expander codes, reported by Skachek and Roth~\cite{SR03}, implies a lower bound of $p^* = 0.0016$ on the threshold of a certain iterative decoder for rate $0.375$ codes.
Feldman and Stein~\cite{FS05} proved that LP decoding can \emph{asymptotically} achieve capacity with a special family of expander Tanner codes. They also presented a worst-case analysis, which in the case of a code rate of $0.375$, proves that LP decoding can recover any pattern of at most $0.0008N$ bit flips. This implies a lower bound of $p^* = 0.0008$ on the noise threshold.
These analyses yield overly pessimistic predictions for the average case (i.e., a BSC). Theorem~\ref{thm:main-bound-BSC}(2) deals with average case analysis and implies that LP decoding can correct up to $0.044N$ bit flips with high probability.
Furthermore, previous iterative decoding algorithms for expander Tanner codes deal only with bit-flipping channels. Our analysis for LP decoding applies to any MBIOS channel, in particular, it can be applied to the BI-AWGN channel.

However, the lower bounds on the noise threshold proved for Tanner codes do not improve the best previous bounds for regular LDPC codes with the same rate. An open question is whether using deviations denser than skinny trees for Tanner codes can beat the best previous bounds for regular LDPC codes~\cite{ADS09,HE11}. In particular, for a concrete family of Tanner codes with rate $\frac{1}{2}$, it would be interesting to prove lower bounds on the threshold of LP decoding that are larger than $p^*=0.05$ in the case of a BSC, and $\sigma^*=0.735$ in the case of a BI-AWGN channel (upper bound smaller than $\frac{E_b}{N_0}=2.67$dB).

\section*{Acknowledgement}
The authors would like to thank Pascal O. Vontobel for many valuable comments and suggestions that improved the paper. We appreciate the helpful comments that were made by the reviewers.

\bibliographystyle{alpha}

\newcommand{\etalchar}[1]{$^{#1}$}

\appendix

\section{Constructing Codewords from Projection of Weighted\\ Trees -- Proof of Lemma~\ref{lemma:IntegralDecomposition}} \label{sec:decomposition}
In this section we prove Lemma~\ref{lemma:IntegralDecomposition}, the
key structural lemma in the proof of Theorem~\ref{thm:MLsufficient}.
This lemma states that every codeword of a Tanner code is a finite
sum of projections of weighted trees in the computation trees of $G$.

Throughout this section, let $\mathcal{C}(G)$ denote a Tanner code
with minimum local distance $d^*$, let $x$ denote a nonzero codeword,
let $h$ denote some positive integer, and let $w\in \R_+^h\setminus
\{0^h\}$ denote level weights.

The proof of Lemma~\ref{lemma:IntegralDecomposition} is based on
Lemmas~\ref{lemma:prefixDecomposition}--\ref{lemma:skiniesDecomposition}
  and Corollary~\ref{cor:projEQexp}.  Lemma~\ref{lemma:prefixDecomposition} states
  that every codeword $x\in\mathcal{C}(G)$ can be decomposed into a
  set of weighted path-prefix trees. The number of trees in the
  decomposition equals $\lVert x\rVert_1$.
  Lemma~\ref{lemma:skiniesDecomposition} states that every weighted
  path-prefix tree is a convex combination of weighted $d$-trees. This
  lemma implies that the projection of a weighted path-prefix tree is
  equal to the expectation of projections of weighted $d$-trees.

  For a codeword $x\in \mathcal{C}(G) \subset \{0,1\}^N$, let $\calV_x
  \triangleq \{v\in\calV~\vert~x_v=1\}$. Let $G_x$ denote the subgraph of
  the Tanner graph $G$ induced by $\calV_x\cup\mathcal{N}(\calV_x)$. Note that the degree of every local-code node in $G_x$ is at least $d$.

\begin{lemma}\label{lemma:prefixDecomposition}
  For every codeword $x \neq 0^N$, for every weight vector $w \in
  \R_+^h$, and for every variable node $v\in\calV$, it holds that
\begin{equation*}
  x_v = \sum_{r\in\calV_x}\pi_{G,\calT_r^{2h}(G_x),w}(v).
\end{equation*}
\end{lemma}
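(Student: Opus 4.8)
The plan is to split according to whether $v\in\calV_x$. If $x_v=0$ then $v$ belongs neither to $\calV_x$ nor to $\calN(\calV_x)$ (the latter being a set of local-code nodes), so $v$ is not a vertex of $G_x$; hence no variable path of $\calT_r^{2h}(G_x)$ can end at $v$, every term $\pi_{G,\calT_r^{2h}(G_x),w}(v)$ is $0$, and both sides vanish. So the real content is the case $x_v=1$, where I must show $\sum_{r\in\calV_x}\pi_{G,\calT_r^{2h}(G_x),w}(v)=1$; throughout I assume $w\neq 0^h$ (so $\lVert w\rVert_1>0$; otherwise the weights of Definition~\ref{def:weightedSubtree} are not defined).

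First I would unfold the definitions. A variable path of $\calT_r^{2h}(G_x)$ ending at $v$ is a backtrackless path $p=(r=u_0,u_1,\dots,u_{2k}=v)$ in $G_x$ of even length $2k$ with $0\le k\le h$; the zero-length path has weight $0$, so only $k\ge1$ contributes. For such a $p$ one has $\deg_G(v)=\deg_{G_x}(v)$, because every $G$-neighbour of $v\in\calV_x$ is a local-code node in $\calN(\calV_x)$; and for every proper prefix $q$ of $p$ (which has length between $1$ and $2k-1<2h$) every backtrackless one-step extension of $q$ inside $G_x$ is again a vertex of $\calT_r^{2h}(G_x)$, so $q$ has exactly $\deg_{G_x}(t(q))-1$ children and $\deg_{\calT_r^{2h}(G_x)}(q)=\deg_{G_x}(t(q))$. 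Substituting into Definition~\ref{def:weightedSubtree} (with $\ell=k$) gives
\[
w_{\calT_r^{2h}(G_x)}(p)=\frac{w_k}{\lVert w\rVert_1}\cdot\frac{1}{\deg_{G_x}(v)}\cdot\prod_{j=1}^{2k-1}\frac{1}{\deg_{G_x}(u_j)-1}.
\]
Summing over all $p$ and all $r\in\calV_x$ and grouping by $k$ then reduces the claim to showing, for each $1\le k\le h$,
\[
S_k(v)\triangleq\sum_{r\in\calV_x}\ \sum_{\substack{p=(r=u_0,\dots,u_{2k}=v)\\ \text{backtrackless in }G_x}}\ \frac{1}{\deg_{G_x}(v)}\prod_{j=1}^{2k-1}\frac{1}{\deg_{G_x}(u_j)-1}\;=\;1,
\]
since then $\sum_{r\in\calV_x}\pi_{G,\calT_r^{2h}(G_x),w}(v)=\frac{1}{\lVert w\rVert_1}\sum_{k=1}^{h}w_k\,S_k(v)=\frac{1}{\lVert w\rVert_1}\sum_{k=1}^{h}w_k=1=x_v$.

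The key point for $S_k(v)=1$ is a random-walk reading of the summand. Reading each path $p$ starting from its endpoint $v$, the factor $\frac{1}{\deg_{G_x}(v)}\prod_{j=1}^{2k-1}\frac{1}{\deg_{G_x}(u_j)-1}$ is precisely the probability that the length-$2k$ non-backtracking random walk started at $v$ in $G_x$ traces out $p$: the first step chooses one of the $\deg_{G_x}(v)$ incident edges uniformly, and each later step chooses uniformly among the $\deg_{G_x}(\cdot)-1$ edges that do not immediately retrace the previous one. Every such walk, being backtrackless of even length, terminates at a variable node of $G_x$, hence at a node of $\calV_x$; so the walks contributing to $S_k(v)$ are exactly all length-$2k$ non-backtracking walks from $v$ in $G_x$, and their probabilities sum to $1$ provided the walk never stalls. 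This is where the hypothesis that $x$ is a codeword enters: if a local-code node $C_j$ occurs in $G_x$, then $x_{\calV_j}$ is a nonzero codeword of $\calCbar^j$, so $\lVert x_{\calV_j}\rVert_1\ge d_j\ge d^*\ge 2$ and $C_j$ has at least two neighbours in $G_x$; moreover every variable node of $G_x$ keeps all of its $G$-neighbours and so (variable nodes having degree at least $2$) also has degree at least $2$ in $G_x$. Thus $G_x$ has minimum degree at least $2$, the non-backtracking walk is well defined at every vertex and for every length, and $S_k(v)=1$.

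I expect the bookkeeping of the normalization factors to be the most error-prone part — in particular the identity $\deg_{\calT_r^{2h}(G_x)}(q)=\deg_{G_x}(t(q))$, which quietly uses $|q|<2h$ so that no child of $q$ is truncated away in the height-$2h$ tree, together with the passage from $w_{\calT}(p)$ to the product form and the regrouping into $S_k(v)$. The conceptual heart is simply recognizing $S_k(v)$ as the total mass of a non-backtracking random walk; the argument uses no bound on $h$ relative to the girth, the codeword hypothesis is used only to force the local-code-node degrees in $G_x$ up to at least $2$, and the positivity of $w$ only to normalize by $\lVert w\rVert_1$.
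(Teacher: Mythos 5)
Your proof is correct and takes essentially the same route as the paper: both reduce to $v\in\calV_x$, decompose the sum by path length $2k$, and show each level contributes $w_k/\lVert w\rVert_1$ via a conservation argument across the path-suffix tree. The paper packages the level-sum identity as an explicit induction on $\ell$ using augmentations $\aug(p)$ with $\lvert\aug(p)\rvert=\deg_{G_x}(s(p))-1$, while you recognize the same telescoping as total probability mass of a non-backtracking random walk started at $v$ -- a clean reframing, but the computation and the use of $d^*\geqslant 2$ (and the implicit assumption that variable nodes have degree at least two) are identical.
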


\begin{proof}
If $x_v=0$, then $\pi_{G,\calT_r^{2h}(G_x),w}(v)=0$. It remains to show that equality holds for variable nodes $v\in\calV_x$.

Consider an all-one weight vector $\eta=1^h$. Construct a path-suffix
tree rooted at $v$. The set of nodes of a path-suffix tree rooted at $v$ contains paths that end at node $v$ (in contrast to path-prefix trees where the set of nodes contains paths that start at the root).
Level $\ell$ of this path-suffix tree consists of
all backtrackless paths in $G_x$ of length $\ell$ that end at node $v$
(see Figure~\ref{fig:decompositionProof}).  We denote this level by
$P_\ell(v)$.

We use the same notational convention for $\eta$ as for $w$ in Definition~\ref{def:weightedSubtree}, i.e., $\eta_\calT$ denotes a weight function based on weight vector $\eta$ for variable paths in $\calT$.
We claim that for every $v\in\calV_x$ and $1 \leqslant \ell \leqslant 2h$,
\begin{equation} \label{eqn:unitLevel}
\sum_{p\in P_\ell(v)}\eta_{\calT_{s(p)}^{2h}}(p)=\frac{1}{h}.
\end{equation}
The proof is by induction on $\ell$. The induction basis, for $\ell=1$, holds because $\lvert P_1(v)\rvert=\deg_G(v)$ and $\eta_{\calT_{s(p)}^{2h}}(p)=\frac{1}{\lVert \eta\rVert_1}\cdot\frac{1}{\deg_G(v)}=\frac{1}{h}\cdot\frac{1}{\deg_G(v)}$ for every $p\in P_1(v)$.
\begin{figure}
  \begin{center}
 \includegraphics[width=0.7\textwidth]{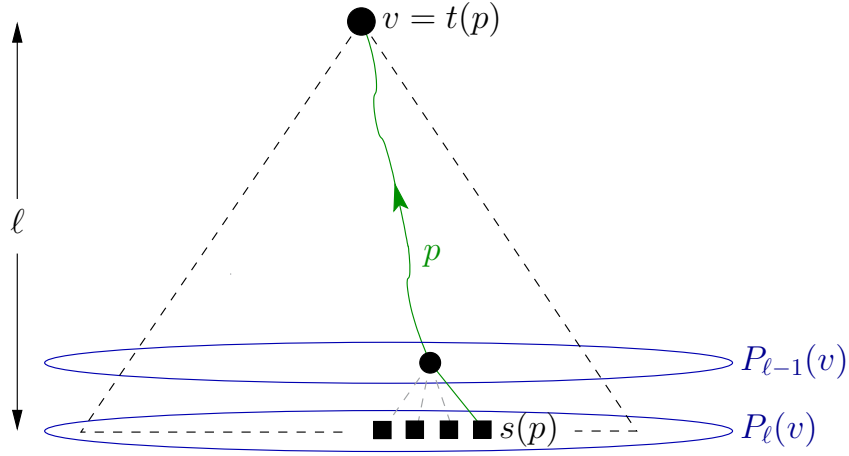}
 \caption{Set of all backtrackless paths $P_{\ell}(v)$ as augmentation
   of the set $P_{\ell-1}(v)$ as viewed by the path-suffix tree of
   height $\ell$ rooted at $v$, in the proof of
   Lemma~\ref{lemma:IntegralDecomposition}. Note that if $\ell$ is odd, then every path that ends at variable node $v$ starts at a local-code node. If $\ell$ is even, then every path that ends at variable node $v$ starts at a variable node.}
  \label{fig:decompositionProof}
  \end{center}
\end{figure}
The induction step is proven as follows. For each $p\in P_\ell(v)$, let $\aug(p)\triangleq\big\{q\in P_{\ell+1}(v)\ \big\vert\ p\mathrm{\ is\ a\ suffix\ of\ }q\big\}$. Note that $\lvert\aug(p)\rvert =\deg_{G_x}\big(s(p)\big)-1$. Moreover, for each $q\in\aug(p)$,
\begin{equation}
\frac{\eta_{\calT_{s(q)}^{2h}}(q)}{\eta_{\calT_{s(p)}^{2h}}(p)}=\frac{1}{\deg_{G_x}\big(s(p)\big)-1}.
\end{equation}
Hence, \[\sum_{q\in\aug(p)}\eta_{\calT_{s(q)}^{2h}}(q) = \eta_{\calT_{s(p)}^{2h}}(p).\] Finally, $P_{\ell+1}(v)$ is the disjoint union of $\bigcup_{p\in P_\ell(v)}\aug(p)$. It follows that
\begin{equation}
\sum_{q\in P_{\ell+1}(v)}\eta_{\calT_{s(q)}^{2h}}(q) = \sum_{p\in P_{\ell}(v)}\eta_{\calT_{s(p)}^{2h}}(p).
\end{equation}
By the induction hypothesis we conclude that $\sum_{q\in
  P_{\ell+1}(v)}\eta_{\calT_{s(q)}^{2h}}(q)=1/h$, as required.  Note
that the sum of weights induced by $\eta$ on each level is $1/h$, both
for levels of paths beginning in variable nodes and in local-code
nodes. In the rest of the proof we focus only on even levels that
start at variable nodes. We now claim that
\begin{equation}\label{eqn:w_ell}
\sum_{p\in P_{2\ell}(v)}w_{\calT_{s(p)}^{2h}}(p)=\frac{w_\ell}{\wone}.
\end{equation}
Indeed, by Definition~\ref{def:weightedSubtree} it holds that $w_{\calT_{s(p)}^{2h}}(p)=\eta_{\calT_{s(p)}^{2h}}(p)\cdot \frac{w_\ell}{\wone}\cdot h$  for every $p\in P_{2\ell}(v)$.
Therefore, Equation~(\ref{eqn:w_ell}) follows from Equation~(\ref{eqn:unitLevel}).

The lemma follows because for every $v\in\calV_x$,
\begin{eqnarray*}
\sum_{r\in\calV_x}\pi_{G,\calT_{r}^{2h}(G_x),w}(v) &=& \sum_{\ell=1}^{h}\sum_{p\in P_{2\ell}(v)}w_{\calT_{s(p)}^{2h}}(p)\\
&=&\sum_{\ell=1}^{h}\frac{w_\ell}{\wone} = 1.
\end{eqnarray*}
\end{proof}

\begin{lemma} \label{lemma:skiniesDecomposition} Consider a subgraph
  $G_x$ of a Tanner graph $G$, where $x\in \calC(G)\setminus \{0^N\}$.
  Then, for every variable node $r \in G_x$, every positive integer
  $h$, every $2 \leqslant d \leqslant d^*$, and every weight vector $w
  \in \R_+^h$, it holds that
  \[ w_{\calT_r^{2h}(G_x)} = \E_{\rho_r}\big[
w_\calT\big]\]
where $\rho_r$ is the  uniform distribution over $\calT[r,2h,d](G_x)$.
\end{lemma}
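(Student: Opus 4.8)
The plan is to prove the identity coordinate by coordinate over the variable‑path set $\hat\calV$ of $\calT_r^{2h}(G_x)$. Both sides are functions on $\hat\calV$: the left‑hand side is the instance of Definition~\ref{def:weightedSubtree} obtained by taking the subtree to be the whole path‑prefix tree $\calT_r^{2h}(G_x)\subseteq\calT_r^{2h}(G)$, and on the right‑hand side $w_\calT$ is extended by $0$ outside $\calT\cap\hat\calV$. Here $\rho_r$ denotes the law of the random $d$‑tree grown from $(r)$ by retaining every child of every variable path and, independently at each local‑code path that is reached, selecting a uniformly random $(d-1)$‑subset of its children. The first step is a rigidity observation: if $\calT\in\calT[r,2h,d](G_x)$ and $p\in\calT\cap\hat\calV$ with $p\ne(r)$, then $w_\calT(p)$ does not depend on the particular $\calT$, because conditions~(ii)–(iii) of the $d$‑tree definition force $\deg_\calT(q)-1=d-1$ for every local‑code proper prefix $q$ of $p$ and $\deg_\calT(q)-1=\deg_{\calT_r^{2h}(G_x)}(q)-1=\deg_{G_x}(t(q))-1$ for every variable proper prefix $q$ of $p$ (all proper prefixes of $p$ lie at levels $1,\dots,|p|-1<2h$, hence are interior nodes, so their degree in $\calT_r^{2h}(G_x)$ equals their $G_x$‑degree). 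Writing $\widehat w(p)$ for this common value, one gets $\big(\E_{\rho_r}[w_\calT]\big)_p=\widehat w(p)\cdot\Pr_{\rho_r}\{p\in\calT\}$, so the whole lemma reduces to computing $\Pr_{\rho_r}\{p\in\calT\}$.

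For that probability I would walk down the prefix chain of $p$. Write the prefixes of $p$ as $(r)=p_0\subsetneq q_1\subsetneq p_1\subsetneq\cdots\subsetneq q_\ell\subsetneq p_\ell=p$, where $p_i$ is a variable path at level $2i$ and $q_i$ a local‑code path at level $2i-1$; the paths $q_1,\dots,q_\ell$ are precisely the local‑code proper prefixes of $p$. In any $d$‑tree drawn from $\rho_r$ the root is present, a local‑code path is present as soon as its parent variable path is (variable paths keep all children), and a non‑root variable path is present iff its parent local‑code path is present and it lies in the $(d-1)$‑subset chosen there; hence $\{p\in\calT\}=\bigcap_{i=1}^{\ell}\{p_i\text{ is selected at }q_i\}$. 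These $\ell$ events are governed by the independent selections performed at the $\ell$ distinct local‑code paths $q_1,\dots,q_\ell$, and conditioned on $q_i\in\calT$ the chance that $p_i$ belongs to a uniformly chosen $(d-1)$‑subset of the $\deg_{G_x}(t(q_i))-1$ children of $q_i$ is $\tfrac{d-1}{\deg_{G_x}(t(q_i))-1}$. Telescoping the conditional probabilities along the chain yields
\[
\Pr_{\rho_r}\{p\in\calT\}=\prod_{i=1}^{\ell}\frac{d-1}{\deg_{G_x}(t(q_i))-1}=\prod_{q\in\PPrefix(p)\cap\hat\calJ}\frac{d-1}{\deg_{G_x}(t(q))-1}.
\]

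Substituting this into $\big(\E_{\rho_r}[w_\calT]\big)_p=\widehat w(p)\cdot\Pr_{\rho_r}\{p\in\calT\}$, the factors $\tfrac1{d-1}$ attached to the local‑code prefixes in $\widehat w(p)$ cancel against the numerators $d-1$, each becoming $\tfrac1{\deg_{G_x}(t(q))-1}=\tfrac1{\deg_{\calT_r^{2h}(G_x)}(q)-1}$; the variable prefixes already contribute $\tfrac1{\deg_{\calT_r^{2h}(G_x)}(q)-1}$, and the leading factor $\tfrac{w_\ell}{\lVert w\rVert_1}\cdot\tfrac1{\deg_G(t(p))}$ is untouched. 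What is left is exactly $w_{\calT_r^{2h}(G_x)}(p)$ as given by Definition~\ref{def:weightedSubtree}. The degenerate case $p=(r)$ is immediate since both sides vanish there, and since the two functions agree on every variable path they coincide, which proves the lemma. I expect the only place needing real care to be pinning down $\rho_r$ and verifying that the $\ell$ "selection" events along the prefix chain of $p$ are indeed determined by independent local choices (so that the telescoping is legitimate); the remainder is the routine but easy‑to‑botch bookkeeping that makes the $d-1$ factors cancel correctly, keeping $\deg_G$, $\deg_{G_x}$, $\deg_\calT$ and $\deg_{\calT_r^{2h}(G_x)}$ straight.
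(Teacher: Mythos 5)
Your proof is correct and follows essentially the same route as the paper's: both arguments factor the expectation as $\E_{\rho_r}[w_\calT(p)] = \alpha(p)\cdot\Pr_{\rho_r}\{p\in\calT\}$ using the observation that $w_\calT(p)$ is constant over all $d$-trees $\calT$ containing $p$, compute $\Pr_{\rho_r}\{p\in\calT\}$ as the telescoping product $\prod_{q\in\PPrefix(p)\cap\hat\calJ}\tfrac{d-1}{\deg_{G_x}(t(q))-1}$ from the independent $(d-1)$-subset selections at local-code paths, and then cancel the $d-1$ factors against the $\tfrac{1}{d-1}$ factors in $\alpha(p)$ to recover $w_{\calT_r^{2h}(G_x)}(p)$. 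Your rigidity observation and the reduction to the single-path probability are exactly the paper's observations (i)--(ii) and its Equations (eqn:ew1)--(eqn:ew3), just packaged slightly more explicitly.
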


\begin{proof}
  Let $G_x=(\calV_x\cup\calJ_x,E_x)$ and let $w_{\calT_r^{2h}(G_x)}$
  denote a $w$-weighted path-prefix tree rooted at node $r$ with
  height $2h$.  We claim that the expectation of $w$-weighted $d$-trees
  $w_\calT\in\calT[r,2h,d](G_x)$ equals $w_{\calT_r^{2h}(G_x)}$ if
  $w_\calT$ is chosen uniformly at random.

  Let $\rho_r$ denote the uniform distribution over $\calT[r,2h,d](G_x)$.
  A random $d$-tree in $\calT[r,2h,d](G_x)$ can be sampled according
  to $\rho_r$ as follows.  Start from the root $r$. For each variable
  path, take all its augmentations\footnote{Note the difference between an augmentation of a
    variable path in a path-prefix tree and a path-suffix tree.
In a path-prefix tree, an augmentation appends a node to the end of the path.
In a path-suffix tree, an augmentation adds a node before the beginning of the path.}, and for each local-code path choose $d-1$ distinct
  augmentations uniformly at random. Let $\calT\in\calT[r,2h,d](G_x)$
  denote such a random $d$-tree, and consider a variable path $p \in
  \calT_r^{2h}(G_x)$.  Then,
\begin{equation}\label{eqn:ew1}
\Pr_{\rho_r}\{p\in\calT\} = \prod_{\{q\in\PPrefix(p)~\vert~t(q)\in\calJ_x\}}
    \frac{d-1}{\deg_{G_x}(t(q))-1}.
\end{equation}

Note the following two observations: (i)~if $p\notin\calT$, then $w_\calT(p)=0$, and (ii)~if $p\in\calT$, then the value of $w_\calT(p)$ is constant, i.e., $w_\calT(p)=w_{\calT'}(p)$ for all $\calT'$ such that $p\in\calT'$.
Let $\alpha(p)$ denote this constant, i.e., $\alpha(p)\triangleq w_\calT(p)$ for some $\calT\in\calT[r,2h,d](G_x)$ such that $p\in\calT$. From the two observations above we have
\begin{equation}\label{eqn:ew3}
\E_{\rho_r}\big[w_\calT(p)\big] =
\alpha(p) \cdot \Pr_{\rho_r}\{p\in\calT\}.
\end{equation}
Note that for a variable path $p\in\calT$, $\lvert p\rvert$ is even because $\calT$ is rooted at a variable node $r$.
By Definition~\ref{def:weightedSubtree}, for a variable path $p\in\calT$ we have
\begin{align}
\alpha(p)=w_{\calT}(p)=\frac{w_{\lvert p\rvert/2}}{\wone}\cdot
\frac{1}{\deg_{G_x}(t(p))}\cdot \frac{1}{(d-1)^{\lvert p\rvert /2}} \cdot \prod _{\{q\in\PPrefix(p)~\vert~t(q)\in\calV_x\}} \frac{1}{\deg_{G_x}(t(q))-1}.\label{eqn:ew2}
\end{align}
By substituting~(\ref{eqn:ew1}) and~(\ref{eqn:ew2}) in~(\ref{eqn:ew3}), we conclude that
\begin{align*}
\E_{\rho_r}\big[w_\calT(p)\big]&=
\frac{w_{\lvert p\rvert/2}}{\wone}\cdot\frac{1}{\deg_{G_x}(t(p))}\cdot\prod_{q\in\PPrefix(p)}
    \frac{1}{\deg_{G_x}(t(q))-1}\\
&=  w_{\calT_r^{2h}(G_x)}(p).
\end{align*}
\end{proof}

\begin{corollary}\label{cor:projEQexp}
For every positive integer $h$, every $2 \leqslant d \leqslant d^*$, and every weight vector $w \in \R_+^h$, it holds that
\begin{equation*}
\pi_{G,\calT_r^{2h}(G_x),w} = \E_{\rho_r}\big[\pi_{G,\calT,w}\big]
\end{equation*}
where $\rho_r$ is the  uniform distribution over $\calT[r,2h,d](G_x)$.
\end{corollary}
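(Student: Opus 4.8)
The plan is to derive the corollary directly from Lemma~\ref{lemma:skiniesDecomposition} using the fact that the projection operator $\pi_{G,\cdot,w}$ is linear in its weight-function argument. Both sides of the claimed identity are vectors in $\R^N$ indexed by the variable nodes of $G$, so it suffices to prove equality of the $v$-th coordinate for every $v\in\calV$.

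First I would unwind the definition of $\pi$. For a fixed $d$-tree $\calT\in\calT[r,2h,d](G_x)$, the value $\pi_{G,\calT,w}(v)$ equals $\sum_{\{p\in\calT~\vert~t(p)=v\}}w_\calT(p)$. Invoking the convention used in the proof of Lemma~\ref{lemma:skiniesDecomposition} (observation (i) there: $w_\calT(p)=0$ whenever $p\notin\calT$), I can extend this sum to the larger, $\calT$-independent index set consisting of \emph{all} variable paths $p$ in $\calT_r^{2h}(G_x)$ with $t(p)=v$. After this rewriting, the dependence on the random $d$-tree $\calT$ enters only through the scalars $w_\calT(p)$, indexed by a fixed finite set of paths.

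Next I would take expectations over $\rho_r$. Since the index set $\{p\in\calT_r^{2h}(G_x)~\vert~t(p)=v\}$ is finite and does not depend on $\calT$, linearity of expectation gives $\E_{\rho_r}\big[\pi_{G,\calT,w}(v)\big]=\sum_{\{p~\vert~t(p)=v\}}\E_{\rho_r}\big[w_\calT(p)\big]$. By Lemma~\ref{lemma:skiniesDecomposition}, which asserts the functional identity $w_{\calT_r^{2h}(G_x)}=\E_{\rho_r}\big[w_\calT\big]$ on the variable-path set of $\calT_r^{2h}(G_x)$, each summand equals $w_{\calT_r^{2h}(G_x)}(p)$. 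Hence $\E_{\rho_r}\big[\pi_{G,\calT,w}(v)\big]=\sum_{\{p~\vert~t(p)=v\}}w_{\calT_r^{2h}(G_x)}(p)=\pi_{G,\calT_r^{2h}(G_x),w}(v)$, which is exactly the $v$-th coordinate of the right-hand side; as $v$ was arbitrary, the two vectors coincide. There is no genuine obstacle here — the corollary is essentially a restatement of Lemma~\ref{lemma:skiniesDecomposition} after applying the linear map $\pi$, and the only point requiring care is the bookkeeping that lets one extend the summation range defining $\pi_{G,\calT,w}(v)$ from paths in $\calT$ to all variable paths of $\calT_r^{2h}(G_x)$ ending at $v$ via the zero-extension of $w_\calT$.
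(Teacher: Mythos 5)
Your proof is correct and follows essentially the same route as the paper's: unwind the definition of $\pi$, use the zero-extension of $w_\calT$ to make the summation range $\calT$-independent, interchange sum and expectation by linearity, and apply Lemma~\ref{lemma:skiniesDecomposition} pointwise. The only cosmetic difference is that you argue from $\E_{\rho_r}[\pi_{G,\calT,w}]$ toward $\pi_{G,\calT_r^{2h}(G_x),w}$, whereas the paper runs the chain in the opposite direction.
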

\begin{proof}
By definition of $\pi_{G,\calT_r^{2h}(G_x),w}$, we have
\begin{equation}\label{eqn:projEqexp1}
\pi_{G,\calT_r^{2h}(G_x),w}(v) = \sum_{\{p\in\calT_r^{2h}(G_x)~\vert~t(p)=v\}}w_{\calT_r^{2h}(G_x)}(p).
\end{equation}
By Lemma~\ref{lemma:skiniesDecomposition} and linearity of expectation we have
\begin{align}\label{eqn:projEqexp2}
\sum_{\{p\in\calT_r^{2h}(G_x)~\vert~t(p)=v\}}w_{\calT_r^{2h}(G_x)}(p)&=\sum_{\{p\in\calT_r^{2h}(G_x)~\vert~t(p)=v\}}\E_{\rho_r}\big[w_\calT(p)\big]\nonumber\\
&=\E_{\rho_r}\bigg[\sum_{\{p\in\calT_r^{2h}(G_x)~\vert~t(p)=v\}}w_\calT(p)\bigg].
\end{align}\label{eqn:projEqexp3}
Now, for variable paths $p$ that are not in a $d$-tree $\calT$, $w_\calT(p)=0$. Hence, if a $d$-tree $\calT$ is a subtree of $\calT_r^{2h}(G_x)$, then
\begin{eqnarray}
\sum_{\{p\in\calT_r^{2h}(G_x)~\vert~t(p)=v\}}w_\calT(p) &=& \sum_{\{p\in\calT~\vert~t(p)=v\}}w_\calT(p)\nonumber\\
&=& \pi_{G,\calT,w}(v). \label{eq:18}
\end{eqnarray}
From Equations~(\ref{eqn:projEqexp1})--(\ref{eq:18}) we conclude that for every $v\in\calV$,
\begin{equation*}
\pi_{G,\calT_r^{2h}(G_x),w}(v) = \E_{\rho_r}\big[\pi_{G,\calT,w}(v)\big].
\end{equation*}
\end{proof}
Before proving Lemma~\ref{lemma:IntegralDecomposition}, we state a
lemma from probability theory.
\begin{lemma}\label{lemma:sumOfExp}
Let $\{\rho_r\}_{i=1}^K$ denote $K$ probability distributions.
Let $\rho \eqdf \frac{1}{K} \sum_{r=1}^K \rho_r$.
Then,
\begin{equation*}
\sum_{r=1}^K\E_{\rho_r}[x]=K\cdot\E_{\rho}[x].
\end{equation*}
\end{lemma}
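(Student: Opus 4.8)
The statement is a standard fact from probability theory: the expectation operator is linear in the underlying distribution, so the expectation under a uniform mixture of distributions equals the uniform mixture of the corresponding expectations. The plan is therefore to simply unfold the definition of expectation and interchange a finite sum. All of $\rho_1,\dots,\rho_K$ and $\rho$ are probability distributions over the same (in our application, finite) sample space $\Omega$, and $x$ is a fixed real-valued random variable on $\Omega$.

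First I would write $\E_\rho[x]=\sum_{\omega\in\Omega}\rho(\omega)\,x(\omega)$. Substituting the definition $\rho=\frac1K\sum_{r=1}^K\rho_r$, i.e.\ $\rho(\omega)=\frac1K\sum_{r=1}^K\rho_r(\omega)$, and exchanging the order of the two summations (legitimate since the outer sum over $r$ is finite), I obtain
\begin{equation*}
\E_\rho[x]=\sum_{\omega\in\Omega}\frac1K\sum_{r=1}^K\rho_r(\omega)\,x(\omega)=\frac1K\sum_{r=1}^K\sum_{\omega\in\Omega}\rho_r(\omega)\,x(\omega)=\frac1K\sum_{r=1}^K\E_{\rho_r}[x].
\end{equation*}
Multiplying both sides by $K$ yields $\sum_{r=1}^K\E_{\rho_r}[x]=K\cdot\E_\rho[x]$, as claimed.

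There is no real obstacle in this proof; the only point worth a remark is the justification for exchanging the order of summation, which is immediate here because the outer index set $\{1,\dots,K\}$ is finite (and $x$ is integrable under each $\rho_r$, trivially so in the finite-support case in which the lemma is applied, namely distributions over $d$-trees of $G_x$). If one preferred to state the lemma for general probability measures, the identical computation goes through with integrals in place of sums, the interchange being justified by Tonelli's theorem applied to the counting measure on $\{1,\dots,K\}$.
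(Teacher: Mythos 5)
Your proof is correct and is exactly the routine argument one would give: expand $\E_\rho[x]$ against the mixture $\rho=\frac1K\sum_r\rho_r$, swap the two finite sums, and multiply by $K$. The paper itself states Lemma~\ref{lemma:sumOfExp} without proof, flagging it as ``a lemma from probability theory,'' so there is no paper proof to diverge from; your write-up simply supplies the omitted justification. Your remark about the finite-support setting (distributions $\rho_r$ over $d$-trees in $\calT[r,2h,d](G_x)$) and the Tonelli fallback for the general case are both apt and do not introduce any gap.
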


\begin{proof}[Proof of Lemma~\ref{lemma:IntegralDecomposition}]
By Lemma~\ref{lemma:prefixDecomposition} and Corollary~\ref{cor:projEQexp} we have for every $v\in\calV_x$
\begin{align}\label{eqn:IntegralDecomposition1}
x_v &= \sum_{r\in\calV_x}\pi_{G,\calT_r^{2h}(G_x),w}(v)\\
&=\sum_{r\in\calV_x}\E_{\rho_r}\big[\pi_{G,\calT,w}\big].\nonumber
\end{align}
Let $\rho$ denote the distribution defined by
$\rho\eqdf \frac{1}{\xone} \cdot \sum_{r\in \calV_x} \rho_r$.
By Lemma~\ref{lemma:sumOfExp} and Equation~(\ref{eqn:IntegralDecomposition1}),
\[
x_v = \xone \cdot \E_{\rho}\big[\pi_{G,\calT,w}\big],
\]
and the lemma follows.
\end{proof}

\section{Symmetry of \NWMS~--~Proof of Lemma~\ref{lemma:NWMSsymmetry}} \label{app:NWMSSymmetryProof}

\begin{proof}
Let $\mu_{v\rightarrow C}^{(l)}[\lambda]$ denote the message sent from $v$ to $C$ in iteration $l$ given an input $\lambda$. Let $\mu_{C\rightarrow v}^{(l)}[\lambda]$ denote the corresponding message from $C$ to $v$. From the decision of \NWMS\ in Line 12, it's sufficient to prove that $\mu_{v\rightarrow C}^{(l)}[\lambda]=(-1)^{x_v}\cdot\mu_{v\rightarrow C}^{(l)}[(-1)^x\ast\lambda]$ and $\mu_{C\rightarrow v}^{(l)}[\lambda]=(-1)^{x_v}\cdot\mu_{C\rightarrow v}^{(l)}[(-1)^x\ast\lambda]$ for every $0\leqslant l\leqslant h-1$.

The proof is by induction on $l$. The induction basis, for $l=-1$, holds because $\mu_{C\rightarrow v}^{(-1)}[\lambda]=(-1)^{x_v}\cdot\mu_{C\rightarrow v}^{(-1)}[(-1)^x\ast\lambda]=0$ for every codeword $x$.

The induction step is proven as follows. By induction hypothesis we have
\begin{eqnarray*}
\mu_{v\rightarrow C}^{(l)}[\lambda]&=&\frac{w_{h-l}}{\deg_G(v)}\lambda_v+\frac{1}{\deg_G(v)-1}
\sum_{C'\in\calN(v)\setminus\{C\}}\mu_{C'\rightarrow v}^{(l-1)}[\lambda]\\
&=&(-1)^{x_v}\cdot\bigg(\frac{w_{h-l}}{\deg_G(v)}(-1)^{x_v}\lambda_v+\frac{1}{\deg_G(v)-1}
\sum_{C'\in\calN(v)\setminus\{C\}}\mu_{C'\rightarrow v}^{(l-1)}[(-1)^x\ast\lambda]\bigg)\\
&=&(-1)^{x_v}\cdot\mu_{v\rightarrow C}^{(l)}[(-1)^x\ast\lambda].
\end{eqnarray*}
For check to variable messages we have by the induction hypothesis,
\begin{align*}
\mu_{C\rightarrow v}^{(l)}[\lambda]=&
\bigg(\prod_{u\in\calN(C)\setminus\{v\}}\textrm{sign}\big(\mu_{u\rightarrow C}^{(l)}[\lambda]\big)\bigg)\cdot\min_{u\in\calN(C)\setminus\{v\}}\big\{\big\lvert\mu_{u\rightarrow C}^{(l)}[\lambda]\big\rvert\big\}\\
=&\bigg(\prod_{u\in\calN(C)\setminus\{v\}}\textrm{sign}\big((-1)^{x_u}\cdot\mu_{u\rightarrow C}^{(l)}[(-1)^x\ast\lambda]\big)\bigg)\\
&\cdot\min_{u\in\calN(C)\setminus\{v\}}\big\{\big\lvert(-1)^{x_u}\cdot\mu_{u\rightarrow C}^{(l)}[(-1)^x\ast\lambda]\big\rvert\big\}\\
=&\bigg(\prod_{u\in\calN(C)\setminus\{v\}}(-1)^{x_u}\bigg)\cdot\mu_{C\rightarrow v}^{(l)}[(-1)^x\ast\lambda].
\end{align*}
Because $x$ is codeword, for every single parity-check $C$ we have $\prod_{u\in\calN(C)\setminus\{v\}}(-1)^{x_u}=(-1)^{x_v}$. Therefore, $\mu_{C\rightarrow v}^{(l)}[\lambda]=(-1)^{x_v}\cdot\mu_{C\rightarrow v}^{(l)}[(-1)^x\ast\lambda]$ and the claim follows.
\end{proof}

\section{Optimal Valid Subconfigurations in the Execution\\ of
\NWMS2} \label{app:NWMS2_DP}

The description of algorithm \NWMS2 as a dynamic programming algorithm deals with the computation of optimal valid configurations and subconfigurations. In this appendix we define optimal valid subconfigurations and prove invariants for the messages of algorithm \NWMS2.

Denote by $\calT_{C\rightarrow v}^{2l+2}$ a path prefix tree of $G$ rooted at node $v$ with height $2l+2$ such that all paths must start with edge $(v,C)$ (see Figure~\ref{fig:substructure_C-v}). Denote by $\calT_{v\rightarrow C}^{2l+1}$ a path prefix tree of $G$ rooted at node $C$ with height $2l+1$ such that all paths  start with edge $(C,v)$ (see Figure~\ref{fig:substructure_v-C}).
\begin{figure}
\centering
\subfigure[Substructure $\calT_{C\rightarrow v}^{2l+2}$ for $0\leqslant l\leqslant h-1$.]{\includegraphics[width=0.48\textwidth]{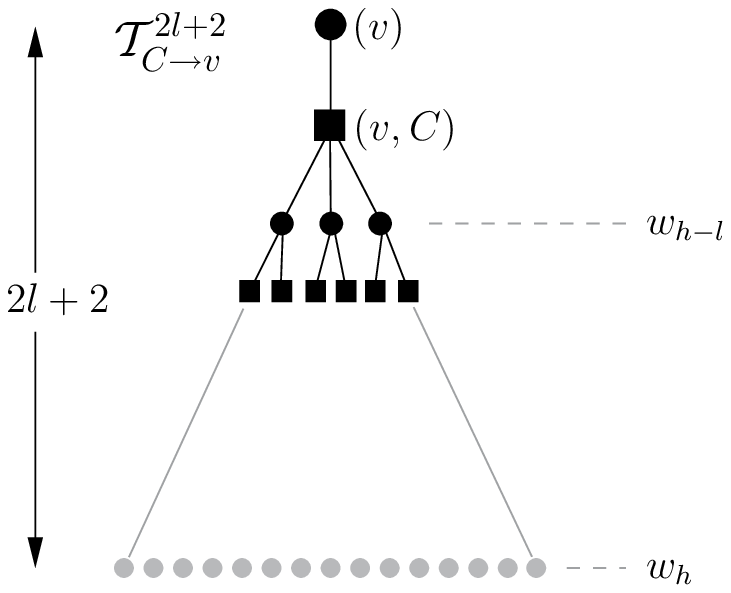}
\label{fig:substructure_C-v}}
\subfigure[Substructure $\calT_{v\rightarrow C}^{2l+1}$ for $0\leqslant l\leqslant h-1$.]{\includegraphics[width=0.48\textwidth]{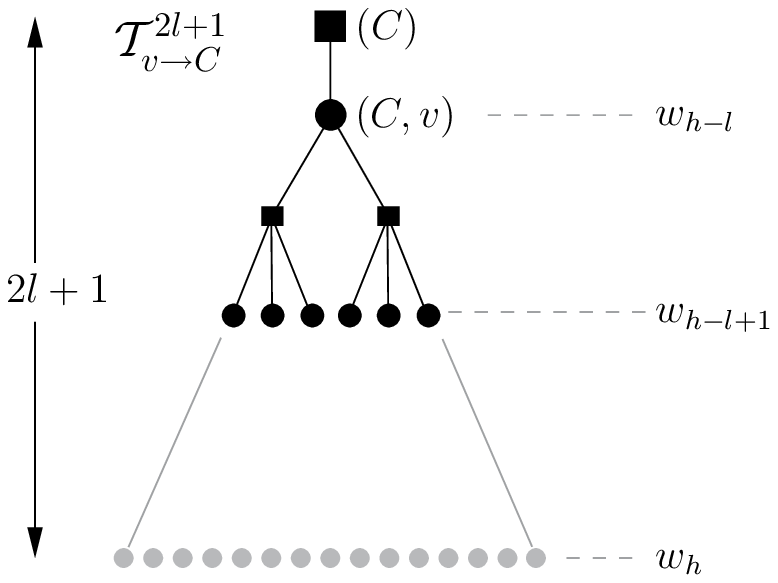}
\label{fig:substructure_v-C}}
\label{fig:substructures}
\caption{Substructures of a path-prefix tree $\calT_r^{2h}(G)$ in a dynamic programming that computes optimal configurations in $\calT_r^{2h}(G)$.}
\end{figure}

Consider the message $\mu_{C\rightarrow v}^{(2l+2)}$. It is determined by the messages sent along the edges of $\calT_v^{2l+2}(G)$ that hang from the edge $(v,C)$. We introduce the following notation of this subtree (see Figure~\ref{fig:substructureReasoning}).
Consider a path-prefix tree $\calT_r^{2h}(G)$ and a variable path $p$ such that
\begin{compactenum}[(i)]
\item $p$ is a path from root $r$ to a variable node $v$,
\item the last edge in $p$ is $(C',v)$ for $C'\neq C$, and
\item the length of $p$ is $2(h-l-1)$.
\end{compactenum}
In such a case, $\calT_{C\rightarrow v}^{2l+2}$ is isomorphic to the subtree of $\calT_r^{2h}$ hanging from $p$ along the edge $\big(p,p\circ(v,C)\big)$. Hence, we say that $\calT_{C\rightarrow v}^{2l+2}$ is a \emph{substructure} of $\calT_r^{2h}(G)$.
Similarly, if there exists a backtrackless path $q$ in $G$ from $r$ to $C$ with length $2(h-l)-1$ that does not end with edge $(v,C)$, we say that $\calT_{v\rightarrow C}^{2l+1}$ is a \emph{substructure} of $\calT_r^{2h}(G)$.
\begin{figure}
  \begin{center}
 \includegraphics[width=0.5\textwidth]{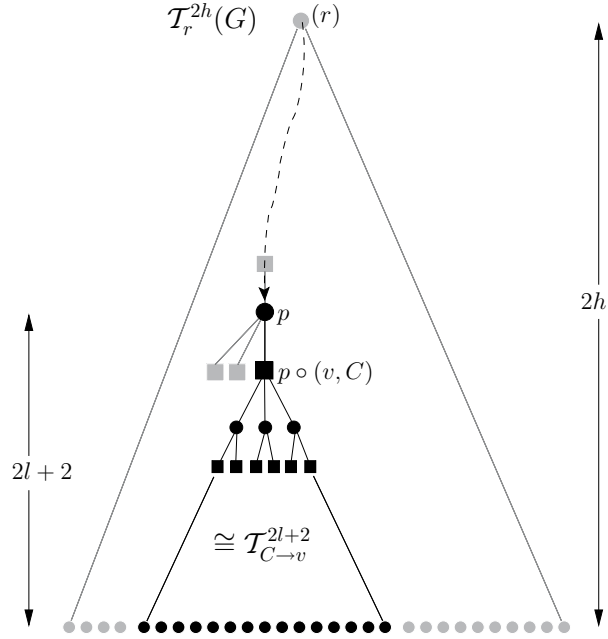}
 \caption{$\calT_{C\rightarrow v}^{2l+2}$ as a substructure isomorphic to a subtree of the path-prefix tree $\calT_r^{2h}$.}
  \label{fig:substructureReasoning}
  \end{center}
\end{figure}

Let $\calT_\sub$ denote a substructure $\calT_{C\rightarrow v}^{2l+2}$ or $\calT_{v\rightarrow C}^{2l+1}$. A binary assignment $z \in \{0,1\}^{\vert\hat\calV(\calT_\sub)\rvert}$ to variable paths $\hat{\calV}(\calT_{\sub})$ is a \emph{valid subconfiguration} if it satisfies every parity-check path $q\in\calT_\sub$ with $\lvert q\rvert\geqslant1$. We denote the set of valid subconfigurations of $\calT_\sub$ by
$\config(\calT_\sub)$.

Define the weight of a variable path $q\in\hat\calV(\calT_\sub)$ w.r.t. level weights $w=(w_1,\ldots,w_h)\in\R_+^h$ by
\begin{align*}
\calW_{\sub}(\calT_\sub,q)\triangleq \frac{w_{h-l-1+\lceil\lvert q\rvert/2\rceil}}{\deg_G\big(t(q)\big)}\cdot \prod_{q'\in\PPrefix(q)\cap\hat{\calV}(\calT_\sub)}\frac{1}{\deg_G\big(t(q')\big)-1}.
\end{align*}

The \emph{weight of a valid subconfiguration $z$}  for a substructure $\calT_\sub$
is defined by
\[
\calW_{\sub}(\calT_\sub,z) \triangleq \sum_{\{q\in\hat\calV(\calT_\sub)~\vert~\lvert q\rvert\geqslant1\}} \lambda_{t(q)}(z_q)\cdot \calW_{\sub}(\calT_\sub,q).
\]

Define the \emph{minimum weight of substructures $\calT_{v\rightarrow C}^{2l+1}$ and $\calT_{C\rightarrow v}^{2l+2}$ for $a\in\{0,1\}$} as follows.
\begin{align*}
\calW_{\sub}^{\min}(\calT_{v\rightarrow C}^{2l+1},a) \triangleq \min\bigg\{\calW_{\sub}(\calT_{v\rightarrow C}^{2l+1},z)~\bigg\vert~\begin{aligned}&z\in\config(\calT_{v\rightarrow C}^{2l+1}),\\ &z_{(C,v)}=a\end{aligned}\bigg\},
\end{align*}
and
\begin{align*}
\calW_{\sub}^{\min}(\calT_{C\rightarrow v}^{2l+2},a) \triangleq \min\bigg\{\calW_{\sub}(\calT_{C\rightarrow v}^{2l+2},z)~\bigg\vert~\begin{aligned}&z\in\config(\calT_{C\rightarrow v}^{2l+2}),\\ &z_{(v)}=a\end{aligned}\bigg\}.
\end{align*}

The minimum weight substructures satisfy the following recurrences.
\begin{proposition}\label{prop:calW_induction}
Let $a\in\{0,1\}$, then
\newline
1) for every $1\leqslant l\leqslant h-1$,
\begin{align*}
\calW_{\sub}^{\min}(\calT_{v\rightarrow C}^{2l+1},a) = \frac{w_{h-l}}{\deg_G(v)}\cdot\lambda_v(a)
+\frac{1}{\deg_G(v)-1}\cdot\sum_{C'\in\calN(v)\setminus\{C\}} \calW_{\sub}^{\min}(\calT_{C'\rightarrow v}^{2(l-1)+2},a).
\end{align*}
\newline 2) for every $0\leqslant l\leqslant h-1$,
\begin{align*}
\calW_\sub^{\min}(\calT_{C\rightarrow v}^{2l+2},a) = \min\bigg\{\sum_{u\in\calN(C)\setminus\{v\}} \calW_{\sub}^{\min}(\calT_{u\rightarrow C}^{2l+1},x_{u})\ \bigg\vert \begin{aligned}x\in\{0,1\}^{\deg_G(C)},\\\lVert x\rVert_1~\mathrm{even}, x_v=a\end{aligned}\bigg\}.
\end{align*}
\end{proposition}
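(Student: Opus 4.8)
The plan is to prove both recurrences by unfolding the recursive structure of the substructures and showing that a minimum-weight valid subconfiguration decomposes along it. The basic tool is a prefix-shift bijection. In $\calT_{v\rightarrow C}^{2l+1}$ every variable path has the form $(C,v,\ldots)$, and deleting the leading $C$ maps, for each $C'\in\calN(v)\setminus\{C\}$, the variable paths that run through $C'$ bijectively onto the variable paths of $\calT_{C'\rightarrow v}^{2(l-1)+2}$, with the zero-length path $(v)$ of the latter corresponding to $(C,v)$. Symmetrically, in $\calT_{C\rightarrow v}^{2l+2}$ every variable path has the form $(v,C,\ldots)$, and deleting the leading $v$ maps the variable paths through each $u\in\calN(C)\setminus\{v\}$ bijectively onto the variable paths of $\calT_{u\rightarrow C}^{2l+1}$, with $(C,u)$ corresponding to $(v,C,u)$.

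For part~(1), fix $1\leqslant l\leqslant h-1$. Under the bijection above, a valid subconfiguration $z$ of $\calT_{v\rightarrow C}^{2l+1}$ with $z_{(C,v)}=a$ is exactly the data of one valid subconfiguration of $\calT_{C'\rightarrow v}^{2(l-1)+2}$ with root value $a$ for each $C'$, chosen independently: the length-$2$ check paths $(C,v,C')$ become the length-$1$ check paths $(v,C')$, and all remaining check-path constraints are internal to a single $C'$-branch. I would then verify the weight identities: $\calW_\sub(\calT_{v\rightarrow C}^{2l+1},(C,v))=w_{h-l}/\deg_G(v)$ since $\PPrefix((C,v))=\emptyset$; and for any other variable path $p$ through $C'$, with image $\tilde p$ in $\calT_{C'\rightarrow v}^{2(l-1)+2}$, one has $\calW_\sub(\calT_{v\rightarrow C}^{2l+1},p)=\tfrac{1}{\deg_G(v)-1}\calW_\sub(\calT_{C'\rightarrow v}^{2(l-1)+2},\tilde p)$, because $|\tilde p|=|p|-1$ with $|p|$ odd gives $\lceil|p|/2\rceil=\lceil|\tilde p|/2\rceil$ (so the $w$-index is unchanged), the product over variable prefixes ranges over the same nodes, and the one extra factor $1/(\deg_G(v)-1)$ arises because $(C,v)$ is a proper variable prefix of $p$ whereas $(v)$ is not counted in $\PPrefix(\tilde p)\cap\hat\calV$. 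Summing gives $\calW_\sub(\calT_{v\rightarrow C}^{2l+1},z)=\tfrac{w_{h-l}}{\deg_G(v)}\lambda_v(a)+\tfrac{1}{\deg_G(v)-1}\sum_{C'}\calW_\sub(\calT_{C'\rightarrow v}^{2(l-1)+2},z|_{C'})$, and minimizing over $z$ with $z_{(C,v)}=a$ --- which by independence is minimization branch by branch --- yields the stated formula.

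For part~(2), fix $0\leqslant l\leqslant h-1$. The new ingredient is the single-parity-check constraint at the length-$1$ check path $(v,C)$: a valid subconfiguration $z$ with $z_{(v)}=a$ must satisfy $z_{(v)}+\sum_{u}z_{(v,C,u)}\equiv0\pmod 2$, i.e.\ the word $x\in\{0,1\}^{\deg_G(C)}$ with $x_v=a$ and $x_u=z_{(v,C,u)}$ has even weight; and for each admissible $x$ the restriction of $z$ to the $u$-branch is an arbitrary valid subconfiguration of $\calT_{u\rightarrow C}^{2l+1}$ with root value $x_u$, independently across $u$. Here the weight transformation carries no extra factor: $(v)$ has length $0$ and does not contribute, and for a variable path $p$ through $u$ with image $\tilde p$ one checks $\calW_\sub(\calT_{C\rightarrow v}^{2l+2},p)=\calW_\sub(\calT_{u\rightarrow C}^{2l+1},\tilde p)$, since $|\tilde p|=|p|-1$ with $|p|$ even again preserves the $w$-index, and the two products over variable prefixes range over the same variable nodes ($(v)$ being excluded and $(v,C)$ being a check path). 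Thus $\calW_\sub(\calT_{C\rightarrow v}^{2l+2},z)=\sum_u\calW_\sub(\calT_{u\rightarrow C}^{2l+1},z|_u)$, and minimizing over $z$ with $z_{(v)}=a$ --- first over the even-weight word $x$, then independently over each branch given $x_u$ --- yields the claimed minimum.

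The conceptual step --- a valid configuration on a tree is the independent gluing of valid configurations on its principal subtrees, together with the parity constraint at the topmost check node --- is routine; the part requiring care is the bookkeeping of the weight under the prefix shift. The two facts that must be pinned down are (i) invariance of the level index $h-l-1+\lceil|q|/2\rceil$ when one edge is stripped from the front of a path, which hinges on the parity of variable-path lengths in each substructure and the behaviour of $\lceil\cdot\rceil$, and (ii) exactly which proper prefixes land in $\PPrefix(\cdot)\cap\hat\calV$, so as to produce precisely the factor $1/(\deg_G(v)-1)$ in part~(1) and precisely no factor in part~(2). I would isolate (i) and (ii) as two short computations on a generic variable path before assembling the recurrences.
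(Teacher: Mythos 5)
The paper states Proposition~\ref{prop:calW_induction} without proof and then uses it as the inductive step in the claim that follows; your proposal supplies the detailed verification that the paper treats as routine. Your overall strategy --- the prefix-shift bijection between valid subconfigurations of a substructure and independent valid subconfigurations of the smaller substructures, followed by careful tracking of how $\calW_{\sub}$ transforms --- is exactly the right way to justify the recurrence, and your identification of the two bookkeeping facts (level-index invariance, and which proper prefixes survive in $\PPrefix(\cdot)\cap\hat\calV$) is precisely where the content lies.

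However, there is an arithmetic slip in your verification of fact~(i) for part~(1). You write that $|p|$ odd and $|\tilde p|=|p|-1$ ``gives $\lceil|p|/2\rceil=\lceil|\tilde p|/2\rceil$ (so the $w$-index is unchanged).'' For $|p|=2k+1$ one has $\lceil|p|/2\rceil=k+1$ while $\lceil|\tilde p|/2\rceil=k$, so the two ceilings are \emph{not} equal --- they differ by $1$. The $w$-index $h-l-1+\lceil|q|/2\rceil$ is nevertheless preserved, but for a compensating reason: the ambient substructure changes from superscript $2l+1$ to superscript $2(l-1)+2$, so the $l$ in the formula drops by one, i.e.\ $h-l-1+\lceil|p|/2\rceil = h-(l-1)-1+\lceil|\tilde p|/2\rceil$. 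As stated, your justification is internally inconsistent: if the ceilings were equal, the shift in $l$ would make the $w$-index \emph{increase} by one. In part~(2), where the superscript $l$ is the same on both sides, your ceiling computation $\lceil 2k/2\rceil=\lceil(2k-1)/2\rceil=k$ is correct. The conclusion of part~(1) is also correct; only the stated reason needs repair.
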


The following claim states an invariant over the messages $\mu_{C\rightarrow v}^{l}(a)$ and $\mu_{v\rightarrow C}^{l}(a)$ that holds during the execution of \NWMS2.
\begin{claim}
Consider an execution of $\NWMS2(\lambda(0),\lambda(1),h,w)$. Then, for every $0\leqslant l \leqslant h-1$,
\begin{eqnarray*}
\mu_{v\rightarrow C}^{(l)}(a) &=& \calW_{\sub}^{\min}(\calT_{v\rightarrow C}^{2l+1},a), \ \ \mathrm{and}\\
\mu_{C\rightarrow v}^{(l)}(a) &=& \calW_\sub^{\min}(\calT_{C\rightarrow v}^{2l+2},a).
\end{eqnarray*}
\end{claim}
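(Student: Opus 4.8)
The claim is an invariant on the messages of \NWMS2, and the natural approach is induction on the iteration index $l$, mirroring the recurrences for minimum-weight substructures stated in Proposition~\ref{prop:calW_induction}. The plan is to show simultaneously that $\mu_{v\rightarrow C}^{(l)}(a)=\calW_{\sub}^{\min}(\calT_{v\rightarrow C}^{2l+1},a)$ and $\mu_{C\rightarrow v}^{(l)}(a)=\calW_\sub^{\min}(\calT_{C\rightarrow v}^{2l+2},a)$ for every edge $(v,C)$ and every $a\in\{0,1\}$. I would first set up a base case: in Algorithm~\ref{alg:weighted-min-sum2} the messages $\mu_{C\rightarrow v}^{(-1)}(a)$ are initialized to $0$; correspondingly, one treats a degenerate substructure of height $0$ as having the empty valid subconfiguration of weight $0$, so the invariant holds vacuously before the first iteration. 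It is cleanest to phrase the induction so that the statement ``$\mu_{C\rightarrow v}^{(l-1)}(a)=\calW_\sub^{\min}(\calT_{C\rightarrow v}^{2l}(a))$'' is available when deriving the ``ping'' message at iteration $l$.

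\textbf{Inductive step for variable-to-check messages.} Assuming the invariant holds for the check-to-variable messages of iteration $l-1$, I would expand Line~4 of \NWMS2,
\[
\mu_{v\rightarrow C}^{(l)}(a) \;=\; \frac{w_{h-l}}{\deg_G(v)}\lambda_v(a)\;+\;\frac{1}{\deg_G(v)-1}\sum_{C'\in\calN(v)\setminus\{C\}}\mu_{C'\rightarrow v}^{(l-1)}(a),
\]
substitute $\mu_{C'\rightarrow v}^{(l-1)}(a)=\calW_\sub^{\min}(\calT_{C'\rightarrow v}^{2(l-1)+2},a)$ from the induction hypothesis, and observe that the resulting expression is exactly the right-hand side of Proposition~\ref{prop:calW_induction}(1). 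The only thing to verify carefully is that the weight bookkeeping matches: the factor $w_{h-l}/\deg_G(v)$ accounts for the root variable path of $\calT_{v\rightarrow C}^{2l+1}$ (which sits at the appropriate level $h-l$ when viewed as a substructure), and the factor $1/(\deg_G(v)-1)$ is precisely the multiplicative degree-normalization that $\calW_{\sub}$ places on a variable path when passing through $v$. Since valid subconfigurations of $\calT_{v\rightarrow C}^{2l+1}$ that fix the root to $a$ decompose independently across the $\deg_G(v)-1$ children subtrees $\calT_{C'\rightarrow v}^{2(l-1)+2}$, minimizing the total weight amounts to minimizing each child independently, which is what the sum of minima expresses.

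\textbf{Inductive step for check-to-variable messages and the obstacle.} Given the invariant for the ping messages of iteration $l$, expanding Line~7 of \NWMS2 gives
\[
\mu_{C\rightarrow v}^{(l)}(a)=\min\Bigl\{\textstyle\sum_{u\in\calN(C)\setminus\{v\}}\mu_{u\rightarrow C}^{(l)}(x_u)\ \Bigm|\ x\in\{0,1\}^{\deg(C)},\ \lVert x\rVert_1\text{ even},\ x_v=a\Bigr\},
\]
and substituting the induction hypothesis turns the summands into $\calW_\sub^{\min}(\calT_{u\rightarrow C}^{2l+1},x_u)$, matching Proposition~\ref{prop:calW_induction}(2). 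The point requiring care — and the main obstacle — is justifying the interchange of the ``min over even-weight local assignments'' with the ``min over valid subconfigurations of each child subtree'': one must argue that an optimal valid subconfiguration of $\calT_{C\rightarrow v}^{2l+2}$ is obtained by (i)~choosing an even-weight assignment $x$ to the neighbours of $C$ with $x_v=a$, and (ii)~for each neighbour $u\neq v$, independently choosing an optimal valid subconfiguration of $\calT_{u\rightarrow C}^{2l+1}$ with root value $x_u$. This is a standard tree-decomposition argument — the subtrees hanging off $C$ are vertex-disjoint apart from $C$ itself, so parity at $C$ is the only coupling constraint, and weights are additive across them — but it is the step where one must be explicit that the check node $C$ itself carries no weight (it is a parity-check path, not a variable path) and that the level indices of the child substructures align correctly with the levels used in $\calW_\sub$. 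Once this is established, Proposition~\ref{prop:calW_induction} closes the induction, and combining the two halves at iteration $l=h-1$ together with $\mu_r(a)=\sum_{C\in\calN(r)}\mu_{C\rightarrow r}^{(h-1)}(a)$ recovers Lemma~\ref{lemma:NWMS2dp} as the promised corollary.
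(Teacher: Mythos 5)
Your proof is correct and follows essentially the same route as the paper: induction on the iteration index $l$, with the inductive step closed by Proposition~\ref{prop:calW_induction}. The paper anchors the base case at $l=0$ via the explicit identity $\mu_{v\rightarrow C}^{(0)}(a)=\frac{w_h}{\deg_G(v)}\lambda_v(a)=\calW_{\sub}^{\min}(\calT_{v\rightarrow C}^{1},a)$ rather than at the $l=-1$ initialization, but this is a cosmetic difference.
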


\begin{proof}
The proof is by induction on $l$. The induction basis, for $l=0$, holds because $\mu_{v\rightarrow C}^{(0)}(a)=\calW_{\sub}^{\min}(\calT_{v\rightarrow C}^{1},a)=\frac{w_h}{\deg_G(v)}\lambda_v(a)$ for every edge $(v,C)$ of $G$.
The induction step follows directly from the induction hypothesis and Proposition~\ref{prop:calW_induction}.
\end{proof}

\section{Proof of Lemma~\ref{lemma:ADSbound2}}\label{app:proofofADSbound2}

\begin{proof}
We prove the lemma by induction on the difference $l-s$.
We first derive an equality for $\E e^{-tY_l}$ and a bound for $\E e^{-tX_l}$.
Since $Y_l$ is the sum of mutually independent variables,
\begin{equation} \label{eqn:1}
\E e^{-tY_l} = \big(\E e^{-t\omega_l \gamma}\big)\big(\E e^{-tX_{l-1}}\big)^{\dL'}.
\end{equation}
By definition of $X_l$ we have the following bound,
\begin{eqnarray*}
 e^{-tX_l} &=& e^{-t\sum_{j=1}^{d'} \min^{[j]}\{Y_l^{(i)}~\vert~ 1 \leqslant i \leqslant \dR'\}}\\
 &=&\prod_{j=1}^{d'}e^{-t\min^{[j]}\{Y_l^{(i)}~\vert~ 1 \leqslant i \leqslant \dR'\}} \\
 &\leqslant& \sum_{\{S\subseteq[\dR']~\vert~|S| = d'\}}\prod_{i\in S}e^{-tY_l^{(i)}}.
\end{eqnarray*}
By linearity of expectation and since $\{Y_l^{(i)}\}_{i=1}^{\dR'}$ are mutually independent variables, we have
\begin{equation}\label{eqn:2}
\E e^{-tX_l} \leqslant \binom{\dR'}{d'}\bigg(\E e^{-tY_l}\bigg)^{d'}.
\end{equation}
By substituting~(\ref{eqn:1}) in (\ref{eqn:2}), we get
\begin{equation}\label{eqn:3}
\E e^{-tX_l} \leqslant \bigg(\E e^{-tX_{l-1}}\bigg)^{(\dL'\cdot d')}\binom{\dR'}{d'} \bigg(\E e^{-t\omega_l\gamma}\bigg)^{d'},
\end{equation}
which proves the induction basis where $s=l-1$.
Suppose, therefore, that the lemma holds for $l-s=i$, we now prove it for $l-(s-1)=i+1$. Then by substituting~(\ref{eqn:3}) in the induction hypothesis, we have
 \begin{align*}
 \E e^{-tX_l} &\leqslant {{\bigg(\E e^{-tX_s} \bigg)}^{(\dL'\cdot d')}}^{l-s}\cdot \prod_{k=0}^{l-s-1}{\bigg(\binom{\dR'}{d'}\big(\E e^{-t\omega_{l-k}\gamma}\big)^{d'}\bigg)^{(\dL'\cdot d')}}^k \\
 &\leqslant {{\bigg[\bigg(\E e^{-tX_{s-1}}\bigg)^{(\dL'\cdot d')}\binom{\dR'}{d'}\! \bigg(\E e^{-t\omega_s\gamma}\bigg)^{d'}\bigg]}^{(\dL'\cdot d')}}^{l-s}
   \!\cdot\!\prod_{k=0}^{l-s-1}\!\!{\bigg(\!\binom{\dR'}{d'}\!\big(\E e^{-t\omega_{l-k}\gamma}\big)^{d'}\bigg)^{\!\!(\dL'\cdot d')}}^k \\
 &= {{\bigg(\E e^{-tX_{s-1}}\bigg)}^{(\dL'\cdot d')}}^{l-s+1}\cdot \prod_{k=0}^{l-s}{\bigg(\binom{\dR'}{d'}\big(\E e^{-t\omega_{l-k}\gamma}\big)^{d'}\bigg)^{(\dL'\cdot d')}}^k,
  \end{align*}
which concludes the correctness of the induction step for a difference of $l-s+1$.
\end{proof}

\section{Proof of Lemma~\ref{lemma:improvedBound}}\label{app:proofofImprovedBound}

\begin{proof}
By Lemma~\ref{lemma:ADSbound2}, we have
\begin{align*}
\E e^{-tX_{h-1}}
 \leqslant{{\bigg(\E e^{-tX_s} \bigg)}^{(\dL'\cdot d')}}^{h-s-1}\cdot{\bigg(\binom{\dR'}{d'}\big(\E e^{-t\rho \eta}\big)^{d'}\bigg)^{\frac{(\dL'\cdot d')^{h-s-1}-1}{\dL'\cdot d'-1}}}.
\end{align*}
Note that $\E e^{-t\rho\eta}$ is minimized for $e^{t\rho}=\sqrt{p(1-p)}$. Hence,
\begin{align*}
\E e^{-tX_{h-1}}
\leqslant&{{\bigg(\E e^{-tX_s}\bigg)}^{(\dL'\cdot d')}}^{h-s-1}\cdot {\bigg(\binom{\dR'}{d'}\big(2\sqrt{p(1-p)}\big)^{d'}\bigg)^{\frac{(\dL'\cdot d')^{h-s-1}-1}{\dL'\cdot d'-1}}}\\
\leqslant&
{{\bigg[\bigg(\E e^{-tX_s} \bigg){\bigg(\binom{\dR'}{d'}\big(2\sqrt{p(1-p)}\big)^{d'}\bigg)^{\frac{1}{\dL'\cdot d'-1}}}\bigg]}^{(\dL'\cdot d')}}^{h-s-1}\\
&\cdot {\bigg(\binom{\dR'}{d'}\big(2\sqrt{p(1-p)}\big)^{d'}\bigg)^{-\frac{1}{\dL'\cdot d'-1}}}.
\end{align*}
Let $\alpha \triangleq \min_{t \geqslant 0}\bigg\{\E e^{-tX_s}
\bigg(\binom{\dR'}{d'}\big(2\sqrt{p(1-p)}\big)^{d'}\bigg)^{\frac{1}{\dL'\cdot d'-1}}\bigg\}$. Let $t^* = \argmin_{t \geqslant 0}\E e^{-tX_s}$, then
\begin{align*}
 \E e^{-t^*X_{h-1}}\leqslant \alpha^{(\dL'\cdot d'-1)^{h-s-1}}\cdot{\bigg(\binom{\dR'}{d'}\big(2\sqrt{p(1-p)}\big)^{d'}\bigg)^{-\frac{1}{\dL'\cdot d'-1}}}.
\end{align*}
Using Lemma~\ref{lemma:ADSbound1}, we conclude that
\begin{align*}
\Pi_{\lambda,d,\dL,\dR}(h,\omega^{(\rho)})\leqslant \alpha^{\dL(\dL'\cdot d'-1)^{h-s-1}}\cdot{\bigg(\binom{\dR'}{d'}\big(2\sqrt{p(1-p)}\big)^{d'}\bigg)^{-\frac{\dL}{\dL'\cdot d'-1}}},
\end{align*}
and the lemma follows.
\end{proof}

\end{document}